\newcounter{resultnum}[section]
\newtheorem{conclusion}{Conclusion}[section]
\newcounter{conclusionnum}[section]
\newcounter{conditionnum}[section]
\newcounter{conjecturenum}[section]
\newcounter{examplenum}[section]
\newcounter{exercisenum}[section]
\newtheorem{lemma}{Lemma}[section]
\newcounter{lemmanum}[section]
\newcounter{notationnum}[section]
\newtheorem{theorem}{Theorem}[section]
\newcounter{theoremnum}[section]
\newtheorem{definition}{Definition}[section]
\newcounter{definitionnum}[section]
\newtheorem{corollary}{Corollary}[section]
\newcounter{corollarynum}[section]
\newtheorem{remark}{Remark}[section]
\newcounter{remarknum}[section]
\newtheorem{proposition}{Proposition}[section]
\newcounter{propositionnum}[section]
\newcounter{acknowledgementnum}[section]
\newcounter{algorithmnum}[section]
\newcounter{axiomnum}[section]
\newcounter{casenum}[section]
\newtheorem{claim}{Claim}[section]
\newcounter{claimnum}[section]
\newcounter{summarynum}[section]
\newcounter{problemnum}[section]
\newenvironment{proof}[1][]{\textbf{Proof.} }{}
\begin{document}

\title{Spinor and Twistor Geometry in\\ Einstein Gravity and Finsler Modifications}
\date{June 21, 2014}

\author{\textbf{Sergiu I. Vacaru} \thanks{%
sergiu.vacaru@uaic.ro;\
http://www.uaic.ro/uaic/bin/view/Research/AdvancedTheoretical}}

\affil{\small Theory Division, CERN, CH-1211, Geneva 23, Switzerland\footnote{associated visiting research affiliation};\ and
\newline
\small Rector's Office, Alexandru Ioan Cuza University,  Alexandru
Lapu\c sneanu \newline street, nr. 14,  UAIC -- Corpus R, office 323;  Ia\c
si,\ Romania, 700057 }

\renewcommand\Authands{and }

\maketitle

\begin{abstract}
We present a generalization of the spinor and twistor geometry for  on (pseudo) Riemannian manifolds enabled with nonholonomic distributions or for Finsler--Cartan spaces modelled on tangent Lorentz bundles. Nonholonomic (Finsler) twistors are defined as solutions of generalized twistor equations determined by spin connections and frames adapted to nonlinear connection structures. We show that the constructions for local twistors can be globalized using nonholonomic deformations with  "auxiliary" metric compatible connections completely determined by the metric structure and/or the  Finsler fundamental function. We explain how to perform such  an approach in the Einstein gravity theory formulated in Finsler like variables with conventional nonholonomic 2+2 splitting.

\vskip0.1cm

\textbf{Keywords:} spinors and twistors, Finsler geometry, nonlinear connections, nonholonomic manifolds, Einstein spaces.

\vskip3pt

PACS:\ 04.50.Kd, 02.40.Tt, 04.90.+e, 04.20.Gz

MSC: 53B40, 32L25, 53C28, 83C60
\end{abstract}



\section{Introduction}

Twistor theory began with R. Penrose's two papers in 1967 and 1968 and the
subject has grown in different directions of modern mathematics and
classical and quantum physics (for introductions to twistor theory and
reviews of results, and references, see the monographs \cite%
{penrr,manin,ward,huggett}). From a "modest" geometric point of view,
twistor structures and transforms are naturally related to certain methods
of constructing solutions for self--dual Yang--Mills and Einstein equations
and complex/supersymmetric generalizations of the Minkowski and Einstein
spacetime geometry.

In this article, we give an introduction into the differential geometry of
Finsler spinors and twistors. We shall define twistors for models of metric
compatible Finsler spaces and study possible connections to the general
relativity theory and modifications. Our "pragmatic" goal is to point
researchers that there is an important relation between the twistor
transforms, nonholonomic deformations of fundamental spacetime geometric
objects and a method for generating exact off--diagonal solutions of
gravitational field equations in Einstein gravity and modified/generalized
theories.\footnote{%
The corresponding metrics can not be diagonalized via coordinate transforms.
In modern literature, there are used different words
"anholonomic", "nonholonomic" and/or "non--integrable" which we shall
consider as equivalent ones.} We shall apply a geometric formalism related
to the anholonomic frame deformation method of constructing exact solutions
\cite{veyms} and A--brane, deformation and gauge like
quantization of gravity \cite{vgauge}. In such models, Finsler
like variables can be introduced via respective nonholonomic 2+2 and/or $n+n$
splitting, for distributions with fibered structure, on Einstein and/or
(pseudo) Riemannian manifolds. This allows us to decouple the Einstein
equations with respect to some classes of nonholonomic frames and construct
generic off--diagonal exact solutions depending on all coordinates via
certain classes of generating and integration functions and parameters.

Let us emphasize some substantial differences between the geometry of
(pseudo) Finsler spaces and that of (pseudo) Riemannian manifolds\footnote{%
In our works, the meaning of the word "pseudo" is equivalent to "semi" which
are used for models of curved spaces in standard particle and gravity
physics and/or in mathematical literature when the metric may have a local
pseudo--Euclidean local signature of type $(-,+,+,+)$. Here we note that
such terms, together with a nonstandard for physicists concept of Minkowski
space, have different definitions in some monographs on Finsler geometry
\cite{rund,matsumoto,bejancu1,bejancu2,bcs}, see also recent theories
with (pseudo) Finsler metrics, for instance, \cite%
{vsgg,stavrinos,stavrinosk,mavromatos,visser}.}, see definitions and
details in next section. Different models of Finsler geometry are
characterized by different classes of nonlinear and linear connections and
lifts on tangent bundles of geometric objects,   nonholonomic distributions
and related curvature and torsion tensors. Such a geometry is not completely
determined by a Finsler metric $F(x^{i},y^{a})$, which is a nonlinear
quadratic element with homogeneity conditions of typical fiber coordinates $%
y^{a}$ on a tangent bundle $TM$ to a manifold $M,$ with local coordinates $%
x^{i},$ or a (nonholonomic) manifold $\mathbf{V}$ with non--integrable
fibred structure. The geometric constructions on Finsler spaces have to be
adapted to an another fundamental geometric object, the nonlinear connection
(N--connection), $\mathbf{N.}$ There are necessary additional assumptions on
a chosen Finsler linear connection $\mathbf{D}$ (the third fundamental
geometric object, also adapted to the N--connection) which can be metric
compatible, or non--compatible.\footnote{%
We use boldface symbols for spaces enabled with a N--connection structure
and geometric objects adapted to such a structure \cite{vrflg}. In
certain canonical models of Finsler geometry, the values $\ _{F}\mathbf{N},
\ _{F}\mathbf{g},\ _{F}\mathbf{D}$ are derived for a fundamental
(generating) Finsler function $F$ following certain geometric/variational
principles, up to certain classes of generalized frame transforms. For
simplicity, in this work we shall omit left up or low labels if that will
not result in ambiguities. We shall work on necessary type real and/or
complex manifolds of finite dimensions. Tensor and spinor indices will be
considered, in general, as abstract ones, and with the Einstein's summation
rule for frame/coordinate reprezentations.} In certain cases, it is possible
to construct an associated metric structure $\mathbf{g}$ on total space $TM$%
, to introduce various types of curvature, torsion and nonmetricity tensors
etc.

In brief, a model of Finsler geometry is completely stated by a triple
of fundamental "boldface" geometric objects $\left( F:\mathbf{N,g,D}\right)$
generated by $F $, which is very different from the case of (pseudo)
Riemannian geometry which is determined by data $(\mathbf{g,}%
\bigtriangledown ),$ for a metric tensor $\mathbf{g}$ being compatible with
the Levi--Civita connection $\bigtriangledown $ and with zero torsion (such
a linear connection is completely defined by the metric structure).

Some classes of Finsler geometries and generalizations may admit spinor
formulations \cite{vspinor,vhep,vstavr}. For instance, in the case of metric
compatible models, such methods were developed for the so--called
Finsler--Cartan and canonical distinguished connections \cite%
{cartf,vsgg,vrflg}) when well defined Finsler--Ricci flow \cite{vfricci},
supersymmetric \cite{vsuperstr} and/or noncommutative generalizations \cite%
{vfncr} can be performed and applied for generating exact solutions for
Finsler brane cosmology \cite{vfbr}  etc. Such constructions are
technically very cumbersome and, in general, not possible for metric
noncompatible Finsler geometries with, for instance, Chern or Berwald
connections, see \cite{vrflg,vcrit} and references therein for reviews
of results and critical remarks on possible applications in modern physics.

A proposal how the concept of twistors and nonholonomic twistor equations
could be extended for Finsler spaces and generalizations was discussed in
\cite{vhep}. The approach was based on an idea to use spinor and twistor
geometries for arbitrary finite dimensional spacetimes (see Appendix to
volume 2 in \cite{penrr}) modified to the case of metric compatible Finsler
connections and their higher order generalizations with "nonhomogeneous"
generating functions, for instance, generalized Lagrange geometries etc.
Those nonholonomic (generalized Finsler) twistor constructions had some
roots to former our works on "twistor wave function of Universe" \cite{vcluj}
 and nearly
geodesic maps of curved spaces and twistors \cite{vostaf}.\footnote{\textbf{%
Historical Remarks:}\ The author of this paper began his research on
"twistor gauge models of gravity" when he was a post--graduate at physics
department of "M. A. Lomonosov" State University at Moscow, during
1984-1987. At that time, there were translated in Russian some fundamental
papers on Twistor Theory and published a series of important works by
"soviet" authors, for instance, the Russian variant of \cite{manin}. Various
subjective issues related to the "crash" of former Soviet Union resulted in
defending author's PhD thesis \cite{vphd} in 1994, at University Alexandru
Ioan Cuza, UAIC, at Ia\c si (Yassy), Romania. There were also certain
important scientific arguments to transfer such a research on geometry and
physics from Russia to Romania, where some schools on nonholonomic geometry
and generalized Finsler spaces (for instance, supervised by G. Vr\v anceanu and  A. Bejancu) published a number of works beginning 20ths of previous Century but
worked in isolation during dictatorial "socialist period".
\par
Twistor equations are generically nonintegrable for arbitrary curved
spacetime. Such equations became integrable, for instance, if the Weyl
spinor vanishes, see details in \cite{penrr}. Our main idea is to use for
definition of twistors another class of equations with an "auxiliary" metric
compatible connection, completely determined by the same metric but with
nontrivial torsion. For certain conditions on noholonomic structure, various
"non--integrable" twistor configurations can be globalized in a
self--consistent form. Imposing additional anholonomi conditions, we can
extract, for instance, certain real vacuum Einstein manifolds. Such methods
were formalized by G. Vr\v anceanu in his geometry of "nonholonomic
manifods" \cite{vranceanu1,vranceanu2,vranceanu3}, see further developments
in \cite{bejancu3} and, with applications in modern classical and quantum
gravity and generalized Finsler geometries and (non) commutative geomeric
flows, \cite{vsgg,vrflg,vfncr}.
\par
In some sense, Finsler spaces are modelled by geometries with nonholonomic
distributions on tangent bundles or on manifolds with fibered structure. The
most closed to standard physics directions
were developed following some fundamental geometric ideas and results due to
E. Cartan \cite{cartf}, M. Matsumoto \cite{matsumoto} and others (on almost K%
\"{a}hler Finsler structures, Einstein equations for the Cartan--Finsler
connection, J. Kern's geometrization of mechanics via Finsler methods \cite%
{kern} etc). Using nonholonomic distributions, it was possible to formulate
Clifford and spinor analogs of metric compatible Finsler geometries and
generalizations and define Dirac--Finsler operators \cite{vspinor,vhep}, see
also reviews of results and complete lists of references in \cite%
{vstavr,vsgg,vfncr}. Approaches based on Berwald and Chern connections,
semisprays and sectional curvature \cite{bcs,akbar} have been developed in
modern literature but because of nonmetricity and spectific "nonminimal"
relations between the Finsler metric and curvature seem to be less related
to standard theories in physics \cite{vcrit,vrflg}.}

Recently, it was suggested \cite{dunajski} to use the nondegenerate Hessian $%
g_{ab}:=\frac{1}{2}\frac{\partial ^{2}F}{\partial y^{a}\partial y^{b}}$ in
order to define a class of twistor structures related to Finslerian
geode\-sics, which are integral curves of certain systems of ordinary
differential equations (ODEs) and some projective classes of isotropic
sprays. Such constructions were performed for Finsler generating functions
with scalar flag curvature, in special, for the Randers metric. It was
possible to formulate a variational principle for twistor curves arising
from such examples of Finsler geometries with scalar flag structure. The
(nonlinear) geodesic/ (semi) spray configurations reflect only partially the
geometric properties, and possible relations to systems of differential
equations (with partial derivatives ones, PDEs, and/or ODEs) of spaces
endowed with fundamental Finsler functions. As we emphasized above, complete
geometric and possible physically viable models of Finsler spacetimes can be
formulated after additional assumptions on linear connection structures and
conditions on their (non) compatibility with the metric and N--connection
structure. Only in such cases, we can conclude if there are, or not, Finsler
analogs of spinors and anisotropic models of bosonic and fermionic fields
and interactions for certain classes of Finsler connections.

This paper is organized as follows: In section \ref{nfgg}, we present a
brief introduction into the geometry of metric compatible nonholonomic
manifolds and bundles enabled with nonlinear connection (N--connection)
structure. It is elaborated an unified N--adapted formalism both
for the Finsler--Cartan spaces and the Einstein gravity theory reformulated
in Finsler like variables. We also provide new results on the geometry of
conformal transforms and N--connection structures. Section \ref{snfs} is
devoted to the  differential geometry of spinors on Finsler--Cartan and Einstein--Finsler spaces. We define nonholonomic (Finsler) twistors in section \ref{snft} considering "auxiliar" metric compatible Finsler like connections. There are studied the conditions when nonholonomic twistors can describe global structures in Einstein gravity and modifications.

\vskip5pt \textbf{Acknowledgements: } The work is partially supported by the Program
IDEI, PN-II-ID-PCE-2011-3-0256 and by an associated visiting research
position at CERN.

\section{Nonholonomic (Finsler) Geometry and Gravity}

\label{nfgg}

Finsler type geometries can be modelled on (pseudo) Riemannian manifolds
and/or tangent bundles enabled with necessary types nonholonomic
distributions and supersymmteric and/or noncommutative generalizations. In
this section, we fix notations and provide necessary results. For details
and proofs, we refer to \cite{vdefq,vrflg,vsgg} where the so--called
geometry of Finsler--Einstein gravity and modifications is formulated in a
language familiar to researchers in mathematical relativity.

\subsection{The geometry of nonholonomic bundles and manifolds}

\subsubsection{Nonlinear and distinguished connections}

Let us consider a $(n+m)$--dimensional real or complex manifold $\mathbf{V}$
of necessary smooth/analytic/holomorphic class, where 1) $n=m=2$ (this will
be used for constructing exact solutions in general relativity applying
Finsler and twistor methods), or 2) $n=m=4$ (for Finsler twistor models on
tangent bundles to Lorentz manifolds).

\begin{definition}
A nonholonomic manifold is a pair $(\mathbf{V},\mathcal{N}) $ defined by a
nonintegrable distribution $\mathcal{N}$ on $\mathbf{V.}$
\end{definition}

The geometry of nonholonomic real manifolds is studied in Refs. \cite%
{vranceanu1,vranceanu2,vranceanu3,vrflg,vsgg} for various finite dimensions $%
n\geq 2$ and $m>1.$ For simplicity, we shall consider a subclass of
nonholonomic distributions $\mathcal{N}$ stating fibered structures $\pi :$ $%
\mathbf{V}\rightarrow V$ with constant rank $\pi ,$ where $V$ is a two
dimensional, 2-d, or 4-d, (for instance, pseudo--Riemanian manifold, or any
its complexified version). In general, we can consider any such map with
differential map $\pi ^{\intercal }:$ $T\mathbf{V}\rightarrow TV$ when the
kernel of $\pi ^{\intercal },$ which is just the vertical subspace $v%
\mathbf{V}$ with a related inclusion mapping $i:\ v\mathbf{V\rightarrow }T%
\mathbf{V,}$ defines a corresponding vertical subspace as a nonholonomic
distribution.

\begin{definition}
\textbf{--Theorem:}\label{deftnc} A nonlinear connection (N--connection) $%
\mathbf{N}$ on $\mathbf{V}$ can be defined in two equivalent forms:\newline
a) by the splitting on the left with an exact sequence $0\rightarrow v%
\mathbf{V}\overset{i}{\rightarrow }T\mathbf{V\rightarrow } $ $T\mathbf{V/}v%
\mathbf{V}\rightarrow 0,$ i.e. by a morphism of submanifolds $\mathbf{N}:\ T%
\mathbf{V\rightarrow }v\mathbf{V}$ such that $\mathbf{N\circ i}$ is the
unity in $v\mathbf{V.}$\newline
b) Globalizing the local distributions associated to such nonholonomic
splitting $\mathcal{N}$ we prove that a N--connection defines a Whitney sum
of conventional horizontal (h) subspace, $h\mathbf{V,}$ and vertical (v)
subspace, $h\mathbf{V,}$%
\begin{equation}
T\mathbf{V=}h\mathbf{V\oplus }v\mathbf{V.}  \label{whitney}
\end{equation}
\end{definition}

We shall use "boldface" symbols in order to emphasize that a geometric
object is defined on a nonholonomic manifold $\mathbf{V}$ enabled with
N--connection structure and call such an object to be, for instance, a
distinguished tensor (in brief, d--tensor, d-metric, d--spinor,
d--connection), or a d--vector $\mathbf{X}=(h\mathbf{X},v\mathbf{X})\in T%
\mathbf{V.}$ A N--connection is characterized by its curvature, i.e. the
Neijenhuis tensor,
\begin{equation}
\Omega (\mathbf{X,Y}):=\left[ v\mathbf{X},v\mathbf{Y}\right] +v\left[
\mathbf{X},\mathbf{Y}\right] -v\left[ v\mathbf{X},\mathbf{Y}\right] -v\left[
\mathbf{X},v\mathbf{Y}\right] ,  \label{ncurvat}
\end{equation}%
for any d--vectors $\mathbf{X,Y}$ and commutator $\left[ \cdot ,\mathbf{%
\cdot }\right] .$\footnote{%
N--connections were used in coefficient form in E. Cartan's first monograph
on Finsler geometry \cite{cartf}. The first global definition is due to C.
Ehresmann \cite{ehresmann}, it was studied in Finsler geometry and
generalizations by A. Kawaguchi \cite{kawaguchi} and Greek and Romanian geometers and physicists \cite%
{vsgg}. The abroach was developed and applied to various nonholonomic and/or
Finsler generalizations on superspaces, in noncommutative geometry and
constructing exact solutions in gravity \cite%
{vspinor,vsuperstr,vhep,vsgg,vrflg}, see also references therein.}

\begin{definition}
A distinguished connection (d--connection) $\mathbf{D}$ is a linear
connection conserving under parallelism the Whitney sum (\ref{whitney}),
i.e. the N--connection splitting into $h$-- and $v$--subspaces.
\end{definition}

We can perform a decomposition of $\mathbf{D}$ into h- and v--covariant
derivatives $\mathbf{D=}\left( h\mathbf{D},v\mathbf{D}\right) ,$ when $%
\mathbf{D}_{\mathbf{X}}:=\mathbf{X}\rfloor \mathbf{D=}h\mathbf{X}\rfloor
\mathbf{D}+v\mathbf{X}\rfloor \mathbf{D=}h\mathbf{D}_{\mathbf{X}}+v\mathbf{D}%
_{\mathbf{X}}$, where "$\rfloor "$ is the interior product.

\begin{definition}
For a d--connection $\mathbf{D,}$ we can define:\newline
a) the torsion d--tensor%
\begin{equation}
\mathcal{T}(\mathbf{X},\mathbf{Y}):=\mathbf{D}_{\mathbf{X}}\mathbf{Y-D}_{%
\mathbf{Y}}\mathbf{X-}\left[ \mathbf{X},\mathbf{Y}\right] ;  \label{tors}
\end{equation}%
b) the curvature d--tensor
\begin{equation}
\mathcal{R}(\mathbf{X},\mathbf{Y}):=\mathbf{D}_{\mathbf{X}}\mathbf{D}_{%
\mathbf{Y}}\mathbf{-D}_{\mathbf{Y}}\mathbf{D}_{\mathbf{X}}\mathbf{-D}_{\left[
\mathbf{X},\mathbf{Y}\right] }.  \label{curv}
\end{equation}
\end{definition}

Introducing $h$-$v$--decompositions $\mathbf{D}=\left( h\mathbf{D},v\mathbf{D%
}\right) $ and $\mathbf{X}=(h\mathbf{X},v\mathbf{X})$ in above formulas, we
compute respective $h$-$v$--components (i.e. d--tensor N--adapted
components) of the torsion and curvature of a d--connection. For instance,
there are five nontrivial components of torsion,{\small
\begin{equation*}
\mathcal{T}(\mathbf{X},\mathbf{Y})=\{h\mathcal{T}(h\mathbf{X},h\mathbf{Y}),h%
\mathcal{T}(h\mathbf{X},v\mathbf{Y}),v\mathcal{T}(h\mathbf{X},h\mathbf{Y}),v%
\mathcal{T}(a\mathbf{X},h\mathbf{Y}),v\mathcal{T}(v\mathbf{X},v\mathbf{Y})\},
\label{dtors1}
\end{equation*}%
} for arbitrary d--vectors $\mathbf{X}$ and $\mathbf{Y.}$

\subsubsection{Metric compatible nonholonomic manifolds}

A nonholonomic manifold $\mathbf{V}$ is enabled with a metric structure
defined by symmetric nondegenerate second rank tensor $\mathbf{g}.$ Such a
metric can be described equivalently by a d--metric $\mathbf{g}=(h\mathbf{g}%
,v\mathbf{g})$ with corresponding $h$-- and $v$--metrics, in N--adapted form.

\begin{definition}
\textbf{-Theorem:}\ A d--connection $\mathbf{D}$ is metric compatible with a
d--metric $\mathbf{g}$ if and only if $\mathbf{Dg}=0$ imposing conditions of
compatibility in $h$-$v$--form for decompositions $\mathbf{D=}\left( h%
\mathbf{D},v\mathbf{D}\right) $ and $\mathbf{g}=(h\mathbf{g},v\mathbf{g}).$
\end{definition}

In this paper, we shall use only metric compatible connections.

\begin{theorem}
Any metric structure $\mathbf{g}$ defines a unique Levi--Civita connection $%
\nabla $ which is metric compatible, $\nabla \mathbf{g}=0,$ and with zero
torsion,\newline
 $\ ^{\nabla }\mathcal{T}(\mathbf{X},\mathbf{Y}):=\nabla _{\mathbf{X}%
}\mathbf{Y-}\nabla _{\mathbf{Y}}\mathbf{X-}\left[ \mathbf{X},\mathbf{Y}%
\right] =0.$
\end{theorem}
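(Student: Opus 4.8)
The plan is to establish existence and uniqueness simultaneously through the Koszul formula, which is the standard route to the fundamental theorem of (pseudo) Riemannian geometry. Nothing in the statement requires $\nabla $ to be adapted to the N--connection splitting (\ref{whitney}), so I may treat $\nabla $ as an ordinary linear connection on the tangent structure of $\mathbf{V}$ and defer any d--connection analysis entirely; the only inputs are that $\mathbf{g}$ is symmetric and nondegenerate.

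For uniqueness I would assume a connection $\nabla $ satisfying both $\nabla \mathbf{g}=0$ and $\ ^{\nabla }\mathcal{T}=0$ exists, and write the metric--compatibility identity $\mathbf{X}\,\mathbf{g}(\mathbf{Y},\mathbf{Z})=\mathbf{g}(\nabla _{\mathbf{X}}\mathbf{Y},\mathbf{Z})+\mathbf{g}(\mathbf{Y},\nabla _{\mathbf{X}}\mathbf{Z})$ in all three cyclic permutations of $(\mathbf{X},\mathbf{Y},\mathbf{Z})$. Adding the first two, subtracting the third, and using the zero--torsion relation $\nabla _{\mathbf{X}}\mathbf{Y}-\nabla _{\mathbf{Y}}\mathbf{X}=[\mathbf{X},\mathbf{Y}]$ from (\ref{tors}) to cancel the symmetric pairs, one obtains
\[
2\mathbf{g}(\nabla _{\mathbf{X}}\mathbf{Y},\mathbf{Z})=\mathbf{X}\,\mathbf{g}(\mathbf{Y},\mathbf{Z})+\mathbf{Y}\,\mathbf{g}(\mathbf{X},\mathbf{Z})-\mathbf{Z}\,\mathbf{g}(\mathbf{X},\mathbf{Y})+\mathbf{g}([\mathbf{X},\mathbf{Y}],\mathbf{Z})-\mathbf{g}([\mathbf{X},\mathbf{Z}],\mathbf{Y})-\mathbf{g}([\mathbf{Y},\mathbf{Z}],\mathbf{X}).
\]
Because $\mathbf{g}$ is nondegenerate, the right--hand side determines $\nabla _{\mathbf{X}}\mathbf{Y}$ as the unique d--vector whose pairing with every $\mathbf{Z}$ equals that expression; hence at most one such connection can exist.

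For existence I would reverse the computation: \emph{define} $\nabla _{\mathbf{X}}\mathbf{Y}$ to be the d--vector field produced by the displayed Koszul identity via nondegeneracy of $\mathbf{g}$, and then verify the required axioms. The verification is routine: additivity and $\R $--linearity in both slots are immediate, while $C^{\infty }$--linearity in $\mathbf{X}$ and the Leibniz rule $\nabla _{\mathbf{X}}(f\mathbf{Y})=(\mathbf{X}f)\mathbf{Y}+f\nabla _{\mathbf{X}}\mathbf{Y}$ follow by inserting a smooth function $f$ and tracking the derivative terms arising through the Lie brackets. Symmetrizing and antisymmetrizing the defining identity recovers, respectively, $\nabla \mathbf{g}=0$ and $\ ^{\nabla }\mathcal{T}(\mathbf{X},\mathbf{Y})=0$ as defined in (\ref{tors}).

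The computations are mechanical, so the genuine subtlety is conceptual rather than technical: one must not conflate $\nabla $ with a d--connection. The Koszul construction uses the full tangent bundle of $\mathbf{V}$ and does not in general preserve the Whitney sum (\ref{whitney}), so the resulting $\nabla $ need not conserve the $h$-- and $v$--distributions under parallelism. That is precisely why the metric compatible d--connections of the previous subsection are introduced as distinct objects, and why the nonholonomic deformation $\nabla \to \mathbf{D}$ carries a nontrivial distorting torsion completely determined by $\mathbf{g}$ and $\mathbf{N}$ --- the mechanism exploited throughout the rest of the paper.
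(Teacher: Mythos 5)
Your proof is correct: it is the standard Koszul-formula argument for the fundamental theorem of (pseudo) Riemannian geometry, with uniqueness forced by nondegeneracy of $\mathbf{g}$ and existence obtained by taking the Koszul identity as the definition of $\nabla _{\mathbf{X}}\mathbf{Y}$ and verifying the connection axioms, compatibility, and zero torsion. The paper itself gives no proof of this theorem (it is invoked as a classical result), so there is no competing argument to compare against; moreover, your closing caution that $\nabla $ is \emph{not} a d--connection because it need not preserve the Whitney sum (\ref{whitney}) under parallelism is precisely the remark the paper makes immediately after the theorem, so your treatment is fully consistent with the paper's setting.
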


We note that $\nabla $ is not a d--connection because it does not preserve
under parallelism the N--connection splitting. For a $\mathbf{N}$ (\ref%
{whitney}) with nonzero N--curvature $\Omega $ (\ref{ncurvat}), there is a
"preferred" d--connection which can be considered as the analog of the
Levi--Civita connection for nonholonomic manifolds:

\begin{theorem}
Any metric structure $\mathbf{g}$ defines a unique canonical d--connecti\-on
$\mathbf{D}$ which is metric compatible, $\mathbf{Dg}=0,$ and with zero
"pure" horizontal and "vertical" d--torsions, i.e., respectively, $h\mathcal{%
T}(h\mathbf{X},h\mathbf{Y})=0$ and $v\mathcal{T}(v\mathbf{X},v\mathbf{Y})$,
for $\ \mathcal{T}(\mathbf{X},\mathbf{Y}):=\mathbf{D}_{\mathbf{X}}\mathbf{Y-D%
}_{\mathbf{Y}}\mathbf{X-}\left[ \mathbf{X},\mathbf{Y}\right] .$
\end{theorem}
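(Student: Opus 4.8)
The plan is to prove existence and uniqueness by passing to N--adapted frames, which reduces the statement to a block--decomposed version of the fundamental theorem of (pseudo--)Riemannian geometry. First I would fix, over a coordinate chart, the N--elongated frame $\mathbf{e}_{\alpha }=(\mathbf{e}_{i}=\partial _{i}-N_{i}^{a}\partial _{a},\ e_{a}=\partial _{a})$ together with its dual coframe $\mathbf{e}^{\alpha }=(dx^{i},\ dy^{a}+N_{i}^{a}dx^{i})$ adapted to the splitting (\ref{whitney}). In this frame any d--connection $\mathbf{D}$ is described by four coefficient families $(L^{i}_{jk},L^{a}_{bk},C^{i}_{jc},C^{a}_{bc})$, the $L$--coefficients giving the $h$--covariant derivative $h\mathbf{D}$ and the $C$--coefficients the $v$--covariant derivative $v\mathbf{D}$. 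The key observation is that the \emph{absence} of index--mixing coefficients (such as $L^{i}_{ak}$ or $C^{a}_{jc}$) is exactly the requirement that $\mathbf{D}$ preserve the $h$--$v$--splitting; thus the d--connection property is automatically encoded in this parametrization, and the remaining task is to solve for these four families.

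Next I would translate the two imposed conditions into equations on the coefficients. Since $\mathbf{g}=(h\mathbf{g},v\mathbf{g})=(g_{ij},h_{ab})$ is block diagonal in the N--adapted frame, metric compatibility $\mathbf{Dg}=0$ decouples into
\begin{equation*}
\mathbf{e}_{k}g_{ij}-L^{l}_{ik}g_{lj}-L^{l}_{jk}g_{il}=0,\qquad e_{c}g_{ij}-C^{l}_{ic}g_{lj}-C^{l}_{jc}g_{il}=0,
\end{equation*}
together with the analogous pair for $h_{ab}$. The two "pure" torsion conditions, read off from the N--adapted components of $\mathcal{T}$ in (\ref{tors}), become the symmetry requirements $L^{i}_{jk}=L^{i}_{kj}$ (vanishing of $h\mathcal{T}(h\mathbf{X},h\mathbf{Y})$) and $C^{a}_{bc}=C^{a}_{cb}$ (vanishing of $v\mathcal{T}(v\mathbf{X},v\mathbf{Y})$). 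With these in hand the argument is a Koszul--type manipulation performed separately in each block: cycling the indices $i,j,k$ in the $h$--compatibility equation, adding two copies and subtracting the third, and using the symmetry of $L^{i}_{jk}$, I solve for
\begin{equation*}
L^{i}_{jk}=\tfrac{1}{2}g^{ir}\left( \mathbf{e}_{j}g_{kr}+\mathbf{e}_{k}g_{jr}-\mathbf{e}_{r}g_{jk}\right) ,\qquad C^{a}_{bc}=\tfrac{1}{2}h^{ad}\left( e_{b}h_{cd}+e_{c}h_{bd}-e_{d}h_{bc}\right) ,
\end{equation*}
while the remaining mixed coefficients $C^{i}_{jc}$ and $L^{a}_{bk}$ are fixed directly by the other two compatibility equations and the nondegeneracy of $\mathbf{g}$. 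Existence then follows because these explicit formulas define a genuine d--connection satisfying all three requirements, and uniqueness follows because the linear system for the coefficients admits a single solution whenever $h\mathbf{g}$ and $v\mathbf{g}$ are invertible.

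The step I expect to be the main obstacle is not the algebra but the conceptual bookkeeping of which torsion components are being constrained. Unlike the Levi--Civita case, one cannot demand that \emph{all} torsion components vanish: the mixed components $h\mathcal{T}(h\mathbf{X},v\mathbf{Y})$, $v\mathcal{T}(h\mathbf{X},h\mathbf{Y})$ and $v\mathcal{T}(h\mathbf{X},v\mathbf{Y})$ are generically forced to be nonzero, being governed by the N--connection curvature $\Omega$ of (\ref{ncurvat}); this is precisely the earlier remark that $\nabla$ fails to be a d--connection. Hence the delicate point is to verify that imposing \emph{only} the two pure symmetry conditions alongside metric compatibility yields a system that is neither over-- nor under--determined, so that all four coefficient families are pinned down uniquely. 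Checking this count carefully, and confirming that the solution above is frame--independent (transforms as a d--connection under N--adapted changes of frame), completes the proof.
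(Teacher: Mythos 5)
Your reduction to N--adapted frames and the Koszul--type argument for the two ``pure'' families is correct as far as it goes: imposing $\mathbf{D}_{k}g_{ij}=0$ together with $L_{\ jk}^{i}=L_{\ kj}^{i}$, and $\mathbf{D}_{c}h_{ab}=0$ together with $C_{\ bc}^{a}=C_{\ cb}^{a}$, does pin down $L_{\ jk}^{i}$ and $C_{\ bc}^{a}$ uniquely, reproducing the first and last formulas of (\ref{candcon}). The genuine gap is your final claim that the mixed families $C_{\ jc}^{i}$ and $L_{\ bk}^{a}$ are ``fixed directly by the other two compatibility equations and the nondegeneracy of $\mathbf{g}$.'' They are not. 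Lowering an index, the condition $e_{c}g_{ij}=C_{\ ic}^{l}g_{lj}+C_{\ jc}^{l}g_{il}$ determines only the part of $C_{ijc}:=C_{\ ic}^{l}g_{lj}$ symmetric in $(i,j)$; its antisymmetric part is completely unconstrained, because the only torsion components involving $C_{\ jc}^{i}$ are the \emph{mixed} ones, $T_{\ ja}^{i}=C_{\ ja}^{i}$ in (\ref{dtors}), which the theorem does not (and cannot, in general) require to vanish. The same happens for $L_{\ bk}^{a}$: the equation $\mathbf{e}_{k}h_{ab}=L_{\ ak}^{c}h_{cb}+L_{\ bk}^{c}h_{ac}$ fixes only the symmetric part of $L_{\ ak}^{c}h_{cb}$. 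Concretely, if $\mathbf{D}$ solves your system and $A_{\ ic}^{l}$ is any d--tensor with $A_{\ ic}^{l}g_{lj}$ antisymmetric in $(i,j)$, then replacing $C_{\ ic}^{l}$ by $C_{\ ic}^{l}+A_{\ ic}^{l}$ gives another metric compatible d--connection with the same vanishing pure torsions. So the linear system you set up is under--determined, and the point you yourself flagged as ``delicate'' is exactly where the argument fails: the count does not close.

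What singles out the canonical d--connection is an additional normalization, and this is precisely what the paper's explicit coefficients (\ref{candcon}) encode: the antisymmetric ambiguities are set to zero, giving $C_{\ jc}^{i}=\frac{1}{2}g^{ik}e_{c}g_{jk}$ and $L_{\ bk}^{a}=e_{b}(N_{k}^{a})+\frac{1}{2}h^{ac}\left( \mathbf{e}_{k}h_{bc}-h_{dc}\,e_{b}N_{k}^{d}-h_{db}\,e_{c}N_{k}^{d}\right)$. Note the inhomogeneous term $e_{b}(N_{k}^{a})$: no block--by--block Koszul manipulation of the compatibility equations alone can produce it, which is another symptom that your step cannot work as stated. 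The paper itself does not argue via a Koszul system; it proceeds constructively, exhibiting (\ref{candcon}) and verifying through (\ref{dtors}) that $T_{\ jk}^{i}=0$ and $T_{\ bc}^{a}=0$, with uniqueness understood relative to this canonical prescription --- equivalently, relative to the requirement of Corollary \ref{corol1} that the distortion $\mathbf{Q}$ in $\mathbf{D}=\nabla +\mathbf{Q}$, and hence the whole torsion $\mathcal{T}$, be determined algebraically by the data $(\mathbf{g},\mathbf{N})$ alone. To repair your proof you must either add this normalization of the mixed coefficients as an explicit part of the characterization, or derive it from a further stated condition; metric compatibility plus the two pure--torsion conditions by themselves do not yield uniqueness.
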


For a fixed $\mathbf{N,}$ both linear connections $\nabla $ and $\mathbf{D}$
are completely defined by the same metric $\mathbf{g}$.\footnote{%
In a series of our works, we wrote $\widehat{\mathbf{D}}$ for the canonical
d--connection and used "hats" for the coefficients and values computed for
this linear connection. In this article we shall use only the symbol $%
\mathbf{D;}$ in next sections, $\widehat{\mathbf{D}}$ will be used for a
conformal transforation of $\mathbf{D.}$} There is a substantial difference
between the canonical d--connection $\mathbf{D}$ and the connections used in
Riemann--Cartan geometry (see, for instance, \cite{penrr} and detailed
discussions with respect Einstein and Finsler geometries and metric--affine
generalizations in \cite{vdefq,vrflg,vsgg}). The torsion $\mathcal{T}$ is
completely defined by data $(\mathbf{g,N}),$ i.e. by the metric structure if
the value $\mathbf{N}$ is prescribed, but in the Einstein--Cartan gravity we
need additional algebraic equations for torsion.

\begin{corollary}
\label{corol1} There is a canonical distortion relation
\begin{equation}
\mathbf{D}=\nabla +\mathbf{Q},  \label{distrel}
\end{equation}%
where both linear connections $\ \mathbf{D}$ and $\nabla $ and the
distortion tensor $\mathbf{Q}$ are completely defined by the metric tensor $%
\mathbf{g}$ \ for a prescribed N--connection splitting.
\end{corollary}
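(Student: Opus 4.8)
The plan is to obtain $\mathbf{Q}$ as the difference of the two connections supplied by the two preceding theorems and then to verify that this difference is genuinely a tensor. I would begin by \emph{defining} $\mathbf{Q}(\mathbf{X},\mathbf{Y}):=\mathbf{D}_{\mathbf{X}}\mathbf{Y}-\nabla _{\mathbf{X}}\mathbf{Y}$ for arbitrary d--vectors $\mathbf{X},\mathbf{Y}$, so that the distortion relation (\ref{distrel}) holds by construction. The statement then reduces to two independent claims: that $\mathbf{Q}$ is a well--defined tensor field of type $(1,2)$, and that $\mathbf{D}$, $\nabla$ and hence $\mathbf{Q}$ are determined by the data $(\mathbf{g},\mathbf{N})$ alone.

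To establish the tensorial character I would check $C^{\infty }(\mathbf{V})$--bilinearity directly. In the first slot both connections are $C^{\infty }$--linear, so for any smooth function $f$ one has $\mathbf{Q}(f\mathbf{X},\mathbf{Y})=f\,\mathbf{Q}(\mathbf{X},\mathbf{Y})$. In the second slot the two Leibniz rules $\mathbf{D}_{\mathbf{X}}(f\mathbf{Y})=(\mathbf{X}f)\,\mathbf{Y}+f\,\mathbf{D}_{\mathbf{X}}\mathbf{Y}$ and $\nabla _{\mathbf{X}}(f\mathbf{Y})=(\mathbf{X}f)\,\mathbf{Y}+f\,\nabla _{\mathbf{X}}\mathbf{Y}$ carry the same inhomogeneous term $(\mathbf{X}f)\,\mathbf{Y}$, which cancels upon subtraction and yields $\mathbf{Q}(\mathbf{X},f\mathbf{Y})=f\,\mathbf{Q}(\mathbf{X},\mathbf{Y})$. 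Thus the value of $\mathbf{Q}$ at a point depends only on the pointwise values of its arguments, i.e. $\mathbf{Q}$ is a tensor; moreover, since the Whitney sum (\ref{whitney}) induces an $h$--$v$ decomposition of every tensor space on $\mathbf{V}$, this $\mathbf{Q}$ automatically admits N--adapted components.

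For the second claim I would simply invoke the two theorems above. By the first theorem, $\nabla$ is the \emph{unique} metric compatible, torsion--free connection attached to $\mathbf{g}$; by the second theorem, $\mathbf{D}$ is the \emph{unique} canonical d--connection attached to $\mathbf{g}$ for the prescribed N--connection splitting $\mathbf{N}$. Both are therefore functionals of $(\mathbf{g},\mathbf{N})$, and consequently so is their difference $\mathbf{Q}$. This yields the corollary.

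I do not expect a genuine obstacle here, since the result is a direct consequence of the uniqueness statements already proved. The only points demanding a little care are the cancellation of the non--tensorial inhomogeneous terms in the second slot (handled above) and, should one want explicit formulas, the computation of the coefficients of $\mathbf{Q}$ as the difference of the connection coefficients of $\mathbf{D}$ and $\nabla$ in a fixed N--adapted frame; that last step is a routine calculation which I would omit.
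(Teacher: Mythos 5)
Your proof is correct and takes essentially the same route the paper intends: the corollary appears there without an explicit proof, as an immediate consequence of the two preceding uniqueness theorems for $\nabla$ and for the canonical d--connection $\mathbf{D}$, which is precisely what you invoke. Your explicit check that $\mathbf{Q}=\mathbf{D}-\nabla$ is $C^{\infty}$--bilinear (hence a genuine tensor, with the Leibniz terms cancelling in the second slot) simply fills in the routine step the paper leaves implicit.
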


Using the torsion tensor $\mathcal{R}$ (\ref{curv}) of $\mathbf{D},$ we can
introduce in standard form the Ricci d--tensor $\ \mathbf{Ric}$, which is
nonsymmetric because of nontrivial nonholonomic/ torsion structure, the
curvature scalar $\ \ _{s}R$ and the Einstein d--tensor $\mathbf{E}.$ The
(canonical nonholonomic) Einstein equations for $\ \mathbf{D}$ are written
geometrically
\begin{equation}
\mathbf{E}=\mathbf{Ric}-\frac{1}{2}\mathbf{g}\ _{s}R=\ \Upsilon ,
\label{deinsteq}
\end{equation}%
where the source $\Upsilon $ can be computed as in general relativity (GR)
with spacetime metric but $\nabla \rightarrow \mathbf{D}.$ A N--adapted
variational calculus with possible matter fields (fluids, bosons, fermions
etc, all with respect to nonholonomic frames) can be formulated but this
will result in nonsymmetric d--tensors $\Upsilon $, which is not surprising
because of nonholonmic character of such constructions with induced torsion.
On (pseudo) Riemannian manifolds with singnature $(-+++)$ and a prescribed
nonintegrable decomposition $2+2$, we can perform local constructions with $%
\ \Upsilon \rightarrow \{T_{\alpha \beta }\},$ where $T_{\alpha \beta }$ is
the energy--momentum tensor in GR. This allows us to provide a physical
interpretation to interactions constants. We need additional assumptions on
new interaction constants if the equations are considered on a tangent
bundle with total dimension 8; from a formal geometric point of view, there
are certain canonical lifts of geometric objects on Lorentz manifolds to
their tangent bundles.

The canonical distortion tensor $\mathbf{Q}[\mathbf{g,N}]$ from (\ref%
{distrel}) is an algebraic combination of the nonholonomically induced
torsion $\mathcal{T}$ $[\mathbf{g,N}]$ all completely defined by values $%
\mathbf{g}$ and $\mathbf{N.}$ It is possible to choose certain integrable
configurations $^{int}\mathbf{N}$ (this is equivalent to transforms of
geometric constructions with respect to certain classes of locally
integrable frames of reference) when
\begin{equation}
\Omega =0,\ \mathcal{T}=0,\ \mathbf{Q}=0  \label{lccond}
\end{equation}

\begin{theorem}
\label{thneq}For integrable N--connection structures when $\mathbf{D}_{\mid
\mathcal{T}=0}=\nabla ,$ the canonical nonholonomic Einstein equations (\ref%
{deinsteq}) for (pseudo) Riemannian metrics of dimension $2+2$ are
equivalent to the Einstein equations in GR.
\end{theorem}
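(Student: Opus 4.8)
The plan is to propagate the connection identity $\mathbf{D}=\nabla$ through the field equations (\ref{deinsteq}) and to identify both sides with their general--relativistic counterparts.

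First I would justify the defining clause of the hypothesis. An integrable N--connection is characterized by $\Omega=0$; combined with the canonical d--connection Theorem (which already forces the pure horizontal and pure vertical torsions $h\mathcal{T}(h\mathbf{X},h\mathbf{Y})$ and $v\mathcal{T}(v\mathbf{X},v\mathbf{Y})$ to vanish for any $\mathbf{g}$), the vanishing of $\Omega$ annihilates the remaining mixed torsion components, so the full torsion d--tensor obeys $\mathcal{T}=0$. By Corollary \ref{corol1} the distortion $\mathbf{Q}[\mathbf{g,N}]$, being an algebraic combination of $\mathcal{T}$, then vanishes, and (\ref{distrel}) collapses to $\mathbf{D}=\nabla$. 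This recovers the condition $\mathbf{D}_{\mid\mathcal{T}=0}=\nabla$ stated in the theorem as a consequence of integrability rather than an ad hoc assumption, consistently with the list (\ref{lccond}).

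Next I would transfer this identity to the curvature level. Since $\mathbf{D}=\nabla$, the curvature d--tensor $\mathcal{R}$ (\ref{curv}) coincides with the Riemann tensor of the Levi--Civita connection. Hence the Ricci d--tensor $\mathbf{Ric}$ equals the ordinary Ricci tensor $R_{\alpha\beta}$ and --- the torsion obstruction to symmetry having disappeared --- it becomes symmetric; the scalar $\ _{s}R$ reduces to the Riemannian scalar curvature, and the Einstein d--tensor reduces to $\mathbf{E}=R_{\alpha\beta}-\frac{1}{2}g_{\alpha\beta}\ _{s}R$, the standard Einstein tensor. On the source side, the prescription $\nabla\rightarrow\mathbf{D}$ used to build $\Upsilon$ becomes trivial; in dimension $2+2$ with signature $(-,+,+,+)$ the identification $\Upsilon\rightarrow\{T_{\alpha\beta}\}$ yields the symmetric energy--momentum tensor of GR. Equation (\ref{deinsteq}) therefore reads $R_{\alpha\beta}-\frac{1}{2}g_{\alpha\beta}\ _{s}R=T_{\alpha\beta}$, the Einstein equations of general relativity, completing the equivalence.

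I expect the main obstacle to lie in the first step: showing that integrability ($\Omega=0$) kills all five components of the canonical torsion, not merely the two pure ones that vanish identically. This needs the explicit N--adapted coefficient formulas expressing the mixed torsion components in terms of $\Omega$ and the d--metric, which I would import from \cite{vrflg,vsgg}; once each mixed component is exhibited as proportional to $\Omega$, its vanishing is immediate. A secondary consistency check is that the restoration of symmetry is simultaneous on both sides --- $\mathbf{Ric}$ and $\Upsilon$ become symmetric exactly when $\mathcal{T}=0$ --- so that the reduced equation is well posed as the symmetric Einstein system, and that the $2+2$ (four--dimensional) restriction is precisely what licenses the physical reading $\Upsilon\rightarrow\{T_{\alpha\beta}\}$, higher total dimensions requiring the extra interaction constants noted above.
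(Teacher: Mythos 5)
Your second and third steps (identifying curvatures and sources once $\mathbf{D}=\nabla$) are sound and agree with the paper's own terse proof, which simply invokes Corollary \ref{corol1}, the conditions (\ref{lccond}), and the earlier discussion of sources. The genuine problem is your first step, which you yourself flag as the main obstacle: the claim that $\Omega =0$ alone ``annihilates the remaining mixed torsion components'' is false, and no import from the cited references will exhibit those components as proportional to $\Omega$. The paper's own coefficient formulas (\ref{dtors}) show that of the five components of $\mathcal{T}$ only $T_{\ ji}^{a}=\Omega _{\ ji}^{a}$ is controlled by the N--curvature; the other two potentially nonzero ones are $T_{\ ja}^{i}=C_{\ ja}^{i}=\frac{1}{2}g^{ik}e_{a}g_{jk}$ and $T_{\ bi}^{a}=\partial _{b}N_{i}^{a}-L_{\ bi}^{a}$, which are governed by the fiber dependence of the d--metric and the difference between $L_{\ bi}^{a}$ and $\partial_b N^a_i$, not by $\Omega$. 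A counterexample is immediate: take $N_{i}^{a}=0$ (so $\Omega =0$ and the distribution is integrable) but let $g_{ij}$ depend on the coordinates $y^{a}$; then $C_{\ ja}^{i}\neq 0$, hence $\mathcal{T}\neq 0$, $\mathbf{Q}\neq 0$ and $\mathbf{D}\neq \nabla $. This is precisely why the component-level statement (\ref{lcconstr}) lists three independent constraints $L_{aj}^{c}=e_{a}(N_{j}^{c})$, $C_{jb}^{i}=0$, $\Omega _{\ ji}^{a}=0$, and why (\ref{lccond}) is written as a joint set of conditions defining the integrable configurations rather than as consequences of $\Omega =0$ alone.

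The repair is to read the hypothesis as the paper intends: the theorem assumes not only integrability but also $\mathbf{D}_{\mid \mathcal{T}=0}=\nabla$, i.e. $\mathcal{T}=0$ (equivalently the full set (\ref{lccond}), or (\ref{lcconstr}) in components) is part of the data, not something to be derived from $\Omega=0$. With that reading, the rest of your own chain is exactly the paper's argument: $\mathcal{T}=0$ forces $\mathbf{Q}=0$ by Corollary \ref{corol1}, since $\mathbf{Q}$ is an algebraic combination of $\mathcal{T}$; then (\ref{distrel}) gives $\mathbf{D}=\nabla$, so $\mathbf{Ric}$, $\ _{s}R$ and $\mathbf{E}$ in (\ref{deinsteq}) reduce to their Levi--Civita counterparts (with symmetry restored), and in the $2+2$, signature $(-+++)$ setting the identification $\Upsilon \rightarrow \{T_{\alpha \beta }\}$ turns (\ref{deinsteq}) into the Einstein equations of GR. That completes the proof without the spurious implication.
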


\begin{proof}
It is a straightforward consequence of Corollary \ref{corol1} and (\ref%
{lccond}) and above presented considerations on sources.

$\square $ (end proof).
\end{proof}

\subsubsection{Formulas in N--adapted frames and coordinates}

We shall denote the local coordinates on a nonholonomic manifold $\mathbf{V}$
in the form $u=(x,y),$ or $u^{\alpha }=(x^{i},y^{a}),$ where the h--indices
run values $i,j,...=1,2,...n$ (for nonholonomic deformations in GR, $%
i,j,...=1,2$ or, on tangent to Lorentz bundles, $i,j,...=1,2,3,4$) and the
v--indices take values $a,b,c,...=n+1,n+2,n+m$ (for nonholonomic
deformations in GR, $a,b,...=3,4$ and, on tangent to Lorentz bundles, or $%
a,b,...=5,6,7,8$). For bundle spaces, $y^{a}$ are typical fiber coordinates
and $x^{i}$ are coordinates on base manifolds. We can introduce on $\mathbf{V%
}$ certain local coordinate bases $\partial _{\underline{\alpha }}=\partial
/\partial u^{\underline{\alpha }}=(\partial _{i}=\partial /\partial
x^{i},\partial _{a}=\partial /\partial y^{a})$ and their duals $du^{%
\underline{\beta }}=(dx^{j},dy^{b})$ [we shall emphasize some indices if it
is necessary that they are coordinate ones but omit "underlining" when that
will not result in ambiguities].

Transforms to arbitrary local frames, $e_{\alpha },$ and (co) frames, $%
e^{\beta },$ are given by nondegenerate "vierbein" matrices, $e_{\ \alpha }^{%
\underline{\alpha }}(u),$ and their duals, $e_{\ \underline{\beta }}^{\beta
}(u),$ respectively, $e_{\alpha }=e_{\ \alpha }^{\underline{\alpha }%
}\partial _{\underline{\alpha }}$ and $e^{\beta }=e_{\ \underline{\beta }%
}^{\beta }du^{\underline{\beta }}.$ Such transforms do not preserve a
N--connection splitting and mix $h$-$v$--indices. "Not--underlined" indices $%
\alpha ,\beta ,...;i,j,...;a,b,...$ will be considered, in general, as
abstract labels \cite{penrr}. The indices may be considered as coordinate
ones for decompositions with respect to coordinate bases (in our works, we
do not consider "boldface" indices but only "boldface" symbols for
spaces/geometric objects enabled with/adapted to a N--connection structure).

Locally, a N--connection $\mathbf{N}$ (\ref{whitney}) is defined by its
coefficients $N_{i}^{a}(u),$%
\begin{equation*}
\mathbf{N}=N_{i}^{a}(u)dx^{i}\otimes \partial /\partial y^{a}.
\end{equation*}

\begin{proposition}
A N--connection structure states a N--linear system of reference,%
\begin{equation}
\mathbf{e}_{\alpha }=\left( \mathbf{e}_{i}=\frac{\partial }{\partial x^{i}}%
-N_{i}^{b}\frac{\partial }{\partial y^{b}},e_{a}=\frac{\partial }{\partial
y^{a}}\right),  \label{dder}
\end{equation}%
and its dual%
\begin{equation}
\mathbf{e}^{\beta }=\left( e^{j}=dx^{j},\mathbf{e}%
^{b}=dy^{b}+N_{i}^{b}dx^{i}\right) .  \label{ddif}
\end{equation}
\end{proposition}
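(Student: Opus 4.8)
The plan is to reduce the intrinsic N--connection splitting of Definition--Theorem \ref{deftnc} to an explicit statement in the coordinate bases $\partial_{\underline{\alpha}}=(\partial_i,\partial_a)$ and their duals $(dx^j,dy^b)$, and then to verify by a single pairing computation that (\ref{dder}) and (\ref{ddif}) are mutually dual frames adapted to the Whitney sum (\ref{whitney}). First I would fix the local model of the two subspaces. Because $\pi$ is a fibration and $v\mathbf{V}=\ker \pi^{\intercal}$, the vertical subspace is spanned pointwise by the typical--fibre vectors $e_a=\partial/\partial y^a$, so $\{e_a\}$ is already a frame of $v\mathbf{V}$. The N--connection, regarded as the left splitting $\mathbf{N}:T\mathbf{V}\rightarrow v\mathbf{V}$ with $\mathbf{N}\circ i=\mathrm{id}_{v\mathbf{V}}$, is encoded by the coefficients $N_i^a(u)$ through its local action $\partial_i\mapsto N_i^a\partial_a$ and $\partial_a\mapsto \partial_a$; since $i$ is injective, $h\mathbf{V}=\ker(i\circ\mathbf{N})=\ker\mathbf{N}$.

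Next I would produce the horizontal frame as a basis of this kernel. Setting $\mathbf{e}_i:=\partial_i-N_i^b\partial_b$ one checks $\mathbf{N}(\mathbf{e}_i)=N_i^a\partial_a-N_i^b\partial_b=0$, so $\mathbf{e}_i\in h\mathbf{V}$, and $\{\mathbf{e}_i,e_a\}$ is a frame because the transition matrix from $(\partial_i,\partial_a)$ to $(\mathbf{e}_i,e_a)$ is block--triangular with unit diagonal blocks (off--diagonal entries $-N_j^a$), hence of determinant one and invertible. This simultaneously gives $\mathbf{e}_i\in h\mathbf{V}$ and $e_a\in v\mathbf{V}$, i.e. the frame (\ref{dder}) is N--adapted to (\ref{whitney}). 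To obtain (\ref{ddif}) I would simply pair: $e^j=dx^j$ and $\mathbf{e}^b=dy^b+N_i^b dx^i$ satisfy $e^j(\mathbf{e}_i)=\delta_i^j$, $e^j(e_a)=0$, $\mathbf{e}^b(\mathbf{e}_i)=-N_i^b+N_i^b=0$ and $\mathbf{e}^b(e_a)=\delta_a^b$, so that $\mathbf{e}^\beta(\mathbf{e}_\alpha)=\delta_\alpha^\beta$; duality then forces (\ref{ddif}) to be the unique coframe and shows that $e^j$ annihilates $v\mathbf{V}$ while $\mathbf{e}^b$ annihilates $h\mathbf{V}$.

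The only step that is more than routine is global well--definedness: one must check that the $N_i^a$ transform as genuine N--connection coefficients under the admissible coordinate changes preserving the fibred structure $\pi$, so that the locally defined $h\mathbf{V}=\mathrm{span}\{\mathbf{e}_i\}$ and the frames glue to global objects on $\mathbf{V}$. I expect this to be the main (though mild) obstacle; the required transformation law is exactly the one making each $\mathbf{e}_i$ remain horizontal across charts, and it is precisely what part (b) of Definition--Theorem \ref{deftnc} delivers by globalizing the local distributions associated with $\mathcal{N}$. With that consistency in hand, the local construction above is chart--independent and the proposition follows.
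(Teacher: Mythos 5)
Your proposal is correct and takes essentially the same route as the paper: the paper's proof simply exhibits the block--triangular vierbein matrices (\ref{vbt}) relating the N--adapted bases $(\mathbf{e}_{\alpha },\mathbf{e}^{\beta })$ to the coordinate bases $(\partial _{\underline{\alpha }},du^{\underline{\beta }})$, of which your unit--diagonal transition matrix and the pairing computation $\mathbf{e}^{\beta }(\mathbf{e}_{\alpha })=\delta _{\alpha }^{\beta }$ are the explicit special case. Your extra steps --- identifying $h\mathbf{V}=\ker \mathbf{N}$ from the exact--sequence definition, checking duality directly, and noting the chart--gluing issue --- supply precisely the verifications the paper leaves implicit, so this is a more complete rendering of the same local construction rather than a different argument.
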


\begin{proof}
This follows from the possibility to construct N--adapted bases of type
\begin{equation}
\mathbf{e}_{\alpha }=\mathbf{e}_{\alpha }^{\ \underline{\alpha }}\partial _{%
\underline{\alpha }}\mbox{ and }\mathbf{e}_{\ }^{\beta }=\mathbf{e}_{\
\underline{\beta }}^{\beta }du^{\underline{\beta }},  \label{nadtetr}
\end{equation}%
where

\begin{equation}
\mathbf{e}_{\alpha }^{\ \underline{\alpha }}(u)=\left[
\begin{array}{cc}
e_{i}^{\ \underline{i}}(u) & N_{i}^{b}(u)e_{b}^{\ \underline{a}}(u) \\
0 & e_{a}^{\ \underline{a}}(u)%
\end{array}%
\right] ,~\mathbf{e}_{\ \underline{\beta }}^{\beta }(u)=\left[
\begin{array}{cc}
e_{\ \underline{i}}^{i\ }(u) & -N_{k}^{b}(u)e_{\ \underline{i}}^{k\ }(u) \\
0 & e_{\ \underline{a}}^{a\ }(u)%
\end{array}%
\right] .  \label{vbt}
\end{equation}%
$\square $
\end{proof}

One of the arguments to say that manifolds/bundles enabled with
N--connection structure are nonholonomic is that the frames (\ref{ddif})
satisfy the nonholonomy relations
\begin{equation}
\lbrack \mathbf{e}_{\alpha },\mathbf{e}_{\beta }]=\mathbf{e}_{\alpha }%
\mathbf{e}_{\beta }-\mathbf{e}_{\beta }\mathbf{e}_{\alpha }=W_{\alpha \beta
}^{\gamma }\mathbf{e}_{\gamma },  \label{anhrel}
\end{equation}%
where the (antisymmetric) nontrivial anholonomy coefficients are computed $%
W_{ia}^{b}=\partial _{a}N_{i}^{b}$ and $W_{ji}^{a}=\Omega _{ij}^{a}.$

\begin{proposition}
Any metric structure $\mathbf{g}$ on $\mathbf{V}$ can be written in
N--adapted form as a distinguished metric (d--metric)
\begin{equation}
\mathbf{g}=~^{h}g+~^{v}h=\ g_{ij}(u)\ e^{i}\otimes e^{j}+\ h_{ab}(u)\
\mathbf{e}^{a}\otimes \mathbf{e}^{b}.  \label{m1}
\end{equation}
\end{proposition}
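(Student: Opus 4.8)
The plan is to compute $\mathbf{g}$ in the N--adapted cobasis (\ref{ddif}), using the freedom in the coefficients $N_i^a$ to annihilate the mixed horizontal--vertical block. First I would write $\mathbf{g}$ in the local coordinate cobasis $du^{\underline{\alpha}}=(dx^i,dy^a)$ as
\[
\mathbf{g}=g_{\underline{i}\underline{j}}\,dx^i\otimes dx^j
+g_{\underline{i}\underline{a}}\big(dx^i\otimes dy^a+dy^a\otimes dx^i\big)
+g_{\underline{a}\underline{b}}\,dy^a\otimes dy^b ,
\]
with $g_{\underline{\alpha}\underline{\beta}}$ symmetric and nondegenerate. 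Since the vertical distribution $v\mathbf{V}$ is taken to be non--null for the signatures considered here, the vertical block $g_{\underline{a}\underline{b}}$ is itself invertible; I denote its inverse by $g^{\underline{a}\underline{b}}$.

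Next I would insert the inverse frame relations following from (\ref{ddif}), namely $dx^i=e^i$ and $dy^a=\mathbf{e}^a-N_j^a e^j$, into this expression and regroup the result into the three N--adapted blocks $e^i\otimes e^j$, the mixed block $e^i\otimes\mathbf{e}^a$ (together with its transpose), and $\mathbf{e}^a\otimes\mathbf{e}^b$. A short calculation shows that the coefficient of the mixed block equals $g_{\underline{i}\underline{b}}-g_{\underline{a}\underline{b}}N_i^a$. Hence the requirement that $\mathbf{g}$ carry no $h$-$v$ cross terms---which is exactly the defining property of a d--metric adapted to (\ref{whitney})---reduces to the single algebraic condition
\[
N_i^a=g^{\underline{a}\underline{b}}\,g_{\underline{i}\underline{b}} .
\]
This determines the canonical N--connection induced by $\mathbf{g}$, and it is well posed precisely because $g_{\underline{a}\underline{b}}$ is invertible.

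With this choice of $N_i^a$, collecting the surviving blocks yields the vertical coefficients $h_{ab}=g_{\underline{a}\underline{b}}$ and, after the cancellations, the horizontal coefficients $g_{ij}=g_{\underline{i}\underline{j}}-g_{\underline{i}\underline{a}}\,g^{\underline{a}\underline{b}}\,g_{\underline{j}\underline{b}}$, i.e. the Schur complement of the vertical block. This is exactly the asserted form (\ref{m1}). To finish, I would note that the two surviving terms are the restrictions of $\mathbf{g}$ to $h\mathbf{V}$ and to $v\mathbf{V}$, so they are independent of the particular coordinate chart used in the computation and coincide with the intrinsic d--metric pair $({}^{h}g,{}^{v}h)$ attached to the Whitney splitting (\ref{whitney}).

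The step I expect to be the main obstacle is not the algebra but the invertibility of the vertical block $g_{\underline{a}\underline{b}}$, on which the existence of the induced $N_i^a$ and thus the whole decomposition rests. For a positive--definite metric this holds automatically, but for the indefinite signatures relevant to gravity one must genuinely exclude configurations in which $v\mathbf{V}$ degenerates to a null subspace; this is why the statement is formulated for the $2+2$ (and $4+4$) fibred splittings with $(-,+,+,+)$ type signature fixed earlier, where the vertical fibres are non--null by construction.
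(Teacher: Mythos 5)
Your proof is correct and takes essentially the same route as the paper: the paper's own argument identifies the d--metric form (\ref{m1}) with the coordinate ansatz (\ref{ansatz}) by substituting the frame relations (\ref{ddif}) and (\ref{vbt}), and your computation is precisely the ``inverse transform'' direction that the paper abbreviates with ``inverse transforms are similar,'' namely solving the vanishing of the mixed block for $N_i^a=g^{\underline{a}\underline{b}}g_{\underline{i}\underline{b}}$ and reading off $h_{ab}=g_{\underline{a}\underline{b}}$ together with the Schur--complement $g_{ij}$. Your closing remark that everything hinges on the invertibility of the vertical block $g_{\underline{a}\underline{b}}$ (automatic in the positive--definite case, a genuine restriction for pseudo--Riemannian signature) makes explicit a nondegeneracy assumption the paper uses tacitly in postulating the parametrization (\ref{ansatz}).
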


\begin{proof}
Via frame/coordinate transforms, $g_{\alpha \beta }=e_{\ \alpha }^{\alpha
^{\prime }}e_{\ \beta }^{\beta ^{\prime }}g_{\alpha ^{\prime }\beta ^{\prime
}},$ any metric
\end{proof}

\begin{equation}
\mathbf{\ g}=\underline{g}_{\alpha \beta }\left( u\right) du^{\alpha
}\otimes du^{\beta }  \label{metr}
\end{equation}%
can written in the form
\begin{equation}
\underline{g}_{\alpha \beta }=\left[
\begin{array}{cc}
g_{ij}+N_{i}^{a}N_{j}^{b}h_{ab} & N_{j}^{e}h_{ae} \\
N_{i}^{e}h_{be} & h_{ab}%
\end{array}%
\right] .  \label{ansatz}
\end{equation}%
Introducing formulas (\ref{ddif}) and (\ref{vbt}) into (\ref{m1}) we obtain
the coordinate form (\ref{metr}) \ and (\ref{ansatz}). Inverse transforms
are similar.

$\square $

Using the last two propositions, we can compute the N--adapted coefficients $%
\ \mathbf{\Gamma }_{\ \alpha \beta }^{\gamma }=\left(
L_{jk}^{i},L_{bk}^{a},C_{jc}^{i},C_{bc}^{a}\right) ,$ with respect to frames
(\ref{dder}) and (\ref{ddif}), of the canonical d--connection $\mathbf{D},$
\begin{eqnarray}
L_{jk}^{i} &=&\frac{1}{2}g^{ir}\left( \mathbf{e}_{k}g_{jr}+\mathbf{e}%
_{j}g_{kr}-\mathbf{e}_{r}g_{jk}\right) ,  \label{candcon} \\
L_{bk}^{a} &=&e_{b}(N_{k}^{a})+\frac{1}{2}h^{ac}\left( e_{k}h_{bc}-h_{dc}\
e_{b}N_{k}^{d}-h_{db}\ e_{c}N_{k}^{d}\right) ,  \notag \\
C_{jc}^{i} &=&\frac{1}{2}g^{ik}e_{c}g_{jk},C_{bc}^{a}=\frac{1}{2}%
h^{ad}\left( e_{c}h_{bd}+e_{c}h_{cd}-e_{d}h_{bc}\right) .  \notag
\end{eqnarray}%
The N--adapted coefficients of d--torsion \newline
$\mathcal{T}$ $=\{\mathbf{T}_{\ \alpha \beta }^{\gamma }=(T_{\ jk}^{i},T_{\
ja}^{i},T_{\ ji}^{a},T_{\ bi}^{a},T_{\ bc}^{a})\}$\ (\ref{dtors1}) of $%
\mathbf{D}$ are computed
\begin{eqnarray}
T_{\ jk}^{i} &=&L_{\ jk}^{i}-L_{\ kj}^{i}=0,\ \ T_{\ ja}^{i}=-\ T_{\
aj}^{i}=C_{\ ja}^{i},\ T_{\ ji}^{a}=\Omega _{\ ji}^{a},\   \notag \\
\ T_{\ bi}^{a} &=&\partial _{b}N_{i}^{a}-L_{\ bi}^{a},T_{\ bc}^{a}=C_{\
bc}^{a}-C_{\ cb}^{a}=0.  \label{dtors}
\end{eqnarray}

We provide also the N--adapted coefficients of d--curvature \newline
$\mathcal{R}=\{\mathbf{R}_{\ \alpha \beta \nu }^{\tau }=(R_{\ hjk}^{i},R_{\
bjk}^{a},R_{\ jka}^{i},R_{\ bka}^{c},R_{\ jbc}^{i},R_{\ bcd}^{a})\}$ (\ref%
{curv}) of $\mathbf{D},$ {\small
\begin{eqnarray}
R_{\ hjk}^{i} &=&\mathbf{e}_{k}\ L_{\ hj}^{i}-\mathbf{e}_{j}\ L_{\
hk}^{i}+L_{\ hj}^{m}L_{\ mk}^{i}-L_{\ hk}^{m}\ L_{\ mj}^{i}-C_{\
ha}^{i}\Omega _{\ kj}^{a},  \notag \\
R_{\ bjk}^{a} &=&\mathbf{e}_{k}\ L_{\ bj}^{a}-\mathbf{e}_{j}\ L_{\
bk}^{a}+L_{\ bj}^{c}\ L_{\ ck}^{a}-L_{\ bk}^{c}L_{\ cj}^{a}-C_{\
bc}^{a}\Omega _{\ kj}^{c},  \notag \\
R_{\ jka}^{i} &=&e_{a}\ L_{\ jk}^{i}-\ D_{k}C_{\ ja}^{i}+C_{\ jb}^{i}\ T_{\
ka}^{b},  \label{dcurv} \\
\ R_{\ bka}^{c} &=&e_{a}L_{\ bk}^{c}-D_{k}\ C_{\ ba}^{c}+C_{\ bd}^{c}T_{\
ka}^{c},  \notag \\
R_{\ jbc}^{i} &=&e_{c}\ C_{\ jb}^{i}-e_{b}\ C_{\ jc}^{i}+C_{\ jb}^{h}C_{\
hc}^{i}-C_{\ jc}^{h}C_{\ hb}^{i},  \notag \\
R_{\ bcd}^{a} &=&e_{d}C_{\ bc}^{a}-e_{c}C_{\ bd}^{a}+C_{\ bc}^{e}C_{\
ed}^{a}-C_{\ bd}^{e}\ C_{\ ec}^{a}.  \notag
\end{eqnarray}%
} Contracting indices, we can compute the h- v--components $\mathbf{R}%
_{\alpha \beta }\doteqdot \mathbf{R}_{\ \alpha \beta \tau }^{\tau }$ of the
Ricci tensor $\mathbf{Ric}$,
\begin{equation}
R_{ij}\doteqdot R_{\ ijk}^{k},\ R_{ia}\doteqdot -R_{\
ika}^{k},R_{ai}\doteqdot R_{\ aib}^{b},\ R_{ab}\doteqdot R_{\ abc}^{c}.
\label{dricci}
\end{equation}%
The scalar curvature is
\begin{equation}
\ _{s}R\doteqdot \mathbf{g}^{\alpha \beta }\mathbf{R}_{\alpha \beta
}=g^{ij}\ R_{ij}+h^{ab}R_{ab}.  \label{sdccurv}
\end{equation}

In component form, the analog of Theorem \ref{thneq} is

\begin{theorem}
The Einstein equations in GR are equivalent to
\begin{eqnarray}
\mathbf{R}_{\ \beta \delta }-\frac{1}{2}\mathbf{g}_{\beta \delta }\ \ _{s}R
&=&\mathbf{\Upsilon }_{\beta \delta },  \label{cdeinst} \\
L_{aj}^{c}=e_{a}(N_{j}^{c}),\ C_{jb}^{i}=0,\ \Omega _{\ ji}^{a} &=&0,
\label{lcconstr}
\end{eqnarray}%
written for the canonical d--connection coefficients (\ref{candcon}) if \ $%
\mathbf{\Upsilon }_{\beta \delta }\rightarrow T_{\beta \delta }$
(energy--momentum tensor for matter) for $\mathbf{D}\rightarrow \nabla .$
\end{theorem}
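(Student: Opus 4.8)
The plan is to show that the constraint system (\ref{lcconstr}) is nothing but the N--adapted component expression of the vanishing--torsion condition $\mathcal{T}=0$, after which the stated equivalence follows from the distortion relation of Corollary \ref{corol1} exactly as in the coordinate--free Theorem \ref{thneq}. So the theorem is really the component translation of that result, and the substantive content lies in matching (\ref{lcconstr}) to (\ref{dtors}).

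First I would read off the nontrivial d--torsion coefficients from (\ref{dtors}). For the canonical d--connection one has identically $T_{\ jk}^{i}=0$ and $T_{\ bc}^{a}=0$, so the only components that can obstruct $\mathcal{T}=0$ are $T_{\ ja}^{i}=C_{\ ja}^{i}$, $T_{\ ji}^{a}=\Omega_{\ ji}^{a}$, and $T_{\ bi}^{a}=\partial_{b}N_{i}^{a}-L_{\ bi}^{a}=e_{b}(N_{i}^{a})-L_{\ bi}^{a}$, using $e_{b}=\partial/\partial y^{b}$ from (\ref{dder}). I would then observe that the three equations in (\ref{lcconstr}) annihilate precisely these three families: $\Omega_{\ ji}^{a}=0$ kills $T_{\ ji}^{a}$; $C_{jb}^{i}=0$ kills $T_{\ ja}^{i}$; and $L_{aj}^{c}=e_{a}(N_{j}^{c})$ kills $T_{\ bi}^{a}$, up to the obvious relabeling of v-- and h--indices. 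Hence (\ref{lcconstr}) holds if and only if all N--adapted torsion coefficients vanish, i.e. $\mathcal{T}=0$.

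Next I would invoke Corollary \ref{corol1}: since the canonical distortion tensor $\mathbf{Q}[\mathbf{g},\mathbf{N}]$ is an algebraic combination of the induced torsion $\mathcal{T}[\mathbf{g},\mathbf{N}]$, the condition $\mathcal{T}=0$ forces $\mathbf{Q}=0$ and therefore $\mathbf{D}=\nabla$ on the $2+2$ (pseudo) Riemannian manifold. Substituting $\mathbf{D}=\nabla$ into the N--adapted curvature coefficients (\ref{dcurv}), the contracted Ricci d--tensor (\ref{dricci}) and the scalar (\ref{sdccurv}) collapse to the ordinary Levi--Civita Ricci tensor and scalar curvature, so the left--hand side of (\ref{cdeinst}) becomes the standard Einstein tensor. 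With the source identification $\mathbf{\Upsilon}_{\beta\delta}\to T_{\beta\delta}$ stipulated for $\mathbf{D}\to\nabla$, equation (\ref{cdeinst}) then reads word--for--word as the Einstein field equations of GR. The converse is immediate: any GR metric written in an N--adapted frame (\ref{dder})--(\ref{ddif}) for which (\ref{lcconstr}) holds has $\mathbf{D}=\nabla$ and hence satisfies (\ref{cdeinst}).

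The step I expect to require the most care is the bookkeeping of the symmetry of the curvature objects rather than any deep computation. For a generic d--connection the Ricci d--tensor built from (\ref{dricci}) is nonsymmetric, so a priori (\ref{cdeinst}) could fail to match the symmetric $T_{\beta\delta}$; the point to make explicit is that once (\ref{lcconstr}) forces $\mathbf{D}=\nabla$, the antisymmetric part of $\mathbf{R}_{\beta\delta}$ disappears and consistency with the symmetric GR source is automatic. Beyond this, the argument is just the component form of Theorem \ref{thneq} and carries no further obstacle.
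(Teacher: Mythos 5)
Your proof is correct and follows essentially the same route as the paper's own (very terse) proof: you identify the constraints (\ref{lcconstr}) with the vanishing of the only non--identically--zero torsion components in (\ref{dtors}), conclude $\mathbf{Q}=0$ and hence $\mathbf{D}=\nabla$ via the distortion relation (\ref{distrel}), and then read off (\ref{cdeinst}) as the standard Einstein equations. Your explicit term--by--term matching of the three constraints to the three torsion families, and your remark that the nonsymmetric part of the Ricci d--tensor disappears once $\mathbf{D}=\nabla$, simply supply details the paper leaves implicit.
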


\begin{proof}
It follows from above component formulas introduced in (\ref{deinsteq}) and (%
\ref{lccond}). The constraints (\ref{lcconstr}) are equivalent to (\ref%
{lccond}), i.e. to the condition of zero torsion (\ref{dtors}) and zero
distortion d--tensors $\mathbf{Q}=0,$ which results in $\ \mathbf{D=}\nabla
, $ see (\ref{distrel}).

$\square $
\end{proof}

The main reason to work with equations of type (\ref{deinsteq}) and (\ref%
{cdeinst}) is that such equations for $\mathbf{D}$ (we say "in nonholonomic
variables") decouple with respect to N--adapted frames (for spaces with
splitting of dimension $2,$ or $3,+2+2+2+...$) for generic off--diagonal
ansatz for metric $\mathbf{g}$ and certain parameterizations of $\mathbf{N}$
depending on all coordinates. This allows us to integrate such nonlinear PDE
in very general forms. We construct integral varieties determined by
corresponding classes of generating and integration functions and
integration constant which may be defined from certain boundary/Cauchy
conditions and additional physical arguments. Imposing additional
Levi--Civita (LC) conditions (\ref{lccond}), which constrain
nonholonomically the integral varieties of solutions of $\ \mathbf{E}%
=\Upsilon $, we can "extract" solutions in GR. We note that we can not
decouple and integrate in such off--diagonal forms the Einstein equations if
we work from the very beginning and only with $\nabla .$ The main "trick" is
that we "relax" the constraints of zero torsion in the standard Einstein
equations by considering an "auxiliary" connection $\mathbf{D}$ (in next
section, we shall see that this is a Finsler type d--connection); such
constructions are provided in Refs. \cite{veyms,vsgg,vrflg}.

\subsection{Metric compatible Finsler--Cartan geometries}

We outline some results from the Finsler geometry on tangent bundles \cite%
{bejancu1,cartf,rund,matsumoto,bcs} and show how the constructions can be
re--defined for nonholonomic (pseudo) Riemannian manifolds \cite%
{vrflg,vsgg,vcrit}.

\subsubsection{The Finsler fundamental/generating function}

Let us consider a tangent bundle $TM=\bigcup\nolimits_{x\in M}T_{x}M,$ where
$T_{x}M$ are the tangent spaces at points $x\in M,$ for the base space $M$
being a real $\mathit{C}^{\infty }$ manifold of dimension $\dim M=n.$
Roughly, the term Finsler \textquotedblright metric\textquotedblright\ $F$
is used for a (Finsler) geometry determined on $TM$ by a nonlinear quadratic
element%
\begin{equation}
ds^{2}=F^{2}(x,dx),  \label{nqe}
\end{equation}%
when $dx^{i}\sim y^{i}.$ This generalizes the well--known and very important
example of (pseudo) Riemannian geometry, determined by a metric tensor $%
g_{ij}(x^{k}).$ Taking  a particular case with quadratic form $F=\sqrt{%
|g_{ij}(x)y^{i}y^{j}|}$ we obtain
\begin{equation}
ds^{2}=g_{ij}(x)dx^{i}dx^{j}.  \label{lqe}
\end{equation}%
Such an element states a geometry on $M$ with geometric objects depending
only on $x$--variables even for definitions of tensors, linear connections,
spinors etc objects the tangent bundle $TM$ is also involved in order to
define such objects by analogy to flat spaces.

\begin{definition}
\label{fgf}A Finsler fundamental/generating function (metric) is a function $%
F:\ TM\rightarrow \lbrack 0,\infty )$ subjected to the conditions:

\begin{enumerate}
\item $F(x,y)$ is $\mathit{C}^{\infty }$ on $\widetilde{TM}:=TM\backslash
\{0\},$ for $\{0\}$ denoting the set of zero sections of $TM$ on $M;$

\item $F(x,\beta y)=\beta F(x,y),$ for any $\beta >0,$ i.e. it is a positive
1--homogeneous function on the fibers of $TM;$

\item for any $y\in \widetilde{T_{x}M},$ the Hessian
\begin{equation}
\ ^{v}{\tilde{g}}_{ij}(x,y)=\frac{1}{2}\frac{\partial ^{2}F^{2}}{\partial
y^{i}\partial y^{j}}  \label{hessian}
\end{equation}%
is considered as s a \textquotedblright vertical\textquotedblright\ (v)
metric on typical fiber, i.e. it is nondegenerate and positive definite, $%
\det |\ ^{v}{\tilde{g}}_{ij}|\neq 0.$
\end{enumerate}
\end{definition}

If the base $M$ is taken to be a Lorentz manifold in GR, we can construct
generalizations on $TM$ with a good physical axiomatic system which is very
similar to that of Einstein gravity when the Levi--Civita connection $\nabla
$ is substituted by a metric compatible Finsler variant of the canonical
d--connection ${\mathbf{D}},$ see discussions in \cite{vcosm} and
next subsections.

\begin{remark}
\label{rpfl}The condition 3 above should be relaxed to "not positive
definite" for models of Finsler gravity with finite, in general, locally
anisotropic speed of light.
\end{remark}

Considering a background (pseudo) Riemannian metric $g_{ij}(x)$ with
signature $(+,+,+,-)$ on $M,$ we can elaborate various geometric and
physical models on $TM$ with locally anisotropic metrics $\ g_{ij}(x,y)$
depending on "velocity" type coordinates $y^{a}.$ The main difference
between (pseudo) Riemannian and Finsler geometries is that the first type
ones are completely defined by a metric structure $g_{ij}(x)$ (from which a
unique Levi--Civita connection $\nabla $ can be constructed) but the second
type ones can not be completely derived from a Finsler metric $F(x,y)$
and/or its Hessian $\ ^{v}{\tilde{g}}_{ij}(x,y).$

\begin{remark}
\label{remarkfg} A complete Finsler geometry model $(F:\ _{F}\mathbf{N},\ \
_{F}\mathbf{g},\ \ _{F}\mathbf{D})$ can be defined by additional assumptions
on how three fundamental geometric objects (the N--connection $\ \ _{F}%
\mathbf{N}$, the total metric $\ \ _{F}\mathbf{g,}$ the d--connection $\ \
_{F}\mathbf{D}$) \ can be determined uniquely by a fundamental Finsler
function $F.$
\end{remark}

Finsler like geometries can be elaborated on a generic nonholonomic
bundle/manifold $\mathbf{V}$ following self--consistent geometric and
physically important principles (for instance, $\mathbf{V}=TM,$ $\mathbf{V}$
is a (pseudo) Riemannian manifold with nonholonomic $2+2$ splitting \cite%
{veyms}; there were performed similar generalizations for
supermanifolds/superbundles and/or noncommutative generalizations,
affine--Finsler spaces etc, see \cite{vsuperstr,vfncr,vsgg}).

\subsubsection{The canonical Finsler connections and lifts of
metrics}

Let us consider $L=F^{2}$ is considered as an effective regular Lagrangian
on $TM$ and action integral $S(\tau )=\int\limits_{0}^{1}L(x(\tau ),y(\tau
))d\tau$, for $y^{k}(\tau )=dx^{k}(\tau )/d\tau ,$ where $x(\tau )$
parameterizes smooth curves on a manifold $M$ with $\tau \in \lbrack 0,1].$

\begin{lemma}
The Euler--Lagrange equations $\frac{d}{d\tau }\frac{\partial L}{\partial
y^{i}}-\frac{\partial L}{\partial x^{i}}=0$ are equivalent to the
\textquotedblright nonlinear geodesic\textquotedblright\ (equivalently,
semi--spray) equations $\frac{d^{2}x^{k}}{d\tau ^{2}}+2\tilde{G}^{k}(x,y)=0$%
, where
\begin{equation}
\tilde{G}^{k}=\frac{1}{4}\tilde{g}^{kj}\left( y^{i}\frac{\partial ^{2}L}{%
\partial y^{j}\partial x^{i}}-\frac{\partial L}{\partial x^{j}}\right) ,
\label{smspr}
\end{equation}%
for $\tilde{g}^{kj}$ being inverse to $\ ^{v}{\tilde{g}}_{ij}\equiv {\tilde{g%
}}_{ij}$ (\ref{hessian}).
\end{lemma}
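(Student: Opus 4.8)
The plan is to show the Euler--Lagrange equations rewrite as the stated semi--spray form. First I would write out the Euler--Lagrange operator $\frac{d}{d\tau}\frac{\partial L}{\partial y^i}-\frac{\partial L}{\partial x^i}$ explicitly along a curve $x(\tau)$ with $y^k=dx^k/d\tau$. Expanding the total $\tau$--derivative by the chain rule gives
\begin{equation*}
\frac{d}{d\tau}\frac{\partial L}{\partial y^i}=\frac{\partial^2 L}{\partial y^i\partial x^j}\,y^j+\frac{\partial^2 L}{\partial y^i\partial y^j}\,\frac{dy^j}{d\tau},
\end{equation*}
so the Euler--Lagrange equations become
\begin{equation*}
\frac{\partial^2 L}{\partial y^i\partial y^j}\,\frac{d^2x^j}{d\tau^2}+\frac{\partial^2 L}{\partial y^i\partial x^j}\,y^j-\frac{\partial L}{\partial x^i}=0,
\end{equation*}
using $dy^j/d\tau=d^2x^j/d\tau^2$.

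Next I would identify the leading coefficient with twice the Hessian: by Definition \ref{fgf}, equation (\ref{hessian}), we have $\frac{\partial^2 L}{\partial y^i\partial y^j}=2\,\ ^{v}{\tilde g}_{ij}$ since $L=F^2$. The key structural input is nondegeneracy of the Hessian on $\widetilde{TM}$, which guarantees that the inverse $\tilde g^{kj}$ exists. I would then multiply the displayed equation through by $\frac{1}{2}\tilde g^{ki}$ and sum over $i$, so that the leading term collapses to $\frac{d^2x^k}{d\tau^2}$ because $\tilde g^{ki}\,\ ^{v}{\tilde g}_{ij}=\delta^k_j$.

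Carrying out this contraction, the remaining terms reassemble into exactly $2\tilde G^k$ as defined in (\ref{smspr}): the $\frac{\partial^2 L}{\partial y^i\partial x^j}y^j$ and $-\frac{\partial L}{\partial x^i}$ pieces become, after the half and the inverse metric, $\frac{1}{2}\tilde g^{ki}\big(y^j\frac{\partial^2 L}{\partial y^i\partial x^j}-\frac{\partial L}{\partial x^i}\big)$, matching (\ref{smspr}) up to relabeling dummy indices. This yields the nonlinear geodesic equation $\frac{d^2x^k}{d\tau^2}+2\tilde G^k(x,y)=0$, and the reverse implication is immediate by multiplying back by the Hessian and running the chain rule backwards, so the two systems are equivalent.

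The argument is essentially routine once the Hessian is recognized and inverted; the only point requiring care is that the manipulation is valid only on $\widetilde{TM}=TM\setminus\{0\}$, where the Hessian is nondegenerate and $L=F^2$ is smooth (the homogeneity of $F$ forces a singularity of the Hessian structure at the zero section). Thus the main obstacle, such as it is, is not a deep step but rather bookkeeping: correctly handling the symmetry of mixed partials and ensuring the contraction with $\tilde g^{kj}$ is performed consistently so that the index $i$ summed in (\ref{smspr}) matches the summation produced by the chain rule. I would state explicitly that the equivalence holds pointwise on $\widetilde{TM}$, which is where the semi--spray coefficients $\tilde G^k$ are well defined.
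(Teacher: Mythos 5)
Your proof is correct: the chain-rule expansion of the total derivative, the identification $\frac{\partial^2 L}{\partial y^i\partial y^j}=2\tilde g_{ij}$ from $L=F^2$, and the contraction with $\frac{1}{2}\tilde g^{ki}$ reproduce exactly the semi-spray form (\ref{smspr}), with the reverse direction following by invertibility of the Hessian on $\widetilde{TM}$. The paper itself gives no proof of this lemma (it is treated as a standard fact, with the details deferred to the cited monographs), and your computation is precisely the routine argument being invoked there.
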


Certain geometric properties of fundamental Finsler functions can be studied
via semi--spray configurations not concerning the problem of definition of
connections and metrics for such spaces. For instance, J. Kern \cite{kern}
suggested to consider nonhomomgeneous regular Lagrangians instead of those
considered in Finsler geometry. That resulted in so--called
Lagrange--Finsler geometry, on applications
in modern physics see \cite{vrflg,vcrit}.

\begin{definition}
\textbf{-Corollary:}\ There is a canonical N--connection $\mathbf{\tilde{N}}%
=\{\tilde{N}_{j}^{a}\},$ $\ $%
\begin{equation}
\tilde{N}_{j}^{a}:=\frac{\partial \tilde{G}^{a}(x,y)}{\partial y^{j}},
\label{cncl}
\end{equation}%
completely defined by the fundamental Finsler function $F.$
\end{definition}

\begin{proof}
Using the above Lemma and local computations we can verify that the
conditions Definition--Theorem (\ref{deftnc}) for N--connections are
satisfied. See also details of such a proof in \cite{vsgg}.

$\square $
\end{proof}

We note that via $\mathbf{\tilde{N}}$ a Finsler metric $F$ defines naturally
certain N--adapted frame structures $\mathbf{\tilde{e}}_{\nu }=(\mathbf{%
\tilde{e}}_{i},e_{a})$ and $\mathbf{\tilde{e}}^{\mu }=(e^{i},\mathbf{\tilde{e%
}}^{a}):$ we have to substitute $N_{j}^{a}\rightarrow \tilde{N}_{j}^{a}$
into, respectively, (\ref{dder}) and (\ref{ddif}).

\begin{definition}
\textbf{-Corollary:}\ A total metric structure on $TM$ can be defined by a
Sasaky type lift of ${\tilde{g}}_{ij},$%
\begin{equation}
\mathbf{\tilde{g}}=\tilde{g}_{ij}(x,y)\ e^{i}\otimes e^{j}+\tilde{g}%
_{ij}(x,y)\ \mathbf{\tilde{e}}^{i}\otimes \ \mathbf{\tilde{e}}^{j}.
\label{slm}
\end{equation}
\end{definition}

It is possible to use other geometric principles for \textquotedblright
lifts and projections\textquotedblright\ when, for instance, from a given $F$
it is constructed a complete homogeneous metric on total/horizontal spaces
of $TM$. For models of locally anisotropic/Finsler gravity on $TM,$ or on $%
\mathbf{V,}$ a generalized covariance principle has to be considered
following geometric and physical considerations \cite{vcosm}. Such
constructions are performed up to certain frame/coordinate transforms $%
\mathbf{\tilde{e}}_{\gamma }\rightarrow \mathbf{e}_{\gamma ^{\prime }}=e_{\
\gamma ^{\prime }}^{\gamma }\mathbf{\tilde{e}}_{\gamma }.$ From a formal
point of view, we can omit \textquotedblright tilde\textquotedblright\ on
symbols and write, in general, $\mathbf{g=\{g}_{\alpha \beta }$ and $\mathbf{%
N=\{}N_{i}^{a}=e_{\ a^{\prime }}^{a}e_{i}^{\ i^{\prime }}N_{i^{\prime
}}^{a^{\prime }}\}.$ We can define a subclass of frame/coordinate transforms
preserving a prescribed splitting (\ref{whitney}).

\subsubsection{Models of Finsler--Cartan spaces}

\label{ssmfcs}Using last two Definition--Corollaries, we prove

\begin{theorem}
A fundamental Finsler function $F(x,y)$ defines naturally a nonholonomic
Riemann--Cartan model on $\widetilde{TM}$ determined by geometric data $(F:\
_{F}\mathbf{N=\tilde{N}},\ \ _{F}\mathbf{g=\tilde{g}},\ \ _{F}\mathbf{D=}\ \
\mathbf{D})$, where $\ \mathbf{D}$ is determined by N--adapted coefficients $%
\ \mathbf{\Gamma }_{\ \alpha \beta }^{\gamma }=(\
L_{jk}^{i},L_{bk}^{a},C_{jc}^{i},C_{bc}^{a})$ computed using formulas (\ref%
{candcon}) for $\mathbf{g\rightarrow \tilde{g}}$ (\ref{slm}) and $\mathbf{%
N\rightarrow \tilde{N}}$ (\ref{cncl}).
\end{theorem}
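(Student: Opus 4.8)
The plan is to read this statement as an assembly theorem: the three fundamental objects $\mathbf{\tilde{N}}$, $\mathbf{\tilde{g}}$, $\mathbf{D}$ have already been produced individually in the two preceding Definition--Corollaries and in the canonical d--connection Theorem, so what remains is to glue them into a single structure determined by $F$ alone and to check that this structure satisfies the defining properties of a metric compatible Riemann--Cartan geometry on $\widetilde{TM}$. First I would invoke the Lemma on the Euler--Lagrange/semi--spray equations to extract the coefficients $\tilde{G}^{k}$ of (\ref{smspr}) from $L=F^{2}$, and then the first Definition--Corollary to obtain the canonical N--connection $\tilde{N}_{j}^{a}=\partial \tilde{G}^{a}/\partial y^{j}$ of (\ref{cncl}); here I would only need to remark that $\tilde{G}^{k}$, and hence $\mathbf{\tilde{N}}$, are well defined on $\widetilde{TM}$ because the Hessian (\ref{hessian}) is nondegenerate there by condition 3 of Definition \ref{fgf}, and that the N--connection transformation law of Definition--Theorem \ref{deftnc} was already verified in that Corollary.

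Next I would use the second Definition--Corollary to lift the Hessian metric $\tilde{g}_{ij}$ to the total Sasaki--type d--metric $\mathbf{\tilde{g}}$ of (\ref{slm}), written in the N--adapted coframe $\mathbf{\tilde{e}}^{\mu}$ obtained from (\ref{dder})--(\ref{ddif}) under the substitution $N_{i}^{a}\rightarrow \tilde{N}_{i}^{a}$. This furnishes the pair $(\mathbf{\tilde{g}},\mathbf{\tilde{N}})$ that serves as the input data for the canonical d--connection Theorem. Applying that Theorem to this pair, i.e. substituting $\mathbf{g}\rightarrow \mathbf{\tilde{g}}$ and $\mathbf{N}\rightarrow \mathbf{\tilde{N}}$ into (\ref{candcon}), yields a unique linear connection $\mathbf{D}$ with N--adapted coefficients $(L_{jk}^{i},L_{bk}^{a},C_{jc}^{i},C_{bc}^{a})$; by the same Theorem $\mathbf{D}$ is metric compatible, $\mathbf{D\tilde{g}}=0$, and has vanishing pure--horizontal and pure--vertical torsions, $T_{\ jk}^{i}=0$ and $T_{\ bc}^{a}=0$ in (\ref{dtors}).

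Finally I would verify the Riemann--Cartan character. A nonholonomic Riemann--Cartan model is by definition a metric together with a metric compatible linear connection carrying a (nonvanishing) torsion that is itself fixed by the data, rather than by independent algebraic field equations as in Einstein--Cartan theory. Metric compatibility is exactly what the canonical d--connection Theorem supplies, while the surviving torsion components of (\ref{dtors})---namely $T_{\ ja}^{i}=C_{\ ja}^{i}$, $T_{\ ji}^{a}=\Omega _{\ ji}^{a}$ and $T_{\ bi}^{a}=\partial _{b}N_{i}^{a}-L_{\ bi}^{a}$---are generically nonzero precisely because the N--curvature $\Omega$ of (\ref{ncurvat}) of $\mathbf{\tilde{N}}$ does not vanish for a genuinely nonholonomic Finsler distribution; moreover these components are algebraic combinations of $\mathbf{\tilde{g}}$ and $\mathbf{\tilde{N}}$, hence of $F$. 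Since all three objects are built from $F$ and its derivatives alone, the data $(F:\ _{F}\mathbf{N=\tilde{N}},\ _{F}\mathbf{g=\tilde{g}},\ _{F}\mathbf{D=D})$ is canonically determined, which is the assertion. I do not expect a genuine obstacle, as this is an assembly and compatibility statement; the one point demanding care is confirming that the 1--homogeneity of $F$ together with the nondegeneracy of (\ref{hessian}) keeps each construction globally consistent on $\widetilde{TM}$ and independent of the admissible frame transforms $\mathbf{\tilde{e}}_{\gamma}\rightarrow \mathbf{e}_{\gamma ^{\prime}}$ relating the Sasaki lift to other homogeneous lifts.
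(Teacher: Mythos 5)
Your proposal is correct and follows essentially the same route as the paper, which proves this theorem simply by assembling the two preceding Definition--Corollaries (the canonical N--connection $\mathbf{\tilde{N}}$ of (\ref{cncl}) and the Sasaki lift $\mathbf{\tilde{g}}$ of (\ref{slm})) and then substituting these data into the canonical d--connection formulas (\ref{candcon}). Your additional remarks on nondegeneracy of the Hessian on $\widetilde{TM}$ and on the torsion being fixed by $(\mathbf{\tilde{g}},\mathbf{\tilde{N}})$ rather than by separate algebraic equations are consistent with what the paper states elsewhere (in particular in its Conclusion on Finsler--Cartan versus Einstein--Cartan geometry), so they fill in rather than alter the argument.
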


Introducing coefficients $\ \ \mathbf{\Gamma }_{\ \alpha \beta }^{\gamma },$
respectively, into formulas (\ref{dtors}) and (\ref{dcurv}), we compute the
torsion $\ \ \mathcal{T}$ $\ $and curvature $\ \ \mathcal{R}$ of $\ \
\mathbf{D}.$

In Finsler geometry it is largely used the Cartan d--connection $\mathbf{%
\tilde{D}}$ \cite{cartf}, see details in \cite{rund}, which is also
metric compatible and can be related to $\ \mathbf{D}$ (\ref{candcon}) via
frame transforms and deformations. If we consider that $\
L_{bk}^{a}\rightarrow \ L_{jk}^{i}$ and $\ C_{jc}^{i}\rightarrow $ $\ \
C_{bc}^{a}$ for arbitrary $\mathbf{g}$ and $\mathbf{N}$ on $TM$ (i.e. we
identify respectively $a=n+i$ with $i$ and $b=n+j),$ we obtain the
so--called normal d--connection \ $\ ^{n}\mathbf{D}=(\ \ \ ^{n}L_{\
jk}^{i},\ \ \ ^{n}C_{jc}^{i})$ where
\begin{equation}
\ \ \ ^{n}L_{\ jk}^{i}=\frac{1}{2}g^{ih}(\mathbf{e}_{k}g_{jh}+\mathbf{e}%
_{j}g_{kh}-\mathbf{e}_{h}g_{jk}),\ \ \ ^{n}C_{\ bc}^{a}=\frac{1}{2}%
g^{ae}(e_{b}h_{ec}+e_{c}h_{eb}-e_{e}h_{bc}).  \label{cdc}
\end{equation}

\begin{definition}
The Cartan d--connection \ $\mathbf{\tilde{D}}=(\tilde{L}_{\ jk}^{i},\tilde{C%
}_{jc}^{i})$ is defined by introducing $\mathbf{g=\tilde{g}}$ with $\tilde{h}%
_{ij}=\tilde{g}_{ij}$\ and $\mathbf{N=\tilde{N}}$ in (\ref{cdc}).
\end{definition}

Using formulas (\ref{dtors}) and (\ref{dcurv}) for N--adapted coefficients
of $\mathbf{\tilde{D},}$ we prove

\begin{theorem}
The nontrivial components of torsion $\mathbf{\tilde{T}}_{\beta \gamma
}^{\alpha }=\{\tilde{T}_{jc}^{i},\tilde{T}_{ij}^{a},\tilde{T}_{ib}^{a}\}$
and curvature\newline
 $\mathbf{\tilde{R}}_{\ \beta \gamma \tau }^{\alpha }=\{\tilde{R%
}_{\ hjk}^{i},\tilde{P}_{\ jka}^{i},\tilde{S}_{\ bcd}^{a}\}$ of $\mathbf{%
\tilde{D}}$ are respectively {\small
\begin{equation}
\tilde{T}_{jk}^{i}=0,\tilde{T}_{jc}^{i}=\tilde{C}_{\ jc}^{i},\tilde{T}%
_{ij}^{a}=\tilde{\Omega}_{ij}^{a},\tilde{T}_{ib}^{a}=e_{b}\left( \tilde{N}%
_{i}^{a}\right) -\tilde{L}_{\ bi}^{a},\tilde{T}_{bc}^{a}=0,  \label{torscdc}
\end{equation}%
and
\begin{eqnarray}
\tilde{R}_{\ hjk}^{i} &=&\mathbf{\tilde{e}}_{k}\tilde{L}_{\ hj}^{i}-\mathbf{%
\tilde{e}}_{j}\tilde{L}_{\ hk}^{i}+\tilde{L}_{\ hj}^{m}\tilde{L}_{\ mk}^{i}-%
\tilde{L}_{\ hk}^{m}\tilde{L}_{\ mj}^{i}-\tilde{C}_{\ ha}^{i}\tilde{\Omega}%
_{\ kj}^{a},  \label{curvcart} \\
\tilde{P}_{\ jka}^{i} &=&e_{a}\tilde{L}_{\ jk}^{i}-\mathbf{\tilde{D}}_{k}%
\tilde{C}_{\ ja}^{i},\ \tilde{S}_{\ bcd}^{a}=e_{d}\tilde{C}_{\ bc}^{a}-e_{c}%
\tilde{C}_{\ bd}^{a}+\tilde{C}_{\ bc}^{e}\tilde{C}_{\ ed}^{a}-\tilde{C}_{\
bd}^{e}\tilde{C}_{\ ec}^{a}.  \notag
\end{eqnarray}
}
\end{theorem}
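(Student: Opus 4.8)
The plan is to derive both families of identities by direct substitution of the Cartan d--connection coefficients into the general N--adapted torsion formulas (\ref{dtors}) and curvature formulas (\ref{dcurv}), exploiting that $\mathbf{\tilde{D}}$ is the normal d--connection (\ref{cdc}) evaluated for $\mathbf{g=\tilde{g}}$, $\tilde{h}_{ij}=\tilde{g}_{ij}$ and $\mathbf{N=\tilde{N}}$. The decisive structural input is the index identification $a=n+i$, $b=n+j$, which forces the vertical coefficients to coincide with the horizontal ones, $L_{\ bk}^{a}\equiv \tilde{L}_{\ jk}^{i}$ and $C_{\ bc}^{a}\equiv \tilde{C}_{\ jc}^{i}$, so that $\mathbf{\tilde{D}}$ is carried by the single pair $(\tilde{L}_{\ jk}^{i},\tilde{C}_{\ jc}^{i})$. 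I would record this collapse first, since it is what ultimately reduces the six curvature components of a generic canonical d--connection to the three stated.

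For the torsion I would substitute the Cartan coefficients into the five components of (\ref{dtors}). Writing out $\tilde{L}_{\ jk}^{i}$ from (\ref{cdc}) and interchanging $j\leftrightarrow k$, the two frame--derivative terms exchange while the term $\mathbf{e}_{h}\tilde{g}_{jk}$ is invariant because $\tilde{g}_{jk}=\tilde{g}_{kj}$; hence $\tilde{T}_{\ jk}^{i}=\tilde{L}_{\ jk}^{i}-\tilde{L}_{\ kj}^{i}=0$, and the identical argument applied to $\tilde{C}_{\ bc}^{a}$ gives $\tilde{T}_{\ bc}^{a}=\tilde{C}_{\ bc}^{a}-\tilde{C}_{\ cb}^{a}=0$. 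The three surviving components $\tilde{T}_{\ jc}^{i}=\tilde{C}_{\ jc}^{i}$, $\tilde{T}_{\ ij}^{a}=\tilde{\Omega}_{\ ij}^{a}$ and $\tilde{T}_{\ ib}^{a}=e_{b}(\tilde{N}_{i}^{a})-\tilde{L}_{\ bi}^{a}$ are then read off directly from the remaining lines of (\ref{dtors}) with $\mathbf{N\rightarrow \tilde{N}}$.

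For the curvature I would feed the Cartan coefficients into all six lines of (\ref{dcurv}). The identification of $h$-- and $v$--coefficients makes the six expressions pair off: $R_{\ hjk}^{i}$ and $R_{\ bjk}^{a}$ merge into $\tilde{R}_{\ hjk}^{i}$, where the term $-\tilde{C}_{\ ha}^{i}\tilde{\Omega}_{\ kj}^{a}$ survives from the Neijenhuis contribution; $R_{\ jbc}^{i}$ and $R_{\ bcd}^{a}$ collapse to $\tilde{S}_{\ bcd}^{a}$; and $R_{\ jka}^{i}$ together with $R_{\ bka}^{c}$ give $\tilde{P}_{\ jka}^{i}$. The first and third pairings are symmetric substitutions. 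The one non--routine step is the $\tilde{P}$ reduction: the general formula carries the extra term $C_{\ jb}^{i}T_{\ ka}^{b}$, and I would show that once the mixed (deflection) torsion is written through $\tilde{T}_{\ ib}^{a}=e_{b}(\tilde{N}_{i}^{a})-\tilde{L}_{\ bi}^{a}$, it combines with the connection acting on the vertical index $a$ and is reabsorbed into the full N--adapted horizontal covariant derivative $\mathbf{\tilde{D}}_{k}\tilde{C}_{\ ja}^{i}$, leaving the compact expression $e_{a}\tilde{L}_{\ jk}^{i}-\mathbf{\tilde{D}}_{k}\tilde{C}_{\ ja}^{i}$.

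I expect the main obstacle to be precisely this bookkeeping in the $\tilde{P}$ component: one must track which coefficient acts on the vertical index inside $\mathbf{\tilde{D}}_{k}$ and verify that the deflection term enters with the correct sign to exactly supply that piece, rather than leaving a spurious residue. Everything else is symmetric substitution, so I would organize the write--up as (i) the coefficient collapse under $a=n+i$, (ii) the torsion computation isolating the vanishing of $\tilde{T}_{\ jk}^{i}$ and $\tilde{T}_{\ bc}^{a}$ by symmetry, and (iii) the six--to--three curvature reduction, displaying only the $\tilde{P}$ identity explicitly since it is the sole computation where cancellation, rather than mere relabeling, is involved.
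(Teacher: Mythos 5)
Your overall strategy is identical to the paper's: the paper's entire ``proof'' is the sentence preceding the theorem (``Using formulas (\ref{dtors}) and (\ref{dcurv}) for N--adapted coefficients of $\mathbf{\tilde{D}}$, we prove...''), i.e.\ exactly the substitution you describe, and your steps (i)--(ii) --- the coefficient collapse under $a=n+i$, the symmetry arguments giving $\tilde{T}_{\ jk}^{i}=0$ and $\tilde{T}_{\ bc}^{a}=0$, and the reading off of the three surviving torsions --- as well as the $\tilde{R}$ and $\tilde{S}$ pairings, are correct.

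The step that will not go through as you describe it is precisely the $\tilde{P}$ reduction. Carry out the two relevant pieces of the specialization of (\ref{dcurv}): the full Cartan h--covariant derivative contributes
\begin{equation*}
-\mathbf{\tilde{D}}_{k}\tilde{C}_{\ ja}^{i}=-\mathbf{\tilde{e}}_{k}\tilde{C}_{\ ja}^{i}-\tilde{L}_{\ mk}^{i}\tilde{C}_{\ ja}^{m}+\tilde{L}_{\ jk}^{m}\tilde{C}_{\ ma}^{i}+\tilde{L}_{\ ak}^{b}\tilde{C}_{\ jb}^{i},
\end{equation*}
while the torsion term contributes $\tilde{C}_{\ jb}^{i}\tilde{T}_{\ ka}^{b}=\tilde{C}_{\ jb}^{i}\left( e_{a}(\tilde{N}_{k}^{b})-\tilde{L}_{\ ak}^{b}\right)$. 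The two $\tilde{L}_{\ ak}^{b}$ pieces indeed cancel, but the survivor is $+\tilde{C}_{\ jb}^{i}\,e_{a}(\tilde{N}_{k}^{b})$: after the cancellation the vertical index of $\tilde{C}_{\ ja}^{i}$ is transported by the Berwald coefficient $e_{a}(\tilde{N}_{k}^{b})$, \emph{not} by the Cartan coefficient $\tilde{L}_{\ ak}^{b}$. So the torsion term is not ``reabsorbed into the full N--adapted horizontal covariant derivative'': if $\mathbf{\tilde{D}}_{k}$ in (\ref{curvcart}) denotes the genuine Cartan derivative acting on all three indices, the correct specialization keeps three terms, $\tilde{P}_{\ jka}^{i}=e_{a}\tilde{L}_{\ jk}^{i}-\mathbf{\tilde{D}}_{k}\tilde{C}_{\ ja}^{i}+\tilde{C}_{\ jb}^{i}\tilde{T}_{\ ka}^{b}$ (the classical Miron--Anastasiei form), and the third term does not vanish, because $\tilde{T}_{\ ib}^{a}=e_{b}(\tilde{N}_{i}^{a})-\tilde{L}_{\ bi}^{a}$ is one of the nontrivial torsions listed in this very theorem (it vanishes only for Landsberg--type spaces). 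The compact two--term formula of the statement is therefore true only under the convention that $\mathbf{\tilde{D}}_{k}$ transports vertical indices with $e_{a}(\tilde{N}_{k}^{b})=\tilde{L}_{\ ak}^{b}+\tilde{T}_{\ ka}^{b}$; your write--up must either prove the three--term identity and state that convention explicitly, or the verification you promise will end with exactly the ``spurious residue'' you were hoping to exclude.
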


A very important property of $\left( \mathbf{\tilde{g};}\ \tilde{h}_{ij}=%
\tilde{g}_{ij},\mathbf{\tilde{D}}\right) $ is that such geometric data can
be encoded equivalently into an almost K\"{a}hler structure \cite{matsumoto}.
  This allows us to perform
deformation quantization and or A--brane quantization of Finsler geometry
and generalizations, see \cite{vdefq,vbraneq}. Such constructions are
important for definition of almost K\"{a}hler spinors and Dirac operators in
Finsler geometry (we do not present details in this work but emphasize that
almost symplectic Finsler structures can be encoded into corresponding
spinor and twistor structures).

\subsubsection{On metric noncompatible Finsler geometries}

Mathematicians elaborated different models of Finsler geometry generated by
a fundamental Finsler function $F(x,y).$ Most known are constructions due to
L. Berwald \cite{berwald} and S. Chern \cite{chern} (see details in \cite%
{bcs}) and "nonstandard" definition for the Ricci curvature by H.
Akbar--Zadeh \cite{akbar}. For instance,

\begin{itemize}
\item the Berwald d--connection is $\ ^{B}\mathbf{D:}=(\ ^{B}L_{\
jk}^{i}=\partial \tilde{N}_{j}^{i}/\partial y^{k},\ ^{B}C_{jc}^{i}=0);$

\item the Chern d--connection is $\ ^{Ch}\mathbf{D:}=(\ ^{Ch}L_{\ jk}^{i}=%
\tilde{L}_{\ jk}^{i},\ ^{Ch}C_{jc}^{i}=0).$
\end{itemize}

The Chern's d--connection is very similar to the Levi--Civita connection,
for geometric constructions on the $h$--subspace. The Finsler geometries
determined by such d--connections are not metric compatible on total space
of $TM$ and characterized by nontrivial nonmetricity fields, $\mathcal{Q}:=%
\mathbf{Dg,}$$\ ^{B}\mathcal{Q}\neq 0$ and $\ ^{Ch}\mathcal{Q}\neq 0$. We
studied various generalizations affine--Finsler and affine--Lagrange spaces
in Part I of \cite{vsgg}. Nontrivial nonmetricity fields (and "nonstandard"
definitions of scalar and Ricci curvatures of Finsler spaces) present, in
general, difficulties for definition of spinors and Dirac type operators,
formulating conservation laws etc, see critical remarks in \cite%
{vcrit,vrflg}. So, there are substantial geometric and physical reasons
to work with Finsler--Cartan type spaces and similar metric compatible
configurations for applications in modern gravity and cosmology.

\subsubsection{Finsler variables in general relativity}

In this section, we show how the Einstein gravity can re--written
equivalently in Finsler like variables.

Let us consider a (pseudo) Riemannian space $\mathbf{V}$ with nonholonomic
2+2 splitting $\mathbf{N}=\{N_{i}^{a}\}$ and d--metric $\mathbf{g}=\{\mathbf{%
g}_{\alpha \beta }\}=\{\underline{g}_{\alpha ^{\prime }\beta ^{\prime }}\},$
which can be written in the form (\ref{m1}) and/or (\ref{metr}) \ and (\ref%
{ansatz}). We can always introduce on a well--defined cart for an atlas
covering $\mathbf{V}$ a homogeneous function $\mathcal{F}(x,y)$ satisfying
the conditions of Definition \ref{fgf} and Remark \ref{rpfl}. Using such a
formal (pseudo) Finsler generating function, we can construct a Sasaki
d--metric of type (\ref{slm}), for $\widetilde{f}_{ij}:=\frac{1}{2}\frac{%
\partial ^{2}\mathcal{F}^{2}}{\partial y^{i}\partial y^{j}}\mathcal{\ }$and $%
\widetilde{\mathcal{N}}_{j}^{a}$ obtained for $F\rightarrow \mathcal{F}$
following formulas (\ref{cncl}) and (\ref{smspr}). With respect to dual
local basis $du^{\alpha }=(dx^{i},dy^{a}),$ such a total metric can be
written in the form
\begin{equation*}
\underline{f}_{\alpha \beta }=\left[
\begin{array}{cc}
\widetilde{f}_{ij}+\widetilde{\mathcal{N}}_{i}^{a}\widetilde{\mathcal{N}}%
_{j}^{b}\widetilde{f}_{ab} & \widetilde{\mathcal{N}}_{j}^{e}\widetilde{f}%
_{ae} \\
\widetilde{\mathcal{N}}_{i}^{e}\widetilde{f}_{be} & \widetilde{f}_{ab}%
\end{array}%
\right] .
\end{equation*}%
Solving a quadratic algebraic equation for $e_{\ \alpha ^{\prime }}^{\alpha
}(u),$ for given values $\underline{g}_{\alpha ^{\prime }\beta ^{\prime }}$
and $\underline{f}_{\alpha \beta }(u),$
\begin{equation}
\underline{g}_{\alpha ^{\prime }\beta ^{\prime }}(u)=e_{\ \alpha ^{\prime
}}^{\alpha }(u)e_{\ \beta ^{\prime }}^{\beta }(u)\underline{f}_{\alpha \beta
}(u),  \label{ftransf}
\end{equation}%
we can re--write connections and tensors on $\mathbf{V,}$ up to
frame/coordinate transforms, in terms of variables $\left( \mathcal{F}:%
\widetilde{\mathbf{f}}\right) $ or $\left( \mathbf{g,N}\right) .$ We may
change the carts and coordinates and $\mathcal{F}$ in order to get real
well--defined solutions for vierbeins $e_{\ \alpha ^{\prime }}^{\alpha }.$

The above constructions depend on arbitrary generating function $\mathcal{F}%
, $ which states a 2+2 splitting via formulas (\ref{cncl}) and (\ref{smspr})
and respective frames (\ref{dder}) and (\ref{ddif}), in their turn admitting
transforms to N--elongated values determined by $N_{i}^{a}$ and/or $%
\widetilde{\mathcal{N}}_{j}^{a}.$ This reflects the principle of general
covariance when some additional nonholonomic constraints are imposed on
frame structure. If a relation (\ref{ftransf}) is established on $\mathbf{V,}
$ we can compute the Levi--Civita connection $\nabla $ using the values $%
\underline{f}_{\alpha \beta }$ and/or, equivalently, $\mathbf{g}_{\alpha
\beta }.$ We can also compute the coefficients of $\mathbf{D}$ (\ref{candcon}%
) and $\widetilde{\mathbf{D}}$ (\ref{cdc}) \ with distortion relations of
type (\ref{distrel}). $\ $All such values are completely determined by $%
\underline{g}_{\alpha ^{\prime }\beta ^{\prime }}$ (equivalently by $%
\underline{f}_{\alpha \beta }$). Technically, it is difficult to solve in
general form the Einstein equations of $\nabla $ written in Finsler like
variable because they contain terms up to forth derivatives of $\mathcal{F}$
etc. Nevertheless, we can use some convenient data $\left( \mathbf{g,N}%
\right) $ in order to find a general solution $\mathbf{g}_{\alpha \beta }$
of the system (\ref{cdeinst}) and to find some variables $\underline{f}%
_{\alpha \beta }$ using (\ref{ftransf}). If the constraints (\ref{lcconstr})
are imposed additionally, we generate solutions in GR. For Finsler
generalizations, we do not have to consider such Levi--Civita conditions.
\begin{conclusion}
\begin{enumerate}
\item Any metric compatible Finsler--Cartan geometry can be modelled as a
nonholonomic Riemann--Cartan geometry with an effective d--torsion completely
determined by the metric and N--connection structures. We do not need
additional algebraic equations as in Einstein--Cartan gravity in order to
find the d--torsion coefficients.

\item Any (pseudo) Riemannian manifold can be equivalently described by
geometric data $(\mathbf{g},\nabla )$, and/or $\left( \mathbf{g},\mathbf{N},%
\mathbf{D}\right) $, and/or, in Finsler like variables, $(\mathcal{F}:%
\widetilde{\mathbf{f}}=\mathbf{g},\widetilde{\mathcal{N}}_{j}^{a},\widetilde{%
\mathbf{D}}).$
\end{enumerate}
\end{conclusion}

\subsection{Conformal transforms and N--connections}

With respect to arbitrary or coordinate frames, it is not a trivial task to
define conformal transforms because of generic anisotropy of spaces enabled
with N--connection structure (in particular, for Finsler--Cartan spaces) and
nonlinear dependence of metric and connections on $N_{i}^{a}$ and/or $%
\widetilde{\mathcal{N}}_{j}^{a}.$ Nevertheless, in N--adapted frames (\ref%
{dder}) and (\ref{ddif}), certain analogy to Riemann--Cartan spaces can be
found.

Let us denote by $\mathbf{D}$ any of metric compatible d--connections (\ref%
{candcon}) or $\mathbf{\tilde{D}}$ (\ref{cdc}). The torsion and curvature
tensors (see N--adapted coefficients (\ref{dtors}) and (\ref{dcurv}) and,
respectively, (\ref{torscdc}) and (\ref{curvcart})) are computed in abstract
index form via%
\begin{equation*}
\mathbf{\Delta }_{\alpha \beta }f=\mathbf{T}_{\ \alpha \beta }^{\gamma }%
\mathbf{D}_{\gamma }f\mbox{ and }\left( \mathbf{\Delta }_{\alpha \beta }-%
\mathbf{T}_{\ \alpha \beta }^{\gamma }\mathbf{D}_{\gamma }\right) \mathbf{V}%
^{\tau }=\mathbf{R}_{\ \alpha \beta \gamma }^{\tau }\mathbf{V}^{\gamma },
\end{equation*}%
for
\begin{equation}
\mathbf{\Delta }_{\alpha \beta }:=\mathbf{D}_{\alpha }\mathbf{D}_{\beta }-%
\mathbf{D}_{\beta }\mathbf{D}_{\alpha }=2\mathbf{D}_{[\alpha }\mathbf{D}%
_{\beta ]}  \label{dcommut}
\end{equation}%
and arbitrary scalar function $f(x,y)$ and d--vector $\mathbf{V}^{\gamma }$
(in this work, we follow a different rule/order of contracting indices than
that in \cite{penrr}).

We can consider a source d--tensor $\mathbf{\Upsilon }_{\alpha \beta
}=-\lambda \mathbf{g}_{\alpha \beta }+8\pi G\mathbf{T}_{\alpha \beta },$
where, for $2+2$ splitting, $\lambda $ and $G$ are respectively the
cosmological and Newton constants (such values can be defined via Sasaki
lifts, for $4+4$ models on tangent bundles). The Einstein equations for $%
\mathbf{D}_{\alpha }$ can be written similarly to (\ref{cdeinst}),%
\begin{equation}
\mathbf{R}_{\alpha \beta }-\frac{1}{2}\mathbf{g}_{\alpha \beta }\ \
_{s}R+\lambda \mathbf{g}_{\alpha \beta }=8\pi G\mathbf{T}_{\alpha \beta },
\label{einst1}
\end{equation}%
where $\mathbf{R}_{\alpha \beta }:=\mathbf{R}_{\ \alpha \beta \gamma
}^{\gamma }$ and$\ _{s}R:=\mathbf{g}^{\alpha \beta }\mathbf{R}_{\alpha \beta
}.$ In the spinor formulation of gravity, there are used
\begin{eqnarray}
&&\ _{s}R:=24\Lambda =4\lambda -8\pi G\mathbf{T}_{\ \tau }^{\tau },
\label{einst2} \\
&&\mathbf{\Phi }_{\alpha \beta }:=3\Lambda \mathbf{g}_{\alpha \beta }-\frac{1%
}{2}\mathbf{R}_{\alpha \beta }=8\pi G(\frac{1}{4}\mathbf{T}_{\ \tau }^{\tau }%
\mathbf{g}_{\alpha \beta }-\mathbf{T}_{\alpha \beta })  \notag
\end{eqnarray}%
and the conformal d--tensor%
\begin{equation}
\mathbf{C}_{\ \alpha \beta }^{\tau \quad \gamma }:=\mathbf{R}_{\ \alpha
\beta }^{\tau \quad \gamma }+2\mathbf{R}_{[\alpha }^{\quad \lbrack \tau
}\delta _{\beta ]}^{\gamma ]}+\frac{1}{3}\ _{s}R\delta _{\lbrack \alpha
}^{\gamma }\delta _{\beta ]}^{\tau }=\mathbf{R}_{\ \alpha \beta }^{\tau
\quad \gamma }+4\mathbf{P}_{[\alpha }^{\quad \lbrack \tau }\delta _{\beta
]}^{\gamma ]}  \label{wtcd}
\end{equation}%
where $\delta _{\beta }^{\gamma }$ is the Kronecker symbol and
\begin{equation}
2\mathbf{P}_{\alpha \beta }=\frac{1}{6}\ _{s}R\mathbf{g}_{\alpha \beta }-%
\mathbf{R}_{\alpha \beta }.  \label{pdt}
\end{equation}
Such d--tensor formulas are related to similar ones for the Levi--Civita
connection $\nabla $ via distortions $\mathbf{D}=\nabla +\mathbf{Q}$ (\ref%
{distrel}),where all values are determined by a corresponding d--metric (\ref%
{m1}) or (\ref{slm}). This results in distortions of d--tensors,%
\begin{eqnarray}
\mathbf{R}_{\alpha \beta } &=&R_{\alpha \beta }+\mathbf{Q}_{\alpha \beta },\
_{s}R=R+\ _{s}Q,  \label{distweil} \\
\mathbf{R}_{\tau \alpha \beta \gamma } &=&R_{\tau \alpha \beta \gamma }+%
\mathbf{Q}_{\tau \alpha \beta \gamma },\ \mathbf{C}_{\tau \alpha \beta
\gamma }=C_{\tau \alpha \beta \gamma }+\ _{W}\mathbf{Q}_{\tau \alpha \beta
\gamma },  \notag
\end{eqnarray}%
were the left label $\ _{W}\mathbf{Q}_{\tau \alpha \beta \gamma }$ is from
the distortion of Weyl's type conformal d--tensor.

\begin{proposition}
\label{prop1}Under conformal transforms of coefficients d--metric (\ref{m1}%
),
\begin{equation}
\widehat{\mathbf{g}}_{\alpha \beta }:=\varpi ^{2}(u)\mathbf{g}_{\alpha \beta
},  \label{conftrdm}
\end{equation}%
preserving the N--connection structure $\mathbf{N}=\{N_{i}^{a}\},$ the
conformal d--tensor (\ref{wtcd}) satisfies the conditions%
\begin{equation*}
\widehat{\mathbf{C}}_{\tau \alpha \beta \gamma }=\varpi ^{2}\mathbf{C}_{\tau
\alpha \beta \gamma }\mbox{ and }\widehat{\mathbf{C}}_{\ \alpha \beta \gamma
}^{\tau }=\mathbf{C}_{\ \alpha \beta \gamma }^{\tau }.
\end{equation*}
\end{proposition}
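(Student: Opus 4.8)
The plan is to reduce the statement to the classical conformal invariance of the Weyl tensor, carried out entirely in the N--adapted frame $(\mathbf{e}_{\alpha },\mathbf{e}^{\beta })$ of (\ref{dder})--(\ref{ddif}), which is \emph{unchanged} by (\ref{conftrdm}) because the N--connection $\mathbf{N}=\{N_{i}^{a}\}$ is preserved. Write $\varpi _{\alpha }:=\mathbf{e}_{\alpha }(\ln \varpi )$ and $\varpi ^{\gamma }:=\mathbf{g}^{\gamma \delta }\varpi _{\delta }$. First I would establish the conformal distortion of the canonical d--connection, namely $\widehat{\mathbf{D}}=\mathbf{D}+\mathbf{Z}$ with
\begin{equation*}
\mathbf{Z}_{\ \alpha \beta }^{\gamma }=\delta _{\alpha }^{\gamma }\varpi _{\beta }+\delta _{\beta }^{\gamma }\varpi _{\alpha }-\mathbf{g}_{\alpha \beta }\varpi ^{\gamma }.
\end{equation*}
This is verified componentwise on the four families $\left( L_{jk}^{i},L_{bk}^{a},C_{jc}^{i},C_{bc}^{a}\right) $ of (\ref{candcon}): since $\mathbf{N}$ is fixed, the term $e_{b}(N_{k}^{a})$ and the $N_{k}^{d}$--dependent pieces of $L_{bk}^{a}$ are inert, while the remaining Christoffel--type pieces rescale exactly as in the (pseudo) Riemannian case, the block--diagonal form $\mathbf{g}=(g_{ij},h_{ab})$ forcing $\delta _{k}^{a}=0$ and $\mathbf{g}_{jc}=0$ to kill the cross terms. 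Confirming this distortion law, and in particular that the $\mathbf{N}$--dependence drops out, is the main technical point.

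Next I would substitute $\mathbf{Z}$ into the curvature definition (\ref{curv}), using the d--commutator (\ref{dcommut}), to obtain
\begin{equation*}
\widehat{\mathbf{R}}_{\ \alpha \beta \gamma }^{\tau }=\mathbf{R}_{\ \alpha \beta \gamma }^{\tau }+\left( \mathbf{D}_{\alpha }\mathbf{Z}_{\ \beta \gamma }^{\tau }-\mathbf{D}_{\beta }\mathbf{Z}_{\ \alpha \gamma }^{\tau }\right) +\left( \mathbf{Z}_{\ \alpha \sigma }^{\tau }\mathbf{Z}_{\ \beta \gamma }^{\sigma }-\mathbf{Z}_{\ \beta \sigma }^{\tau }\mathbf{Z}_{\ \alpha \gamma }^{\sigma }\right) +\mathbf{T}_{\ \alpha \beta }^{\sigma }\mathbf{Z}_{\ \sigma \gamma }^{\tau },
\end{equation*}
the last term being the torsion contribution proper to the nonholonomic setting. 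Contracting the upper index with the last lower one according to $\mathbf{R}_{\alpha \beta }=\mathbf{R}_{\ \alpha \beta \gamma }^{\gamma }$ yields the transformed Ricci d--tensor $\widehat{\mathbf{R}}_{\alpha \beta }$ (which remains nonsymmetric), and then, via (\ref{sdccurv}) together with $\widehat{\mathbf{g}}^{\alpha \beta }=\varpi ^{-2}\mathbf{g}^{\alpha \beta }$, the transformed scalar $\ _{s}\widehat{R}$. Feeding these into (\ref{pdt}) expresses $\widehat{\mathbf{P}}_{\alpha \beta }$ as $\mathbf{P}_{\alpha \beta }$ plus an explicit combination of $\mathbf{D}_{\alpha }\varpi _{\beta }$, $\varpi _{\alpha }\varpi _{\beta }$, $\mathbf{g}_{\alpha \beta }\varpi ^{\gamma }\varpi _{\gamma }$ and torsion terms.

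Finally I would assemble $\widehat{\mathbf{C}}_{\ \alpha \beta \gamma }^{\tau }$ from the $(1,3)$ version of (\ref{wtcd}) and show that the inhomogeneous part of $\widehat{\mathbf{R}}_{\ \alpha \beta \gamma }^{\tau }$ is cancelled precisely by the corresponding change $4(\mathbf{P}-\widehat{\mathbf{P}})$ entering through the antisymmetrized Schouten combination $4\mathbf{P}_{[\alpha }^{\quad \lbrack \tau }\delta _{\beta ]}^{\gamma ]}$; this is the standard Schouten bookkeeping, and it is the reason the trace--adjusting coefficients in (\ref{wtcd})--(\ref{pdt}) are tuned to the total dimension. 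The outcome is $\widehat{\mathbf{C}}_{\ \alpha \beta \gamma }^{\tau }=\mathbf{C}_{\ \alpha \beta \gamma }^{\tau }$, and lowering the upper index with $\widehat{\mathbf{g}}_{\tau \sigma }=\varpi ^{2}\mathbf{g}_{\tau \sigma }$ gives $\widehat{\mathbf{C}}_{\tau \alpha \beta \gamma }=\varpi ^{2}\mathbf{C}_{\tau \alpha \beta \gamma }$. The hard part will be controlling the extra torsion term $\mathbf{T}_{\ \alpha \beta }^{\sigma }\mathbf{Z}_{\ \sigma \gamma }^{\tau }$ and the non--symmetry of $\widehat{\mathbf{R}}_{\alpha \beta }$: I expect to verify that these contributions are themselves of pure--trace (Schouten) form, hence absorbed into $\widehat{\mathbf{P}}$, leaving the totally trace--free conformal d--tensor invariant. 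As an independent check one may instead invoke the distortion relations (\ref{distweil}) together with the known conformal invariance of the Levi--Civita Weyl tensor $C_{\ \alpha \beta \gamma }^{\tau }$.
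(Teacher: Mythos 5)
Your overall strategy --- working entirely in the fixed N--adapted frames (\ref{dder})--(\ref{ddif}) with $\mathbf{N}$ unchanged, and reducing the claim to a distortion-of-connection computation --- is the same route the paper takes; its proof is essentially the assertion that the result ``can be verified by a N--adapted calculus with respect to fixed N--elongated'' frames, so you are supplying the details it omits. But your proof has a genuine gap at its first, and most load-bearing, step: the claimed distortion law $\widehat{\mathbf{D}}=\mathbf{D}+\mathbf{Z}$ with $\mathbf{Z}_{\ \alpha \beta }^{\gamma }=\delta _{\alpha }^{\gamma }\varpi _{\beta }+\delta _{\beta }^{\gamma }\varpi _{\alpha }-\mathbf{g}_{\alpha \beta }\varpi ^{\gamma }$ is false. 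Both $\mathbf{D}$ and $\widehat{\mathbf{D}}$ are d--connections, so each preserves the Whitney splitting (\ref{whitney}) under parallelism; hence their difference must have vanishing mixed blocks, $\mathbf{Z}_{\ jk}^{a}=0$ and $\mathbf{Z}_{\ bc}^{i}=0$. Your formula instead gives $\mathbf{Z}_{\ jk}^{a}=-g_{jk}\varpi ^{a}=-g_{jk}h^{ab}e_{b}(\ln \varpi )$ and $\mathbf{Z}_{\ bc}^{i}=-h_{bc}g^{ij}\mathbf{e}_{j}(\ln \varpi )$, generically nonzero: block-diagonality of the d--metric kills $\mathbf{g}_{jc}$, as you say, but it does not kill $\mathbf{g}_{jk}$ or $\mathbf{g}_{bc}$, and $\varpi ^{\gamma }$ is the \emph{full} gradient. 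Your componentwise verification on the four families of (\ref{candcon}) only tests the blocks where a d--connection has coefficients at all, and so misses precisely the two blocks where the formula fails. The correct distortion is the h/v--projected one: $\delta _{j}^{i}\varpi _{k}+\delta _{k}^{i}\varpi _{j}-g_{jk}\,g^{ir}\varpi _{r}$ horizontally, $\delta _{b}^{a}\varpi _{c}+\delta _{c}^{a}\varpi _{b}-h_{bc}\,h^{ad}\varpi _{d}$ vertically, and zero on all mixed components.

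This is not cosmetic, because everything downstream leans on $\mathbf{Z}$ having exactly the classical four-dimensional form: the ``standard Schouten bookkeeping'' cancels the inhomogeneous curvature terms against contributions built from the full gradient $-\mathbf{g}_{\alpha \beta }\mathbf{D}_{\cdot }\varpi ^{\tau }$. With the true (projected) distortion, the transformations of the mixed curvature families $R_{\ jka}^{i},R_{\ jbc}^{i}$ in (\ref{dcurv}) and of the off-diagonal Ricci blocks $R_{ia},R_{ai}$ in (\ref{dricci}) acquire cross terms (vertical derivatives of horizontal data, and conversely) that have no Riemannian counterpart; moreover the torsion contribution $\mathbf{T}_{\ \alpha \beta }^{\sigma }\mathbf{Z}_{\ \sigma \gamma }^{\tau }$ contains pieces such as $\mathbf{T}_{\ \alpha \beta }^{\tau }\varpi _{\gamma }$ which carry no Kronecker delta and are therefore not of ``pure-trace (Schouten) form''; your plan to absorb them into $\widehat{\mathbf{P}}$ is a hope, not an argument. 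Showing that all such terms nevertheless cancel in the combination (\ref{wtcd})--(\ref{pdt}) is the actual content of the proposition, and it requires the genuinely h/v--split N--adapted calculation the paper gestures at, not an import of the holonomic computation. Your proposed ``independent check'' via (\ref{distweil}) does not close the gap either: it merely trades the claim for the equally unproven statement that the Weyl-type distortion $\ _{W}\mathbf{Q}$ is invariant under (\ref{conftrdm}), and the canonical torsion from which $\ _{W}\mathbf{Q}$ is built is itself not conformally inert (e.g. $\widehat{T}_{\ ja}^{i}=T_{\ ja}^{i}+\delta _{j}^{i}\varpi _{a}$ from (\ref{dtors})).
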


\begin{proof}
Such transforms can be verified by a N--adapted calculus with respect to
fixed N--elongated (\ref{dder}) and (\ref{ddif}). We note here that with
respect to a coordinate frame, for a metric (\ref{metr}) \ with coefficients
(\ref{ansatz}), a transform (\ref{conftrdm}) define a nonlinear transform of
metric. The property of rescalling holds only for the d--metric coefficients
with respect to fixed data $\mathbf{N}=\{N_{i}^{a}\}.$

$\square $
\end{proof}

The Bianchi identities for $\mathbf{D}$,%
\begin{equation}
\mathbf{D}_{[\alpha }\mathbf{R}_{\ \tau \alpha ]\beta \gamma }=0,\mbox{
or }\mathbf{D}^{\tau }\ \widehat{\mathbf{C}}_{\gamma \tau \alpha \beta }=-2%
\mathbf{D}_{[\beta }\mathbf{P}_{\gamma ]\alpha },  \label{bianchid}
\end{equation}%
are standard ones with possible $h$- and $v$--projections \cite{vsgg}.

\begin{theorem}
For any fixed data $\left( \mathbf{g}_{\alpha \beta },N_{i}^{a}\right) ,$
there is a nonholonomic deformation to some $\left( \mathbf{g}_{\alpha
^{\prime }\beta ^{\prime }},N_{i^{\prime }}^{a^{\prime }}\right) $ for which
$\mathbf{C}_{\tau ^{\prime }\alpha ^{\prime }\beta ^{\prime }\gamma ^{\prime
}}=0$ with respect to a re-defined $\mathbf{e}^{\beta ^{\prime
}}=(e^{j^{\prime }},\mathbf{e}^{b^{\prime }})$ (\ref{ddif}).
\end{theorem}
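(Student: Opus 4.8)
The plan is to combine the conformal invariance of Proposition \ref{prop1} with the frame freedom encoded in the transform (\ref{ftransf}), reducing the vanishing of the conformal d--tensor to the elementary fact that flat data carry no Weyl curvature. First I would observe that, by Proposition \ref{prop1}, the mixed--index conformal d--tensor $\mathbf{C}_{\ \alpha \beta \gamma }^{\tau }$ is invariant under any N--connection--preserving rescaling $\widehat{\mathbf{g}}_{\alpha \beta }=\varpi ^{2}\mathbf{g}_{\alpha \beta }$. Hence the condition $\mathbf{C}=0$ is a conformally invariant property of the deformed data, and it suffices to reach, in the re--defined coframe $\mathbf{e}^{\beta ^{\prime }}$, a d--metric that is conformal to one with vanishing curvature d--tensor (\ref{dcurv}).

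Second, I would prescribe the target data to carry an integrable nonlinear connection, $\Omega ^{\prime }=0$. By (\ref{anhrel}) the anholonomy coefficients then vanish, the coframe (\ref{ddif}) becomes locally exact, and by the integrable reduction (\ref{lccond}) the canonical d--connection of the target coincides with the Levi--Civita connection of $\mathbf{g}^{\prime }$; in the straightened coordinates the conformal d--tensor $\mathbf{C}_{\tau ^{\prime }\alpha ^{\prime }\beta ^{\prime }\gamma ^{\prime }}$ is then just the ordinary Weyl tensor of $\mathbf{g}^{\prime }$. Choosing $\mathbf{g}_{\alpha ^{\prime }\beta ^{\prime }}^{\prime }=\varpi ^{2}\eta _{\alpha ^{\prime }\beta ^{\prime }}$ with $\eta $ a constant (pseudo--)Euclidean form, this Weyl tensor vanishes identically, and together with the invariance of the first step this yields $\mathbf{C}_{\tau ^{\prime }\alpha ^{\prime }\beta ^{\prime }\gamma ^{\prime }}=0$.

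Third, I would realize the deformation connecting the fixed $\left( \mathbf{g}_{\alpha \beta },N_{i}^{a}\right) $ to this target by solving the quadratic vielbein equation (\ref{ftransf}), $\underline{g}_{\alpha ^{\prime }\beta ^{\prime }}=e_{\ \alpha ^{\prime }}^{\alpha }e_{\ \beta ^{\prime }}^{\beta }\underline{f}_{\alpha \beta }$, where $\underline{f}$ is the coordinate form (\ref{ansatz}) of the conformally flat target. Pointwise this is the congruence of two nondegenerate symmetric forms of matching signature, so a nondegenerate $e_{\ \alpha ^{\prime }}^{\alpha }(u)$ exists locally and may be chosen to depend smoothly on $u$; the block decomposition (\ref{vbt}) of $e_{\ \alpha ^{\prime }}^{\alpha }$ then reads off the deformed $N_{i^{\prime }}^{a^{\prime }}$ and the re--defined frame (\ref{ddif}), so that the constructed pair $\left( \mathbf{g}_{\alpha ^{\prime }\beta ^{\prime }},N_{i^{\prime }}^{a^{\prime }}\right) $ is an admissible nonholonomic deformation respecting the splitting (\ref{whitney}).

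The hard part will be guaranteeing that the vielbein (\ref{ftransf}) can be taken \emph{real}, of the prescribed Lorentzian signature, and \emph{simultaneously} compatible with the integrability demand $\Omega ^{\prime }=0$ and with the Whitney--sum structure (\ref{whitney}); this is in essence a local existence statement. Upgrading it to a globally well--defined deformation on $\mathbf{V}$ is exactly where one must impose the additional anholonomic constraints of the type discussed after Theorem \ref{thneq}, so I would phrase the conclusion as a local result whose globalization is controlled by such nonholonomic conditions.
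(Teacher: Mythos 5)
Your overall strategy coincides with the paper's: use Proposition \ref{prop1} to reduce the claim to exhibiting target data whose d--metric is an N--adapted conformal rescaling of one with vanishing curvature d--tensor, and then connect the given $\left(\mathbf{g}_{\alpha\beta},N_{i}^{a}\right)$ to that target by pointwise congruence of quadratic forms (your equation (\ref{ftransf}), the paper's arbitrary frame transforms with vierbeins (\ref{vbt})). The genuine gap is in your second step. First, $\Omega^{\prime}=0$ does \emph{not} make the anholonomy coefficients in (\ref{anhrel}) vanish: integrability kills only $W_{ji}^{a}=\Omega_{ij}^{a}$, while $W_{ia}^{b}=\partial_{a}N_{i}^{b}$ survives, so the coframe (\ref{ddif}) need not be locally exact and there are no "straightened coordinates" unless you additionally arrange $\partial_{a}N_{i}^{b}=0$. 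Second, and more seriously, $\Omega^{\prime}=0$ alone does not give (\ref{lccond}): that condition also requires $\mathcal{T}=0$, i.e. the constraints (\ref{lcconstr}) $C_{jb}^{i}=0$ and $L_{aj}^{c}=e_{a}(N_{j}^{c})$. For your target $\widehat{\mathbf{g}}=\varpi^{2}\eta$ these fail whenever $\varpi$ depends on the fiber coordinates: from (\ref{candcon}) one gets $C_{\ jb}^{i}=\varpi^{-1}(e_{b}\varpi)\,\delta_{j}^{i}\neq 0$, hence by (\ref{dtors}) the torsion $T_{\ jb}^{i}=C_{\ jb}^{i}\neq 0$ and $\mathbf{D}\neq\nabla$. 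Consequently your identification of the conformal d--tensor with the ordinary Weyl tensor of $\mathbf{g}^{\prime}$ is unjustified: by (\ref{distweil}) the two differ by the distortion $\ _{W}\mathbf{Q}_{\tau\alpha\beta\gamma}$, and the chain "integrable $\Rightarrow$ Levi--Civita $\Rightarrow$ ordinary Weyl vanishes" breaks.

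The repair is exactly the paper's route: do not pass through $\nabla$ at all. Take the target d--metric with \emph{constant} coefficients $\eta_{\alpha\beta}$ with respect to an N--adapted coframe whose coefficients satisfy, for instance, $e_{b^{\prime}}(N_{i^{\prime}}^{a^{\prime}})=0$ (this still permits a nonintegrable structure, $\Omega^{\prime}\neq 0$); then all coefficients (\ref{candcon}) vanish, so the curvature d--tensor (\ref{dcurv}) and hence $\mathbf{C}[\eta]$ vanish, and Proposition \ref{prop1} gives $\mathbf{C}[\varpi^{2}\eta]=0$ with no appeal to torsion--freeness. Note also that insisting on $\mathbf{D}=\nabla$ works against the very content of the theorem: as Conclusion \ref{concla} stresses, the point is that $\mathbf{C}_{\tau\alpha\beta\gamma}=0$ for the d--connection $\mathbf{D}$ coexists with $C_{\tau\alpha\beta\gamma}\neq 0$ for $\nabla$, which is impossible if the two connections are forced to coincide on the target. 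Your first and third steps (conformal invariance of the mixed--index tensor; pointwise solvability of (\ref{ftransf}) and reading off $N_{i^{\prime}}^{a^{\prime}}$ from the block form (\ref{vbt}), with honest caveats about realness, signature and globalization) are sound and agree with the paper's argument.
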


\begin{proof}
Let us fix a d--metric (\ref{m1}) with coefficients $\mathbf{g}_{\alpha
^{\prime }\beta ^{\prime }}:=\varpi ^{2}(u)\mathbf{\eta }_{\alpha \beta }$
with $\mathbf{\eta }_{\alpha \beta }$ being diagonal constants of any
necessary signature $(\pm 1,\pm ,...,\pm ),$ with respect to some $\mathbf{e}%
^{\beta ^{\prime }}=(e^{j^{\prime }}=dx^{j^{\prime }},\mathbf{e}^{b^{\prime
}}=dy^{b^{\prime }}+N_{i^{\prime }}^{b^{\prime }}dx^{i^{\prime }}).$ For
such a d--metric and N--adapted co--bases, we can verify that $\mathbf{C}%
_{\tau ^{\prime }\alpha ^{\prime }\beta ^{\prime }\gamma ^{\prime }}=0,$ as
a consequence of Proposition \ref{prop1}. We can redefine data ( (\ref{m1}),$%
\mathbf{g}_{\alpha ^{\prime }\beta ^{\prime }})$ in a coordinate form (\ref%
{metr}) \ with coefficients (\ref{ansatz}) (with primed indices, $g_{%
\underline{\alpha }^{\prime }\underline{\beta }^{\prime }}$). Then
considering arbitrary frame transforms $e_{\ \underline{\alpha }}^{%
\underline{\alpha }^{\prime }}$ we can compute $g_{\underline{\alpha }%
\underline{\beta }}=e_{\ \underline{\alpha }}^{\underline{\alpha }^{\prime
}}e_{\ \underline{\beta }}^{\underline{\beta }^{\prime }}g_{\underline{%
\alpha }^{\prime }\underline{\beta }^{\prime }}.$ Finally, we can re--define
for a nonholonomic $2+2,$ or $4+4,$ splitting certain data $\left( \mathbf{g}%
_{\alpha \beta },N_{i}^{a}\right) ,$ for which, in general, $\mathbf{C}%
_{\tau \alpha \beta \gamma }\neq 0,$ and the corresponding to $\nabla
,C_{\tau \alpha \beta \gamma }\neq 0.$ Such construction with nonholonomic
deformations are possible because vierbeins (\ref{vbt}) may depend on some $%
N $--coefficients which can be present also in the generic off--diagonal
form of "primary" metric. The transformation laws of d--objects on
nonholonomic manifolds with N--connection are different from those on usual
manifolds without N--connection splitting (\ref{whitney}).

$\square $

\begin{conclusion}
\label{concla}An arbitrary (pseudo) Riemannian spacetime $V$ with metric
structure $\mathbf{g}=\{g_{\alpha \beta }\}$ is not conformally flat, i.e. $%
C_{\tau \alpha \beta \gamma }\neq 0,$ for $\nabla .$ Nevertheless, we can
always associate a nonholonomic manifold $\mathbf{V}$ enabled with the same
metric structure but with such a N--connection $\mathbf{N}$ when the
corresponding canonical d--connection $\mathbf{D}$ is with zero Weyl
d--tensor $\mathbf{C}_{\tau \alpha \beta \gamma }=0$ (we omit priming of
indices).
\end{conclusion}

The above values $C_{\tau \alpha \beta \gamma }$ and $\mathbf{C}_{\tau
\alpha \beta \gamma }$ are related in unique form by distortions (\ref%
{distrel}) for a unique $\mathbf{D}=\nabla +\mathbf{Q}$ (\ref{distrel}).
Such constructions depend on prescribed distribution $\mathbf{N.}$ They do
not violate a principle of general covariance on $V$, or $\mathbf{V.}$ We
can prescribe a necessary type distribution, adapted all constructions to
N--splitting, and then re--define everything in arbitrary systems of
reference.
\end{proof}

Finally, we note that similar statements can be formulated, up to some frame
transforms (\ref{ftransf}), for the cases when $(\mathbf{g,N,D})\rightarrow (%
\mathbf{\tilde{g},\tilde{N},\tilde{D}}),$ i.e. for a Finsler--Cartan space,
or any such variables on a (pseudo) Riemannian manifold.

\section{Finsler--Cartan Spinors and Einstein gravity}

\label{snfs}

Spinor and twistor geometries for data $\left( \mathbf{g},\mathbf{N},\mathbf{%
D}\right) $ and N--adapted frames can be elaborated \cite{vspinor,vhep,vstavr,vfncr,vsgg}
similarly to those for $(\mathbf{g},\nabla )$ in arbitrary frames of
reference \cite{penrr,manin,ward,huggett}.  The
concept of distinguished spinor, d--spinor, was introduced as a couple of $h$%
-- and $v$--spinors derived for a N--connection splitting (\ref{whitney}%
), see a brief summary in sections 2.2 and 3.1 of \cite{vfncr}.

\subsection{Spinors and N--connections}

We provide main definitions and introduce an abstract index formalism
adapted nonholonomic manifolds/bundles with $n+m$ splitting

\subsubsection{Clifford N--adapted structures and spin d--connections}

\begin{definition}
We define a Clifford d--algebra as a\ $ \wedge V^{n+m}$ algebra determined by a
product $\mathbf{u}\mathbf{v}+\mathbf{v}\mathbf{u}=2\mathbf{g}(\mathbf{u},%
\mathbf{v})\ \mathbb{I}$, with associated h--, v--products{\small
\begin{equation*}
\ ^{h}u~^{h}v+~^{h}v~^{h}u=2~^{h}g(u,v)\ ^{h}\mathbb{I},\ ^{v}u\ \
~^{v}v+~^{v}v\ \ ~^{v}u=2\ ^{v}h(~^{v}u,\ \ ~^{v}v) \ ^{v}\mathbb{I},
\end{equation*}%
} for any $\mathbf{u}=(~^{h}u,\ ~^{v}u),\ \mathbf{v}=(~^{h}v,\ ~^{v}v)\in
V^{n+m},$ where $\mathbb{I},$ $\ ~^{h}\mathbb{I}\ $\ and $^{v}\mathbb{I}\ \ $%
\ are unity matrices of corresponding dimensions $(n+m)\times (n+m),$ or $%
n\times n$ and $m\times m.$ \footnote{%
in certain cases, we shall consider only \textquotedblright
horizontal\textquotedblright\ geometric constructions if they are similar to
\textquotedblright vertical\textquotedblright\ ones}
\end{definition}

Any metric $^{h}g$ on $h\mathbf{V}$ is defined by sections of $T~h\mathbf{V}
$ provided with a bilinear symmetric form on continuous sections $Sec(T~h%
\mathbf{V}).$ We can define Clifford h--algebras $~^{h}\mathcal{C}l(T_{x}h%
\mathbf{V}),$ $\gamma _{i}\gamma _{j}+\gamma _{j}\gamma _{i}=2\ g_{ij}~^{h}%
\mathbb{I},\ $\ in any point $x\in T~h\mathbf{V}.$

The Clifford d--module of a vector bundle ${E}$ (in general, we can consider
a complex vector bundle $~^{E}\pi :\ E\rightarrow \mathbf{V}$) is defined by
the $C(\mathbf{V})$--module $Sec({E})$ of continuous sections in ${E},$ $c:\
Sec(~^{N}\mathcal{C}l(\mathbf{V}))\rightarrow End(Sec({E}))$. Prescribing a
N--connection structure, a Clifford N--anholonomic bundle on $\mathbf{V}$ is
by definition $~^{N}\mathcal{C}l(\mathbf{V})\doteq ~^{N}\mathcal{C}l(T^{\ast
}\mathbf{V}),$ where $T^{\ast }$ is the dual tangent bundle.

\begin{definition}
A Clifford d--space associated to data $\ \mathbf{g}(x,y)$ (\ref{m1}) and $%
\mathbf{N}$ for a nonholonomic manifold $~\mathbf{V}$ is defined as a
Clifford bundle
\begin{equation*}
~\mathcal{C}l(\mathbf{V})=~^{h}\mathcal{C}l(h\mathbf{V})\oplus ~^{v}\mathcal{%
C}l(v\mathbf{V}),
\end{equation*}
with Clifford h--space, $~^{h}\mathcal{C}l(h\mathbf{V})\doteq ~^{h}\mathcal{C%
}l(T^{\ast }h\mathbf{V}),$ and Clifford v--space,\newline
$^{v}\mathcal{C}l(v\mathbf{V})\doteq ~^{v}\mathcal{C}l(T^{\ast }v\mathbf{V}%
). $
\end{definition}

Let $V^{n}$ be a vector space provided with Clifford structure. We write $%
^{h}V^{n}$ if \ its tangent space is provided with a quadratic form $\ ^{h}g$
and consider $~^{h}\mathcal{C}l(V^{n})\equiv \mathcal{C}l(~^{h}V^{n})$ using
the subgroup $SO(\ ^{h}V^{n})\subset O(\ ^{h}V^{n}).$ A standard definition
of spinors is possible using sections of a vector bundle $S$ on a manifold $%
M $ being considered an irreducible representation of the group $%
Spin(M)\doteq Spin(T_{x}^{\ast }M)$ defined on the typical fiber. The set of
sections $Sec(S)$ defines an irreducible Clifford module.

The space of complex h--spins is defined by the subgroup $\ $%
\begin{equation*}
^{h}Spin^{c}(n)\equiv Spin^{c}(\ ^{h}V^{n})\equiv \
^{h}Spin^{c}(V^{n})\subset \mathcal{C}l(\ ^{h}V^{n}),
\end{equation*}%
determined by the products of pairs of vectors $w\in \ ^{h}V^{\mathbb{C}}$
when $w\doteq pu$ where $p$ is a complex number of module 1 and $u$ is of
unity length in $\ ^{h}V^{n}.$ Similar constructions can be performed for
the v--subspace $~^{v}V^{m},$ which allows us to define similarly the group
of real v--spins. A h--spinor bundle $\ ^{h}S$ on a h--space $h\mathbf{V}$
is a complex vector bundle with both defined action of the h--spin group $\
^{h}Spin(V^{n})$ on the typical fiber and an irreducible representation of
the group $\ ^{h}Spin(\mathbf{V})\equiv Spin(h\mathbf{V})\doteq
Spin(T_{x}^{\ast }h\mathbf{V}).$ The set of sections $Sec(\ ^{h}S)$ defines
an irreducible Clifford h--module.

\begin{definition}
\label{ddsp} A distinguished spinor (d--spinor) bundle $\mathbf{S}\doteq (\
\ ^{h}S,\ \ \ ^{v}S)$ for $\mathbf{V},$ $\ dim\mathbf{V}=n+m,$ is a complex
vector bundle with an action of the spin distinguished (d--group) $Spin\
\mathbf{V}\doteq Spin(V^{n})\oplus Spin(V^{m})$ with an irreducible
representation $Spin(\mathbf{V})\doteq Spin(T^{\ast }\mathbf{V}).$ The set
of sections $Sec(\mathbf{S})=Sec(\ \ ^{h}{S})\oplus Sec(\ \ \ ^{v}{S})$ is
an irreducible Clifford d--module.
\end{definition}

The considerations presented above provide a proof for\

\begin{theorem}
\label{mr1}Any d--metric and N--con\-nec\-ti\-on structures define naturally
the fundamental geometric objects and structures (such as the Clifford
h--module, v--module and Clifford d--modules,or the h--spin, v--spin
structures and d--spinors) for the corresponding nonholonomic spin manifold
and/or N--anholo\-nom\-ic spinor (d--spinor) manifold.
\end{theorem}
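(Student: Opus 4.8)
The plan is to assemble the statement as a synthesis of the fibre-wise Clifford and spin constructions recorded above, the organizing principle being that the N--connection furnishes the Whitney splitting (\ref{whitney}), so that each fibre of $T\mathbf{V}$ decomposes orthogonally with respect to the d--metric (\ref{m1}). First I would fix an N--adapted frame $\mathbf{e}_{\alpha}=(\mathbf{e}_{i},e_{a})$ of type (\ref{dder}), in which the d--metric reads $\mathbf{g}=g_{ij}\,e^{i}\otimes e^{j}+h_{ab}\,\mathbf{e}^{a}\otimes\mathbf{e}^{b}$. Because $\mathbf{D}$ is metric compatible and preserves under parallelism the decomposition $T\mathbf{V}=h\mathbf{V}\oplus v\mathbf{V}$, the restrictions ${}^{h}g=\{g_{ij}\}$ and ${}^{v}h=\{h_{ab}\}$ are nondegenerate symmetric bilinear forms on the h-- and v--subspaces. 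Applying the Clifford d--algebra relation $\mathbf{u}\mathbf{v}+\mathbf{v}\mathbf{u}=2\mathbf{g}(\mathbf{u},\mathbf{v})\mathbb{I}$ fibre-wise then produces ${}^{h}\mathcal{C}l(h\mathbf{V})$ and ${}^{v}\mathcal{C}l(v\mathbf{V})$ together with their direct sum $\mathcal{C}l(\mathbf{V})={}^{h}\mathcal{C}l(h\mathbf{V})\oplus{}^{v}\mathcal{C}l(v\mathbf{V})$, which is exactly the Clifford d--space of the preceding definition. I note that the nonholonomy (\ref{ncurvat}) plays no role here, since this step is purely pointwise-algebraic.

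The second step passes from Clifford algebras to spin groups and spinor bundles. From the orthogonal groups $SO({}^{h}V^{n})$ and $SO({}^{v}V^{m})$ one obtains the (real or complexified) spin groups $Spin(V^{n})$, $Spin(V^{m})$ and their sum $Spin\,\mathbf{V}\doteq Spin(V^{n})\oplus Spin(V^{m})$; an irreducible representation of $Spin(\mathbf{V})\doteq Spin(T^{\ast}\mathbf{V})$ on the typical fibre then yields, by Definition~\ref{ddsp}, the h--spinor bundle ${}^{h}S$, the v--spinor bundle ${}^{v}S$ and the d--spinor bundle $\mathbf{S}=({}^{h}S,{}^{v}S)$, whose sections $Sec(\mathbf{S})=Sec({}^{h}S)\oplus Sec({}^{v}S)$ form an irreducible Clifford d--module. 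Since every object in this chain is built from $g_{ij}$, $h_{ab}$ and the splitting alone, \emph{naturality} of the assignment $(\mathbf{g},\mathbf{N})\mapsto(\mathcal{C}l(\mathbf{V}),Spin\,\mathbf{V},\mathbf{S})$ is immediate, and the same words apply verbatim to the Finsler--Cartan data $(\mathbf{\tilde g},\mathbf{\tilde N})$ after the frame transform (\ref{ftransf}).

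The hard part --- and the only point requiring genuine verification rather than bookkeeping --- is well-definedness under change of N--adapted frame. An arbitrary vierbein transform mixes h-- and v--indices and would destroy the $h\oplus v$ structure, so I would restrict to the subgroup of transforms $\mathbf{e}_{\alpha'}=e_{\ \alpha'}^{\alpha}\mathbf{e}_{\alpha}$ preserving the splitting (\ref{whitney}), i.e. block-diagonal in the $(i,a)$ labels, and check that these lift through $SO\to Spin$ to compatible transition functions for ${}^{h}S$ and ${}^{v}S$ separately. Here one must also invoke the standard topological spin (or $Spin^{c}$) existence hypothesis, so that the lifts of the orthonormal transition cocycle exist globally; granting this, the locally defined Clifford d--modules glue to the asserted global d--spinor bundle, which completes the argument and establishes the theorem as a corollary of the definitions laid out above.
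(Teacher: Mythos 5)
Your proposal is correct and takes essentially the same approach as the paper: the paper's own proof of Theorem \ref{mr1} consists precisely of the remark that "the considerations presented above" --- the definitions of the Clifford d--algebra, the Clifford d--space $\mathcal{C}l(\mathbf{V})={}^{h}\mathcal{C}l(h\mathbf{V})\oplus{}^{v}\mathcal{C}l(v\mathbf{V})$, and the d--spinor bundle of Definition \ref{ddsp} --- assemble into the asserted structures, which is exactly your synthesis. Your additional care about restricting to splitting--preserving frame transforms and invoking the topological spin (or $Spin^{c}$) existence hypothesis makes explicit what the paper leaves implicit, but does not alter the argument.
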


We consider a Hilbert space of finite dimension and denote a local dual
coordinate basis $e^{\underline{i}}\doteq dx^{\underline{i}}$ on $h\mathbf{V.%
}$\ In N--adapted form, it is possible to introduce certain classes of
orthonormalized vielbeins and the N--adapted vielbeins, $e^{\hat{\imath}%
}\doteq e_{\ \underline{i}}^{\hat{\imath}}(x,y)\ e^{\underline{i}}$ and $%
e^{i}\doteq e_{\ \underline{i}}^{i}(x,y)\ e^{\underline{i}},$ when $g^{%
\underline{i}\underline{j}}\ e_{\ \underline{i}}^{\hat{\imath}}e_{\
\underline{j}}^{\hat{\jmath}}=\delta ^{\hat{\imath}\hat{\jmath}}$ and $g^{%
\underline{i}\underline{j}}\ e_{\ \underline{i}}^{i}e_{\ \underline{j}%
}^{j}=g^{ij}.$ This allows us to define the algebra of Dirac's gamma
h--matrices with self--adjoint matrices $M_{k}(\mathbb{C}),$ where $%
k=2^{n/2} $ is the dimension of the irreducible representation of $\mathcal{C%
}l(h\mathbf{V})$ \ derived from the relation $\gamma ^{\hat{\imath}}\gamma ^{%
\hat{\jmath}}+\gamma ^{\hat{\jmath}}\gamma ^{\hat{\imath}}=2\delta ^{\hat{%
\imath}\hat{\jmath}}\ ^{h}\mathbb{I}.$ The action of $dx^{i}\in \mathcal{C}%
l(h\mathbf{V})$ on a spinor $\ ^{h}\psi \in \ ^{h}S$ can be parameterized by
formulas $\ ^{h}c(dx^{\hat{\imath}})\doteq \gamma ^{\hat{\imath}}$ and $\
^{h}c(dx^{i})\ \ ^{h}\psi \doteq \gamma ^{i}\ \ ^{h}\psi \equiv e_{\ \hat{%
\imath}}^{i}\ \gamma ^{\hat{\imath}}\ \ ^{h}\psi$. The algebra of Dirac's
gamma v--matrices is defined by self--adjoint matrices $M_{k}^{\prime }(%
\mathbb{C}),$ where $k^{\prime }=2^{m/2}$ is the dimension of the
irreducible representation of $\mathcal{C}l(F),$ for a typical fiber $F),$
when $\gamma ^{\hat{a}}\gamma ^{\hat{b}}+\gamma ^{\hat{b}}\gamma ^{\hat{a}%
}=2\delta ^{\hat{a}\hat{b}}\ ^{v}\mathbb{I}.$ The action of $dy^{a}\in
\mathcal{C}l(F)$ on a spinor $\ ^{v}\psi \in \ ^{v}S$ is $\ ^{v}c(dy^{\hat{a}%
})\doteq \gamma ^{\hat{a}}$ and $\ ^{v}c(dy^{a})\ ^{v}\psi \doteq \gamma
^{a}\ ^{v}\psi \equiv e_{\ \hat{a}}^{a}\ \gamma ^{\hat{a}}\ ^{v}\psi$.

In general, a matrix calculus with gamma d--matrices can be elaborated for a
total d--metric structure $\mathbf{g}=~^{h}g\oplus ~^{v}h.$ We consider
d--spinors $\breve{\psi}\doteq (~^{h}\psi ,\ ~^{v}\psi )\in \mathbf{S}\doteq
(~^{h}S,\ ~^{v}S)$ and d--gamma matrix relations $\ \gamma ^{\hat{\alpha}%
}\gamma ^{\hat{\beta}}+\gamma ^{\hat{\beta}}\gamma ^{\hat{\alpha}}=2\delta ^{%
\hat{\alpha}\hat{\beta}}\mathbb{\ I}.$ The action of $du^{\alpha }\in
\mathcal{C}l(\mathbf{V})$ on a d--spinor $\breve{\psi}\in \ \mathbf{S}$
resulting in distinguished irreducible representations $\mathbf{c}(du^{\hat{%
\alpha}})\doteq \gamma ^{\hat{\alpha}}$ and
\begin{equation*}
\mathbf{c}=(du^{\alpha })\ \breve{\psi}\doteq \gamma ^{\alpha }\ \breve{\psi}%
\equiv e_{\ \hat{\alpha}}^{\alpha }\ \gamma ^{\hat{\alpha}}\ \breve{\psi}.
\end{equation*}%
We obtain d--metric -- d--gamma matrix relations
\begin{equation*}
\gamma ^{\alpha }(u)\gamma ^{\beta }(u)+\gamma ^{\beta }(u)\gamma ^{\alpha
}(u)=2g^{\alpha \beta }(u)\ \mathbb{I},
\end{equation*}%
which can re--written for "boldface" coefficients of metric. In irreducible
form $\breve{\gamma}\doteq ~^{h}\gamma \oplus \ ~^{v}\gamma $ and $\breve{%
\psi}\doteq ~^{h}\psi \oplus ~^{v}\psi ,$ or, $\gamma ^{\alpha }\doteq
(~^{h}\gamma ^{i},~^{v}\gamma ^{a})$ and $\breve{\psi}\doteq (~^{h}\psi ,\
~^{v}\psi ).$

The spin connection $~_{S}\nabla $ for (pseudo) Riemannian manifolds is
standardly determined by the Levi--Civita connection, $~_{S}\nabla \doteq d-%
\frac{1}{4}\ \Gamma _{\ jk}^{i}\gamma _{i}\gamma ^{j}\ dx^{k}.$ Similar
constructions are possible for nonholonomic manifolds enabled with metric
compatible d--connections (for instance, in Finsler--Cartan geometry). The
spin d--connection operators $~_{\mathbf{S}}\mathbf{\nabla }$ can be
similarly constructed from any metric compatible d--connection ${\mathbf{%
\Gamma }}_{\ \beta \mu }^{\alpha }$ (for instance, with coefficients (\ref%
{candcon}), or (\ref{cdc})), when for a scalar function $f(x,y)$ in the form
\begin{equation*}
\delta f=\left( \mathbf{e}_{\nu }f\right) \delta u^{\nu }=\left( \mathbf{e}%
_{i}f\right) dx^{i}+\left( e_{a}f\right) \delta y^{a},
\end{equation*}%
for $\delta u^{\nu }=\mathbf{e}^{\nu }$ (\ref{ddif}).

\begin{definition}
\label{spdcon}The canonical (Finsler--Cartan) spin d--connection is defined
by $\mathbf{D=\{\mathbf{\Gamma }_{\ \beta \mu }^{\alpha }\}}$ ($\mathbf{%
\tilde{D}=\{\mathbf{\tilde{\Gamma}}_{\ \beta \mu }^{\alpha }\}}$) following
formula
\begin{equation}
~_{\mathbf{S}}\mathbf{D}\doteq \delta -\frac{1}{4}\ \mathbf{\Gamma }_{\
\beta \mu }^{\alpha }\gamma _{\alpha }\gamma ^{\beta }\mathbf{e}^{\mu }\ (~_{%
\mathbf{S}}\mathbf{\tilde{D}}\doteq \delta -\frac{1}{4}\ \mathbf{\tilde{%
\Gamma}}_{\ \beta \mu }^{\alpha }\gamma _{\alpha }\gamma ^{\beta }\mathbf{e}%
^{\mu }).  \notag
\end{equation}
\end{definition}

For the purposes of this work, we shall consider abstract index formulations
of d--spinor calculus for nonholonomic manifolds/bundles with splitting $2+2,
$ or $4+4.$

\subsubsection{Abstract d--tensor and d--spinor indices}

Indices of d--tensors are considered as a set of labels which can changed
into respective sets of d--spinor indices, primed and unprimed (with dots
and without dots) following, for instance, such rules: $\Psi _{\quad \mu
}^{\alpha \beta }=\Psi _{\qquad \quad \dot{M}\dot{M}^{\prime }}^{\dot{A}\dot{%
A}^{\prime }\dot{B}\dot{B}^{\prime }}$, where dot spinor capital indices
correspond small Greek tensor indices. For $h$- and/or $v$--decompositions,
when $\xi ^{\alpha }=(\xi ^{i},\xi ^{a}),$ we shall write $\xi ^{\dot{A}\dot{%
A}^{\prime }}=(\xi ^{II^{\prime }},\xi ^{AA^{\prime }}),$ where $\xi
^{II^{\prime }}$ is for a horizontal spinor--vector and $\xi ^{AA^{\prime }}$
is for vertical spinor--vector. In similar forms, we can consider $h$- and $%
v $- and spinor decompositions for forms and tensors with mixed indices. So,
we shall follow the formalism from \cite{penrr} but re--defined in a form to
be able to encode spinorially d--tensors with possible N--adapted splitting.
 Primed spinor indices are complex conjugated with
corresponding unprimed, for instance, $\overline{\xi ^{\dot{A}\dot{A}%
^{\prime }}}=\overline{\xi }^{\dot{A}^{\prime }\dot{A}},\overline{\xi
^{II^{\prime }}}=\overline{\xi }^{I^{\prime }I}$ etc both for up and low
indices.

We can consider antisymmetric $\varepsilon $--spinors on total spaces and $h$%
- and $v$-subspaces with the properties,%
\begin{eqnarray*}
\varepsilon ^{\dot{A}^{\prime }\dot{B}^{\prime }} &:&=\overline{\varepsilon }%
^{\dot{A}^{\prime }\dot{B}^{\prime }}=\overline{\varepsilon ^{\dot{A}\dot{B}}%
},\varepsilon _{\dot{A}^{\prime }\dot{B}^{\prime }}:=\overline{\varepsilon }%
_{\dot{A}^{\prime }\dot{B}^{\prime }}=\overline{\varepsilon _{\dot{A}\dot{B}}%
}; \\
\varepsilon ^{I^{\prime }J^{\prime }} &:&=\overline{\varepsilon }^{I^{\prime
}J^{\prime }}=\overline{\varepsilon ^{IJ}},\varepsilon _{I^{\prime
}J^{\prime }}:=\overline{\varepsilon }_{I^{\prime }J^{\prime }}=\overline{%
\varepsilon _{IJ}}\mbox{ \ for h--spinor indices}; \\
\varepsilon ^{A^{\prime }B^{\prime }} &:&=\overline{\varepsilon }^{A^{\prime
}B^{\prime }}=\overline{\varepsilon ^{AB}},\varepsilon _{A^{\prime
}B^{\prime }}:=\overline{\varepsilon }_{A^{\prime }B^{\prime }}=\overline{%
\varepsilon _{AB}}\mbox{ \ for v--spinor indices}.
\end{eqnarray*}%
This is related to the rules of transforming low indices into up ones, and
inversely, using metrics and/or d--metrics, for instance, $\mathbf{g}%
_{\alpha \beta }=[g_{ij},g_{ab}]$ and $\mathbf{g}^{\alpha \beta
}=[g^{ij},g^{ab}].$ In brief, the spinor decompositions of metrics are
written in the form,%
\begin{eqnarray}
\mathbf{g}_{\alpha \beta } &=&\varepsilon _{\dot{A}\dot{B}}\varepsilon _{%
\dot{A}^{\prime }\dot{B}^{\prime }},\mathbf{g}^{\alpha \beta }=\varepsilon ^{%
\dot{A}\dot{B}}\varepsilon ^{\dot{A}^{\prime }\dot{B}^{\prime }};
\label{dmetrspindec} \\
g_{ij} &=&\varepsilon _{IJ}\varepsilon _{I^{\prime }J^{\prime
}},g^{ij}=\varepsilon ^{IJ}\varepsilon ^{I^{\prime }J^{\prime }},%
\mbox{ \ for
h--metrics};  \notag \\
g_{ab} &=&\varepsilon _{AB}\varepsilon _{A^{\prime }B^{\prime
}},g^{ab}=\varepsilon ^{AB}\varepsilon ^{A^{\prime }B^{\prime }},%
\mbox{ \ for
h--metrics}.  \notag
\end{eqnarray}%
In our works \cite{vspinor,vhep,vstavr,vfncr,vsgg}, we used also N--adapted
gamma matrices generating corresponding Clifford algebras for spinors (some
authors call them $\sigma $--symbols or transition indices from Minkowski
tetrads to spin systems of reference, on corresponding tangent bundles). In
brief, such a formalism is related to orthonormalized (co) bases, $\mathbf{e}%
_{\alpha ^{\prime }}=(\mathbf{e}_{i^{\prime }},e_{a^{\prime }})$ and $%
\mathbf{e}_{\ }^{\beta ^{\prime }}=(e^{j^{\prime }},\mathbf{e}_{\
}^{b^{\prime }}),$ where d--tensor primed indices are used for definition of
$4\times 4$ $\gamma $--matrices $\gamma _{\alpha ^{\prime }}=(\gamma
_{\alpha ^{\prime }}^{\dot{A}\dot{A}^{\prime }})$ \ satisfying the relations
\begin{equation}
\gamma _{\alpha ^{\prime }}\gamma _{\beta ^{\prime }}+\gamma _{\beta
^{\prime }}\gamma _{\alpha ^{\prime }}=2\eta _{\alpha ^{\prime }\beta
^{\prime }},  \label{gammamatr}
\end{equation}%
where the Minkowski metric $\eta _{\alpha ^{\prime }\beta ^{\prime }}$ is,
for instance, of signature $(+++-)$ for a formal $2+2$ splitting\footnote{%
on tangent bundles to Lorentz manifolds of dimension 8=4+4, we can use one
such a gamma relation for the h--subspace and another one for the v--subspace%
}. Using transforms of type $\mathbf{e}_{\alpha ^{\prime }}^{\ \underline{%
\alpha }}=\mathbf{e}_{\alpha ^{\prime }}^{\ \alpha }\mathbf{e}_{\alpha }^{\
\underline{\alpha }}$ and their inverse, we can write
\begin{equation*}
\gamma _{\alpha }\gamma _{\beta }+\gamma _{\beta }\gamma _{\alpha }=2\mathbf{%
g}_{\alpha \beta }\mbox{ and }\gamma _{\underline{\alpha }}\gamma _{%
\underline{\beta }}+\gamma _{\underline{\beta }}\gamma _{\underline{\alpha }%
}=2\mathbf{g}_{\underline{\alpha }\underline{\beta }}
\end{equation*}%
where d--metric (\ref{m1}) and, respectively, (\ref{metr}) and (\ref{ansatz}%
), including N--coefficients, are considered for $\mathbf{e}_{\alpha }=%
\mathbf{e}_{\alpha }^{\ \underline{\alpha }}\partial _{\underline{\alpha }}$
and $\mathbf{e}_{\ }^{\beta }=\mathbf{e}_{\ \underline{\beta }}^{\beta }du^{%
\underline{\beta }}$ with decompositions of type (\ref{nadtetr}) and (\ref%
{vbt}). With $\gamma $--matrices, for instance the first relation in (\ref%
{dmetrspindec}) is written
\begin{equation}
\mathbf{g}_{\alpha \beta }=\gamma _{\alpha }^{\dot{A}\dot{A}^{\prime
}}\gamma _{\beta }^{\dot{B}\dot{B}^{\prime }}\varepsilon _{\dot{A}\dot{B}%
}\varepsilon _{\dot{A}^{\prime }\dot{B}^{\prime }}  \label{gammametrd}
\end{equation}%
for $\gamma _{\alpha ^{\prime }}^{\dot{A}\dot{A}^{\prime }}:=\mathbf{e}%
_{\alpha ^{\prime }}^{\ \alpha }\gamma _{\alpha }^{\dot{A}\dot{A}^{\prime }}$
(we omit similar decompositions for $h$- and $v$--indices). For simplicity,
in this work we shall follow abstract algebraic decompositions not writing
gamma matrices even formulas of type (\ref{gammametrd}) are necessary for
constructing in explicit form exact generic off--diagonal solutions with
nontrivial N--coefficients of Einstein--Dirac systems.

\subsection{N--adapted spinors and nonholonomic (Finsler) gravity}

\subsubsection{N--adapted covariant derivatives and spin coefficients}

In brief, we shall write the d--spinor equivalents as
\begin{equation*}
\mathbf{D}_{\alpha }=\mathbf{D}_{\dot{A}\dot{A}^{\prime }}=\mathbf{D}_{\dot{A%
}^{\prime }\dot{A}},D_{i}=D_{II^{\prime }}=D_{I^{\prime
}I},D_{a}=D_{AA^{\prime }}=D_{A^{\prime }A}
\end{equation*}%
etc. Fixing spin diads $\varepsilon _{\dot{A}}^{\ \underline{\dot{A}}%
}=(\varepsilon _{\dot{A}}^{\ 0},\varepsilon _{\dot{A}}^{\ 1}),\varepsilon
_{I}^{\ \underline{I}}=(\varepsilon _{I}^{\ 0},\varepsilon _{I}^{\
1}),\varepsilon _{A}^{\ \underline{A}}=(\varepsilon _{A}^{\ 0},\varepsilon
_{A}^{\ 1})$, and theirs respective duals, $\varepsilon _{\underline{\dot{A}}%
}^{\ \dot{A}}, $ $\varepsilon _{\underline{I}}^{\ I},\varepsilon _{%
\underline{A}}^{\ A},$\footnote{%
in our approach the underlined indices are equivalent to "boldface" indices
in \cite{penrr}} we can introduce N--adapted d--spin coefficients, {\small
\begin{eqnarray*}
&&\gamma _{\underline{\dot{A}}\underline{\dot{A}}^{\prime }\underline{\dot{C}%
}}^{\qquad \underline{\dot{B}}} :=\varepsilon _{\dot{A}}^{\ \underline{\dot{B%
}}}\mathbf{D}_{\underline{\dot{A}}\underline{\dot{A}}^{\prime }}\varepsilon
_{\underline{\dot{C}}}^{\ \dot{A}}=-\varepsilon _{\underline{\dot{C}}}^{\
\dot{A}}\mathbf{D}_{\underline{\dot{A}}\underline{\dot{A}}^{\prime
}}\varepsilon _{\dot{A}}^{\ \underline{\dot{B}}}, \\
&&\gamma _{\underline{I}\underline{I}^{\prime }\underline{K}}^{\qquad
\underline{J}} :=\varepsilon _{I}^{\ \underline{J}}\mathbf{D}_{\underline{I}%
\underline{I}^{\prime }}\varepsilon _{\underline{K}}^{\ I}=-\varepsilon _{%
\underline{I}}^{\ I}\mathbf{D}_{\underline{I}\underline{I}^{\prime
}}\varepsilon _{I}^{\ \underline{J}}, \gamma _{\underline{A}\underline{A}%
^{\prime }\underline{C}}^{\qquad \underline{B}} :=\varepsilon _{A}^{\
\underline{B}}\mathbf{D}_{\underline{A}\underline{A}^{\prime }}\varepsilon _{%
\underline{C}}^{\ A}=-\varepsilon _{\underline{A}}^{\ A}\mathbf{D}_{%
\underline{A}\underline{A}^{\prime }}\varepsilon _{A}^{\ \underline{B}},
\end{eqnarray*}%
} which are equivalent to the spin d--connection from Definition \ref{spdcon}%
. This way, we can introduce a canonical and/or Cartan type null--tetradic
type calculus etc. For simplicity, we shall omit in the future spinor $h$-
and $v$--index formulas if that will not result in ambiguities or lost of
some important properties.

\subsubsection{Spinor d--curvature and Bianchi identities}

Following a N--adapted d--spinor calculus with abstract indices for
nonholonomic splitting $2+2$ (for simplicity, we shall consider "dot" spinor
indices; only some examples for Finsler--Cartan configurations will be considered), we prove:

\begin{theorem}
\label{thb1}In d--spinor variables,

\begin{itemize}
\item the canonical d--commutator (\ref{dcommut}) (d--torsion (\ref{dtors}))
is
\begin{eqnarray*}
\mathbf{\Delta }_{\alpha \beta } &=&\varepsilon _{\dot{A}\dot{B}}\mathbf{%
\square }_{\dot{A}^{\prime }\dot{B}^{\prime }}+\varepsilon _{\dot{A}^{\prime
}\dot{B}^{\prime }}\square _{\dot{A}\dot{B}}, \\
\mathbf{\Delta }_{ij} &=&\varepsilon _{IJ}\square _{I^{\prime }J^{\prime
}}+\varepsilon _{I^{\prime }J^{\prime }}\square _{IJ},,\mathbf{\Delta }%
_{ia}=\varepsilon _{IA}\square _{I^{\prime }A^{\prime }}+\varepsilon
_{I^{\prime }A^{\prime }}\square _{IA},...
\end{eqnarray*}%
where $\square _{\dot{A}\dot{B}}:=\mathbf{D}_{\dot{A}^{\prime }(\dot{A}}%
\mathbf{D}_{\dot{B})}^{\dot{A}^{\prime }}$ and $\mathbf{\square }_{\dot{A}%
^{\prime }\dot{B}^{\prime }}:=\mathbf{D}_{\dot{A}(\dot{A}^{\prime }}\mathbf{D%
}_{\dot{B}^{\prime })}^{\dot{A}}$ etc;

\item the spinor d--commutators acting on a d--spinor $\varkappa _{\dot{C}}$
result in%
\begin{eqnarray*}
\square _{\dot{A}\dot{B}}\varkappa _{\dot{C}} &=&[\Psi _{\dot{C}\dot{T}\dot{A%
}\dot{B}}+\Lambda (\varepsilon _{\dot{A}\dot{C}}\varepsilon _{\dot{B}\dot{T}%
}+\varepsilon _{\dot{A}\dot{T}}\varepsilon _{\dot{B}\dot{C}})]\varkappa ^{%
\dot{T}}, \\
\square _{\dot{A}^{\prime }\dot{B}^{\prime }}\varkappa _{\dot{C}} &=&\Phi _{%
\dot{C}\dot{T}\dot{A}^{\prime }\dot{B}^{\prime }}\varkappa ^{\dot{T}};
\end{eqnarray*}

\item the Riemann d--tensor (\ref{dcurv}) is {\small
\begin{eqnarray*}
&&\mathbf{R}_{\ \tau \gamma \alpha \beta }=\Psi _{\dot{T}\dot{C}\dot{A}\dot{B%
}}\varepsilon _{\dot{A}^{\prime }\dot{B}^{\prime }}\varepsilon _{\dot{T}%
^{\prime }\dot{C}^{\prime }}+\overline{\Psi }_{\dot{T}^{\prime }\dot{C}%
^{\prime }\dot{A}^{\prime }\dot{B}^{\prime }}\varepsilon _{\dot{A}\dot{B}%
}\varepsilon _{\dot{T}\dot{C}}+\Phi _{\dot{T}^{\prime }\dot{C}^{\prime }\dot{%
A}\dot{B}}\varepsilon _{\dot{A}^{\prime }\dot{B}^{\prime }}\varepsilon _{%
\dot{T}\dot{C}} \\
&&+\overline{\Phi }_{\dot{T}\dot{C}\dot{A}^{\prime }\dot{B}^{\prime
}}\varepsilon _{\dot{A}\dot{B}}\varepsilon _{\dot{T}^{\prime }\dot{C}%
^{\prime }}+2\Lambda (\varepsilon _{\dot{A}\dot{C}}\varepsilon _{\dot{B}\dot{%
T}}\varepsilon _{\dot{A}^{\prime }\dot{C}^{\prime }}\varepsilon _{\dot{B}%
^{\prime }\dot{T}^{\prime }}-\varepsilon _{\dot{A}\dot{T}}\varepsilon _{\dot{%
B}\dot{C}}\varepsilon _{\dot{A}^{\prime }\dot{T}^{\prime }}\varepsilon _{%
\dot{B}^{\prime }\dot{C}^{\prime }}),
\end{eqnarray*}%
} for $\Lambda =\overline{\Lambda }=\frac{1}{24}\ _{s}R;$

\item the Weyl conformal d--tensor (\ref{wtcd}) splits into anti--selfdual,%
\begin{equation*}
\ ^{-}\mathbf{C}_{\ \alpha \beta \gamma \tau }:=\Psi _{\dot{A}\dot{B}\dot{C}%
\dot{T}}\varepsilon _{\dot{A}^{\prime }\dot{B}^{\prime }}\varepsilon _{\dot{C%
}^{\prime }\dot{T}^{\prime }},
\end{equation*}%
and selfdual parts,%
\begin{equation*}
\ ^{+}\mathbf{C}_{\ \alpha \beta \gamma \tau }:=\overline{\Psi }_{\dot{A}%
^{\prime }\dot{B}^{\prime }\dot{C}^{\prime }\dot{T}^{\prime }}\varepsilon _{%
\dot{A}\dot{B}}\varepsilon _{\dot{C}\dot{T}},
\end{equation*}%
when $\mathbf{C}_{\ \alpha \beta \gamma \tau }=\ ^{-}\mathbf{C}_{\ \alpha
\beta \gamma \tau }+\ ^{+}\mathbf{C}_{\ \alpha \beta \gamma \tau }$.
\end{itemize}
\end{theorem}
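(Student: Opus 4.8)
The plan is to translate each tensorial object into its d--spinor equivalent using the metric decomposition (\ref{dmetrspindec}) and then exploit the irreducible decomposition of spinor--index--symmetric operators, exactly as in the Levi--Civita spinor calculus of \cite{penrr}, but with $\nabla$ replaced by the metric compatible canonical d--connection $\mathbf{D}$ and all frames taken N--adapted. First I would establish the first bullet: writing the abstract index $\alpha=\dot{A}\dot{A}'$ and using that $\mathbf{\Delta}_{\alpha\beta}$ is antisymmetric in $\alpha\beta$, I invoke the standard spinor identity that an object antisymmetric in a pair of composite indices $\dot{A}\dot{A}',\dot{B}\dot{B}'$ splits into a part symmetric in the unprimed indices times $\varepsilon_{\dot{A}'\dot{B}'}$ plus a part symmetric in the primed indices times $\varepsilon_{\dot{A}\dot{B}}$. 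Since $\mathbf{D}$ is metric compatible we have $\mathbf{D}\varepsilon=0$ in the appropriate sense, so the symmetric operators $\square_{\dot{A}\dot{B}}:=\mathbf{D}_{\dot{A}'(\dot{A}}\mathbf{D}_{\dot{B})}^{\dot{A}'}$ and $\mathbf{\square}_{\dot{A}'\dot{B}'}$ are well defined and yield the claimed form. The h-- and v--projections follow by restricting the composite indices to the ranges $\dot{A}\dot{A}'\to II'$ and $\dot{A}\dot{A}'\to AA'$, consistently with the Whitney splitting (\ref{whitney}).

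Next I would compute the action of $\square_{\dot{A}\dot{B}}$ on an arbitrary d--spinor $\varkappa_{\dot{C}}$. Because $\square_{\dot{A}\dot{B}}$ is symmetric in $\dot{A}\dot{B}$ and annihilates the $\varepsilon$--spinors (metric compatibility), it acts as an algebraic curvature operator: the result is linear in $\varkappa^{\dot{T}}$ with coefficient a spinor $X_{\dot{C}\dot{T}\dot{A}\dot{B}}$ symmetric in $\dot{A}\dot{B}$. Decomposing $X$ into its totally symmetric part and its $\varepsilon$--trace parts produces precisely the totally symmetric Weyl spinor $\Psi_{\dot{C}\dot{T}\dot{A}\dot{B}}$ and the scalar $\Lambda$, giving the first relation of the second bullet. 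The mixed operator $\mathbf{\square}_{\dot{A}'\dot{B}'}$ acting on the unprimed $\varkappa_{\dot{C}}$ can produce neither a totally symmetric four--spinor nor a pure trace in the unprimed indices, so only the mixed term $\Phi_{\dot{C}\dot{T}\dot{A}'\dot{B}'}$ survives, as stated. The identification $\Lambda=\frac{1}{24}\ _{s}R$ then follows by tracing and comparing with (\ref{sdccurv}).

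For the Riemann d--tensor I would insert the spinor forms of $\square_{\dot{A}\dot{B}}$ and $\mathbf{\square}_{\dot{A}'\dot{B}'}$ into the definition of $\mathbf{R}_{\ \alpha\beta\gamma}^{\tau}$ implicit in (\ref{dcommut}) and (\ref{dcurv}), expand every index into its dotted pair, and collect terms. Using $\mathbf{\Delta}_{\alpha\beta}=\varepsilon_{\dot{A}\dot{B}}\mathbf{\square}_{\dot{A}'\dot{B}'}+\varepsilon_{\dot{A}'\dot{B}'}\square_{\dot{A}\dot{B}}$ together with the single-- and mixed--operator actions yields the four curvature blocks $\Psi,\overline{\Psi},\Phi,\overline{\Phi}$ and the $\Lambda$ terms in exactly the displayed combination; reality of the metric forces the primed blocks to be the complex conjugates of the unprimed ones. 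Finally, the conformal d--tensor (\ref{wtcd}) is by construction the trace--free part of $\mathbf{R}$, so subtracting the $\Phi$ and $\Lambda$ contributions leaves only the $\Psi$--dependent terms; grouping these into $\Psi_{\dot{A}\dot{B}\dot{C}\dot{T}}\varepsilon_{\dot{A}'\dot{B}'}\varepsilon_{\dot{C}'\dot{T}'}$ and its conjugate gives the anti--selfdual/selfdual split and the identity $\mathbf{C}=\ ^{-}\mathbf{C}+\ ^{+}\mathbf{C}$.

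The main obstacle is the nontrivial d--torsion (\ref{dtors}) carried by $\mathbf{D}$. Unlike the Levi--Civita case, the curvature d--tensor need not satisfy the pair symmetry $\mathbf{R}_{\tau\gamma\alpha\beta}=\mathbf{R}_{\alpha\beta\tau\gamma}$ nor the torsion--free first Bianchi identity, so a priori extra irreducible pieces could appear beyond $\Psi,\Phi,\Lambda$. The key point to verify carefully is that metric compatibility alone ($\mathbf{D}\mathbf{g}=0$, hence $\mathbf{D}\varepsilon=0$) already guarantees that $\square_{\dot{A}\dot{B}}$ remains symmetric in $\dot{A}\dot{B}$ and still annihilates the $\varepsilon$--spinors, which is exactly what forces the stated irreducible content; the torsion corrections are then absorbed into the N--adapted definitions of the curvature spinors rather than generating new terms. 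The secondary, purely bookkeeping difficulty is maintaining a consistent h-- and v--index labelling throughout so that every $\varepsilon$--contraction respects the N--connection splitting (\ref{whitney}); this is why the statement restricts, for simplicity, to the ``dot'' (total) spinor indices, with the h--v projections recorded separately.
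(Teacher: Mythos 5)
Your overall strategy---transcribing the Penrose--Rindler irreducible spinor decomposition with $\nabla$ replaced by the canonical d--connection $\mathbf{D}$ in N--adapted frames---is the same strategy the paper itself uses (its ``proof'' is essentially the one--sentence declaration preceding the theorem), and your first bullet is sound: the splitting $\mathbf{\Delta}_{\alpha\beta}=\varepsilon_{\dot{A}\dot{B}}\mathbf{\square}_{\dot{A}'\dot{B}'}+\varepsilon_{\dot{A}'\dot{B}'}\square_{\dot{A}\dot{B}}$ is pure two--spinor algebra for any object antisymmetric in $\alpha\beta$. The genuine gap sits exactly at the point you flag and then wave away. First, with nonvanishing d--torsion the raw commutator does \emph{not} act algebraically on spinor fields: by the paper's own Ricci identity $\left(\mathbf{\Delta}_{\alpha\beta}-\mathbf{T}_{\ \alpha\beta}^{\gamma}\mathbf{D}_{\gamma}\right)\mathbf{V}^{\tau}=\mathbf{R}_{\ \alpha\beta\gamma}^{\tau}\mathbf{V}^{\gamma}$, the operator $\mathbf{\Delta}_{\alpha\beta}$ of (\ref{dcommut})---and hence its symmetrized pieces $\square_{\dot{A}\dot{B}}$---contains the first--order term $\mathbf{T}_{\ \alpha\beta}^{\gamma}\mathbf{D}_{\gamma}$. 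Symmetry in $\dot{A}\dot{B}$ together with $\square_{\dot{A}\dot{B}}\varepsilon_{\dot{C}\dot{T}}=0$ does not make an operator algebraic; only the torsion--subtracted combination acts through curvature alone. So your inference ``symmetric and annihilates $\varepsilon$, hence an algebraic curvature operator'' is false as stated; the operators in the second bullet must be understood as built from $\mathbf{\Delta}_{\alpha\beta}-\mathbf{T}_{\ \alpha\beta}^{\gamma}\mathbf{D}_{\gamma}$ (this is what the theorem's parenthetical reference to the d--torsion (\ref{dtors}) encodes), and your proof never performs this subtraction.

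Second, and more seriously, metric compatibility alone does \emph{not} force the irreducible content $\{\Psi,\Phi,\Lambda\}$ with $\Lambda=\overline{\Lambda}$. It yields a curvature spinor $X_{\dot{C}\dot{T}\dot{A}\dot{B}}$ symmetric in $\dot{C}\dot{T}$ and in $\dot{A}\dot{B}$, but the classical reduction of such an object to a totally symmetric $\Psi$ plus a real scalar $\Lambda$ uses the pair--interchange symmetry $\mathbf{R}_{\tau\gamma\alpha\beta}=\mathbf{R}_{\alpha\beta\tau\gamma}$ and the first Bianchi identity, both of which fail for a torsionful connection. Generically a symmetric two--spinor piece $\sigma_{\dot{A}\dot{B}}$ survives (together with an imaginary part of $\Lambda$ and a non--hermitian part of $\Phi$), and this is not an exotic possibility here: these extra pieces encode precisely the antisymmetric part of the Ricci d--tensor, which the paper itself emphasizes is nonzero for $\mathbf{D}$ (``the Ricci d--tensor $\mathbf{Ric}$, which is nonsymmetric because of nontrivial nonholonomic/torsion structure''). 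Hence your sentence ``the torsion corrections are then absorbed into the N--adapted definitions of the curvature spinors rather than generating new terms'' is exactly the assertion that requires proof, and as a general claim it is wrong: one must either impose conditions of type (\ref{lccond}), or restrict to configurations with symmetric Ricci d--tensor, or enlarge the list of curvature d--spinors before the displayed formula for $\mathbf{R}_{\ \tau\gamma\alpha\beta}$ and the clean selfdual/anti--selfdual split of the Weyl d--tensor can be asserted. The paper's own proof glosses over the same point, but your proposal converts the omission into an explicit false step.
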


Similar N--adapted $2+2$ decompositions can be computed for $\tilde{D}_{i}=%
\tilde{D}_{II^{\prime }}=\tilde{D}_{I^{\prime }I},\tilde{D}_{a}=\tilde{D}%
_{AA^{\prime }}=\tilde{D}_{A^{\prime }A}.$ For instance, the components of the
Finsler--Cartan curvature (\ref{curvcart}) can be written {\small
\begin{eqnarray}
\tilde{R}_{\ ijkh} &=&\tilde{\Psi}_{IJKH}\varepsilon _{I^{\prime }J^{\prime
}}\varepsilon _{K^{\prime }H^{\prime }}+\overline{\tilde{\Psi}}_{I^{\prime
}J^{\prime }K^{\prime }H^{\prime }}\varepsilon _{IJ}\varepsilon _{KH}+\tilde{%
\Phi}_{IJK^{\prime }H^{\prime }}\varepsilon _{I^{\prime }J^{\prime
}}\varepsilon _{KH}+  \label{curvfcsp} \\
&&\overline{\tilde{\Phi}}_{I^{\prime }J^{\prime }KH}\varepsilon
_{IJ}\varepsilon _{K^{\prime }H^{\prime }}+2\Lambda (\varepsilon
_{IK}\varepsilon _{JH}\varepsilon _{I^{\prime }K^{\prime }}\varepsilon
_{J^{\prime }H^{\prime }}-\varepsilon _{IH}\varepsilon _{JK}\varepsilon
_{IH}\varepsilon _{JK}),  \notag \\
\tilde{P}_{\ ijka} &=&\tilde{\Psi}_{IJKA}\varepsilon _{I^{\prime }J^{\prime
}}\varepsilon _{K^{\prime }A^{\prime }}+\overline{\tilde{\Psi}}_{I^{\prime
}J^{\prime }K^{\prime }A^{\prime }}\varepsilon _{IJ}\varepsilon _{KA}+\tilde{%
\Phi}_{IJK^{\prime }A^{\prime }}\varepsilon _{I^{\prime }J^{\prime
}}\varepsilon _{KA}+  \notag \\
&&\overline{\tilde{\Phi}}_{I^{\prime }J^{\prime }KA}\varepsilon
_{IJ}\varepsilon _{K^{\prime }A^{\prime }}+2\Lambda (\varepsilon
_{IK}\varepsilon _{JA}\varepsilon _{I^{\prime }K^{\prime }}\varepsilon
_{J^{\prime }A^{\prime }}-\varepsilon _{IA}\varepsilon _{JK}\varepsilon
_{IA}\varepsilon _{JK}),  \notag \\
&&...  \notag
\end{eqnarray}%
} We note that we can apply the formalism from \cite{penrr} for any $h$- and
$v$--values (with not "dot" indices) and, in general form for small Greek
d--tensor indices, with "dot" indices), if we work in N--adapted frames and
keep in mind that Finsler like d--connections $\mathbf{D}_{\alpha },$ or $%
\mathbf{\tilde{D}}_{\alpha },$ are with torsions completely determined by
data $(\mathbf{g,N}).$ Following such rules, we obtain proofs for

\begin{theorem}
\label{thb2}

\begin{itemize}
\item The Bianchi identities (\ref{bianchid}) transform into
\begin{equation*}
\mathbf{D}_{\dot{B}^{\prime }}^{\dot{A}}\Psi _{\dot{A}\dot{B}\dot{C}\dot{T}}=%
\mathbf{D}_{(\dot{B}}^{\dot{A}^{\prime }}\Phi _{\dot{C}\dot{T})\dot{A}%
^{\prime }\dot{B}^{\prime }},\ \mathbf{D}^{\dot{C}\dot{A}^{\prime }}\Phi _{%
\dot{C}\dot{T}\dot{A}^{\prime }\dot{B}^{\prime }}=-3\mathbf{D}_{\dot{T}\dot{B%
}^{\prime }}\Lambda ;
\end{equation*}

\item the Einstein d--equations (\ref{deinsteq}), (\ref{cdeinst}) and/or (%
\ref{einst1}) transform the first identity into\newline $\mathbf{D}_{\dot{B}^{\prime
}}^{\dot{A}}\Psi _{\dot{A}\dot{B}\dot{C}\dot{T}}=8\pi G\mathbf{D}_{(\dot{B}%
}^{\dot{A}^{\prime }}\mathbf{T}_{\dot{C}\dot{T})\dot{A}^{\prime }\dot{B}%
^{\prime }}$, see also (\ref{einst2});

\item the vacuum field equations for locally anisotropic models with
nontrivial cosmological constant $\lambda =6\Lambda $ are
\begin{equation*}
\Phi _{\dot{C}\dot{T}\dot{A}^{\prime }\dot{B}^{\prime }}=0,\mathbf{D}_{\dot{A%
}\dot{B}^{\prime }}\Psi _{\dot{A}\dot{B}\dot{C}\dot{T}}=0;
\end{equation*}

\item in Cartan--Finsler gravity models with $2+2$ splitting, similar
formulas hold for $\mathbf{D\rightarrow \tilde{D};}$

\item all equations from this theorem transform into similar ones for the
Levi--Civita connection $\nabla $ if and only if $\mathbf{\Delta }_{\alpha
\beta }=0,$ or $\mathbf{\tilde{\Delta}}_{\alpha \beta }=0,$ see (\ref%
{dcommut}), which is equivalent to (\ref{lccond}).
\end{itemize}
\end{theorem}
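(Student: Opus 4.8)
The plan is to transcribe each tensorial identity already established for $\mathbf{D}$ into the N--adapted d--spinor dyad calculus, exploiting the curvature decomposition of Theorem \ref{thb1} together with metric compatibility $\mathbf{D}\mathbf{g}=0$ (equivalently $\mathbf{D}\varepsilon_{\dot A\dot B}=0$ on the $\varepsilon$--spinors). Because $\mathbf{D}$ is metric compatible, the symplectic $\varepsilon$--spinors are covariantly constant, so raising, lowering and contraction of spinor indices commute with $\mathbf{D}_{\dot A\dot A'}$ exactly as in the Levi--Civita theory of \cite{penrr}. This is the structural fact that lets us reuse the Penrose--Rindler manipulations essentially verbatim, the only difference being that the abstract $\mathbf{D}_\alpha$ now carries a torsion completely fixed by $(\mathbf{g},\mathbf{N})$.

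First I would take the contracted Bianchi identity (\ref{bianchid}) and substitute the spinor form of $\mathbf{R}_{\tau\gamma\alpha\beta}$ from Theorem \ref{thb1}. Splitting $\mathbf{D}_\alpha=\mathbf{D}_{\dot A\dot A'}$ and projecting onto the symmetric and antisymmetric parts in the appropriate pairs of dotted/undotted indices separates the statement into its $\Psi$--part and its $\Phi$--$\Lambda$--part. Symmetrizing over the triple $\dot B\dot C\dot T$ and performing the $\varepsilon$--contractions collapses the identity to the pair $\mathbf{D}^{\dot A}_{\dot B'}\Psi_{\dot A\dot B\dot C\dot T}=\mathbf{D}^{\dot A'}_{(\dot B}\Phi_{\dot C\dot T)\dot A'\dot B'}$ and $\mathbf{D}^{\dot C\dot A'}\Phi_{\dot C\dot T\dot A'\dot B'}=-3\mathbf{D}_{\dot T\dot B'}\Lambda$, which is the first bullet. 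The second and third bullets are then immediate specializations: inserting the spinor Einstein relations (\ref{einst2}), which express $\Phi$ and $\Lambda$ through $\mathbf{T}_{\alpha\beta}$, turns the right--hand side into $8\pi G\,\mathbf{D}^{\dot A'}_{(\dot B}\mathbf{T}_{\dot C\dot T)\dot A'\dot B'}$; setting $\mathbf{T}_{\alpha\beta}=0$ (so $\Phi=0$ and $\Lambda=\lambda/6$ constant) leaves the source--free $\mathbf{D}_{\dot A\dot B'}\Psi_{\dot A\dot B\dot C\dot T}=0$. The fourth bullet requires no new work: $\tilde{\mathbf{D}}$ is likewise metric compatible, so the identical argument applies with the Cartan curvature components (\ref{curvfcsp}) in place of (\ref{dcurv}).

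For the final bullet, the distortion relation $\mathbf{D}=\nabla+\mathbf{Q}$ of Corollary \ref{corol1} shows every $\mathbf{D}$--identity above differs from its $\nabla$--counterpart only through $\mathbf{Q}$, equivalently through the torsion $\mathcal{T}$. The definition (\ref{dcommut}) together with $\mathbf{\Delta}_{\alpha\beta}f=\mathbf{T}^\gamma_{\alpha\beta}\mathbf{D}_\gamma f$ gives $\mathbf{\Delta}_{\alpha\beta}=0$ on scalars if and only if $\mathbf{T}^\gamma_{\alpha\beta}=0$; by (\ref{dtors}) and (\ref{lccond}) this is exactly $\mathbf{Q}=0$, i.e. $\mathbf{D}=\nabla$. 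Hence all the spinor equations collapse to the standard Penrose--Rindler system precisely under $\mathbf{\Delta}_{\alpha\beta}=0$, and the same argument applies to $\tilde{\mathbf{D}}$.

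I expect the main obstacle to be the bookkeeping of torsion in the Bianchi identity itself. The unprojected identity (\ref{bianchid}) is written in the familiar antisymmetrized form, but for a d--connection with nonzero $\mathcal{T}$ the cyclic Bianchi identity acquires explicit torsion terms of the schematic type $\mathbf{D}\mathcal{T}+\mathcal{T}\mathcal{T}$; the real check is that, once everything is re--expressed through $\mathbf{D}_{\dot A\dot A'}$ and the $h$-- and $v$--torsions of (\ref{dtors}), these extra contributions are already absorbed into the N--adapted derivatives and into the commutator $\mathbf{\Delta}_{\alpha\beta}$, so that the clean Penrose--Rindler form survives. Verifying this absorption --- and, in particular, that the spurious pure $h$-- and $v$--torsions vanish by the canonical--connection theorem --- is the one place where the nonholonomic structure, rather than a direct copy of \cite{penrr}, must be used with care.
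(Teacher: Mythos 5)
Your proposal is correct and takes essentially the same route as the paper: the paper's own proof is a single remark that the Penrose--Rindler abstract--index formalism of \cite{penrr} applies verbatim in N--adapted frames for the metric compatible d--connections $\mathbf{D}$ and $\mathbf{\tilde{D}}$ (whose torsion is completely fixed by $(\mathbf{g},\mathbf{N})$), which is precisely what you elaborate, including the substitution of the curvature d--spinors from Theorem \ref{thb1}, the specializations via (\ref{einst2}) for the Einstein and vacuum bullets, and the distortion/commutator argument based on (\ref{distrel}), (\ref{dcommut}) and (\ref{lccond}) for the final bullet. Your closing caveat about the $\mathbf{D}\mathcal{T}+\mathcal{T}\mathcal{T}$ contributions to the Bianchi identity is the one place where you are more explicit than the paper, which simply declares (\ref{bianchid}) to be ``standard'' for d--connections and never displays those torsion terms.
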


\subsubsection{N--adapted conformal transforms}

Let us introduce the value
\begin{equation}
\Upsilon _{\alpha }:=\varpi ^{-1}\mathbf{D}_{\alpha }\varpi =\mathbf{D}%
_{\alpha }\ln \varpi ,  \label{confdvect}
\end{equation}
where the nonzero positive function $\varpi (u)$ is taken for conformal
transforms $\widehat{\mathbf{g}}_{\alpha \beta }:=\varpi ^{2}\mathbf{g}%
_{\alpha \beta }$ (\ref{conftrdm}) from Proposition \ref{prop1}. Using the
first formula in (\ref{dmetrspindec}), we conclude that in N--adapted form%
\begin{equation*}
\varepsilon _{\dot{A}\dot{B}}\rightarrow \widehat{\varepsilon }_{\dot{A}\dot{%
B}}=\varpi \varepsilon _{\dot{A}\dot{B}}\mbox{ and }\varepsilon _{\dot{A}%
^{\prime }\dot{B}^{\prime }}\rightarrow \widehat{\varepsilon }_{\dot{A}%
^{\prime }\dot{B}^{\prime }}=\varpi \varepsilon _{\dot{A}^{\prime }\dot{B}%
^{\prime }}.
\end{equation*}%
A corresponding abstract d--spinor calculus for last formula in Theorem \ref%
{thb1} results in proof of

\begin{proposition}
In N--adapted form, $\widehat{\Psi }_{\dot{A}\dot{B}\dot{C}\dot{T}}=\Psi _{%
\dot{A}\dot{B}\dot{C}\dot{T}}.$
\end{proposition}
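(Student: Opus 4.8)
The plan is to read the identity directly off the spinor decomposition of the Weyl conformal d--tensor given in the last item of Theorem \ref{thb1}, combined with the two conformal scaling laws already available: the rescaling $\widehat{\varepsilon}_{\dot{A}\dot{B}}=\varpi\varepsilon_{\dot{A}\dot{B}}$, $\widehat{\varepsilon}_{\dot{A}'\dot{B}'}=\varpi\varepsilon_{\dot{A}'\dot{B}'}$ of the $\varepsilon$--spinors stated just before this Proposition, and the weight--$\varpi^2$ behaviour of the fully covariant conformal d--tensor $\widehat{\mathbf{C}}_{\tau\alpha\beta\gamma}=\varpi^2\mathbf{C}_{\tau\alpha\beta\gamma}$ from Proposition \ref{prop1}.

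First I would write the anti--selfdual part from Theorem \ref{thb1}, namely $\ ^{-}\mathbf{C}_{\alpha\beta\gamma\tau}=\Psi_{\dot{A}\dot{B}\dot{C}\dot{T}}\,\varepsilon_{\dot{A}'\dot{B}'}\varepsilon_{\dot{C}'\dot{T}'}$, for both the original and the conformally transformed d--metric. By definition the hatted Weyl spinor is the one entering the analogous decomposition built from the hatted $\varepsilon$--spinors, $\ ^{-}\widehat{\mathbf{C}}_{\alpha\beta\gamma\tau}=\widehat{\Psi}_{\dot{A}\dot{B}\dot{C}\dot{T}}\,\widehat{\varepsilon}_{\dot{A}'\dot{B}'}\widehat{\varepsilon}_{\dot{C}'\dot{T}'}$. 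Substituting $\widehat{\varepsilon}_{\dot{A}'\dot{B}'}=\varpi\varepsilon_{\dot{A}'\dot{B}'}$ turns the two primed factors into $\varpi^2\varepsilon_{\dot{A}'\dot{B}'}\varepsilon_{\dot{C}'\dot{T}'}$, so that $\ ^{-}\widehat{\mathbf{C}}_{\alpha\beta\gamma\tau}=\widehat{\Psi}_{\dot{A}\dot{B}\dot{C}\dot{T}}\,\varpi^2\varepsilon_{\dot{A}'\dot{B}'}\varepsilon_{\dot{C}'\dot{T}'}$.

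Next I would compare this with the scaling of $\mathbf{C}$ itself. Since the anti--selfdual piece inherits the weight of the full tensor, Proposition \ref{prop1} gives $\ ^{-}\widehat{\mathbf{C}}_{\alpha\beta\gamma\tau}=\varpi^2\ ^{-}\mathbf{C}_{\alpha\beta\gamma\tau}=\varpi^2\Psi_{\dot{A}\dot{B}\dot{C}\dot{T}}\varepsilon_{\dot{A}'\dot{B}'}\varepsilon_{\dot{C}'\dot{T}'}$. Equating the two expressions for $\ ^{-}\widehat{\mathbf{C}}_{\alpha\beta\gamma\tau}$ produces $\widehat{\Psi}_{\dot{A}\dot{B}\dot{C}\dot{T}}\,\varpi^2\varepsilon_{\dot{A}'\dot{B}'}\varepsilon_{\dot{C}'\dot{T}'}=\varpi^2\Psi_{\dot{A}\dot{B}\dot{C}\dot{T}}\varepsilon_{\dot{A}'\dot{B}'}\varepsilon_{\dot{C}'\dot{T}'}$, and cancelling the nowhere--vanishing factor $\varpi^2$ together with the nondegenerate primed $\varepsilon$--spinors yields $\widehat{\Psi}_{\dot{A}\dot{B}\dot{C}\dot{T}}=\Psi_{\dot{A}\dot{B}\dot{C}\dot{T}}$.

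The step requiring the most care is the claim that the anti--selfdual part alone carries the weight $\varpi^2$, i.e. that the conformal rescaling does not mix $\Psi$ with $\overline{\Psi}$. Here the N--adapted spinor formalism does the work: because $\widehat{\varepsilon}_{\dot{A}\dot{B}}$ and $\widehat{\varepsilon}_{\dot{A}'\dot{B}'}$ both acquire the same single factor $\varpi$, and the two irreducible pieces of Theorem \ref{thb1} are distinguished precisely by carrying their symmetric index block on the unprimed (for $\Psi$) versus the primed (for $\overline{\Psi}$) spinors, the rescaling acts diagonally on that decomposition. One should also keep the index conventions of Theorem \ref{thb1} and Proposition \ref{prop1} aligned, but by the symmetries of $\mathbf{C}_{\tau\alpha\beta\gamma}$ every admissible reordering carries the same weight $\varpi^2$, so the final cancellation is unaffected.
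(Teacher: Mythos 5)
Your proof is correct and takes essentially the same route as the paper: the paper's own (very terse) proof consists of invoking the rescalings $\widehat{\varepsilon}_{\dot{A}\dot{B}}=\varpi \varepsilon_{\dot{A}\dot{B}}$, $\widehat{\varepsilon}_{\dot{A}^{\prime }\dot{B}^{\prime }}=\varpi \varepsilon_{\dot{A}^{\prime }\dot{B}^{\prime }}$ together with ``a corresponding abstract d--spinor calculus for the last formula in Theorem \ref{thb1}'', and your argument is exactly that calculus written out, matching the weight $\varpi ^{2}$ of $\mathbf{C}_{\tau \alpha \beta \gamma }$ from Proposition \ref{prop1} against the two factors of $\varpi $ coming from the primed $\varepsilon $--spinors in the anti--selfdual part. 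Your explicit check that the conformal rescaling does not mix the $\Psi $ and $\overline{\Psi }$ pieces (since both $\varepsilon $--spinors scale by the same factor and the irreducible decomposition is unique) is a detail the paper leaves implicit.
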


Applying statements of Theorem \ref{thb2}, we obtain formulas%
\begin{equation*}
\widehat{\mathbf{D}}^{\dot{A}\dot{A}^{\prime }}\Psi _{\dot{A}\dot{B}\dot{C}%
\dot{T}}=\Upsilon ^{\dot{A}\dot{A}^{\prime }}\Psi _{\dot{A}\dot{B}\dot{C}%
\dot{T}}\mbox{ and }\mathbf{D}_{\dot{A}^{\prime }}^{\dot{T}}\Psi _{\dot{A}%
\dot{B}\dot{C}\dot{T}}=\mathbf{D}_{(\dot{B}}^{\dot{B}^{\prime }}\mathbf{P}_{%
\dot{A})\dot{C}\dot{A}^{\prime }\dot{B}^{\prime }},
\end{equation*}%
and $\widehat{\mathbf{D}}_{(\dot{B}}^{\dot{B}^{\prime }}\widehat{\mathbf{%
\Phi }}_{\dot{C}\dot{T})\dot{A}^{\prime }\dot{B}^{\prime }}=\Upsilon _{\dot{A%
}^{\prime }}^{\dot{A}}\Psi _{\dot{A}\dot{B}\dot{C}\dot{T}}$.

\begin{remark}
Vacuum solutions of Einstein equations in general relativity and models of
Finsler--Cartan gravity with $\mathbf{D,}$ or $\mathbf{\tilde{D},}$ are not
conformally invariant. This follows from the fact that even $\mathbf{\Phi }_{%
\dot{C}\dot{T}\dot{A}^{\prime }\dot{B}^{\prime }}=0$ the above formula with $%
\widehat{\mathbf{D}}_{(\dot{B}}^{\dot{B}^{\prime }}\widehat{\mathbf{\Phi }}_{%
\dot{C}\dot{T})\dot{A}^{\prime }\dot{B}^{\prime }}$ does not result in zero $%
\widehat{\mathbf{\Phi }}_{\dot{C}\dot{T}\dot{A}^{\prime }\dot{B}^{\prime }}.$
Nevertheless, it should be emphasized here that such results are for a fixed
N--connection structure $\mathbf{N,}$ or $\mathbf{\tilde{N}.}$ We can
transform some data $(\mathbf{g,N)}$ with (non) zero $\mathbf{\Phi }_{\dot{C}%
\dot{T}\dot{A}^{\prime }\dot{B}^{\prime }}$ into certain $(^{\eta }\mathbf{%
g,^{\eta }N)}$ with, for instance, $\mathbf{^{\eta }\Phi }_{\dot{C}\dot{T}%
\dot{A}^{\prime }\dot{B}^{\prime }}=0,$ and/or $\mathbf{^{\eta }}\widehat{%
\mathbf{D}}_{(\dot{B}}^{\dot{B}^{\prime }}\ \mathbf{^{\eta }}\widehat{%
\mathbf{\Phi }}_{\dot{C}\dot{T})\dot{A}^{\prime }\dot{B}^{\prime }}=\mathbf{%
^{\eta }}\Upsilon _{\dot{A}^{\prime }}^{\dot{A}}\mathbf{^{\eta }}\Psi _{\dot{%
A}\dot{B}\dot{C}\dot{T}}=0,$ for instance, following the anholonomic
deformation method \cite{veyms,vsgg,vrflg}.
\end{remark}

Via N--adapted d--spinor calculus, we can prove

\begin{theorem}
Under N--adapted conformal transforms, the values determining the vacuum
Einstein equations for $\mathbf{D}$ transform following rules%
\begin{eqnarray*}
\widehat{\mathbf{\Phi }}_{\dot{C}\dot{T}\dot{A}^{\prime }\dot{B}^{\prime }}
&=&\mathbf{\Phi }_{\dot{C}\dot{T}\dot{A}^{\prime }\dot{B}^{\prime }}=-%
\mathbf{D}_{\dot{C}(\dot{B}^{\prime }}\Upsilon _{\dot{A}^{\prime })\dot{T}%
}+\Upsilon _{\dot{C}(\dot{B}^{\prime }}\Upsilon _{\dot{A}^{\prime })\dot{T}}
\\
&=&\varpi \mathbf{D}_{\dot{C}(\dot{B}^{\prime }}\mathbf{D}_{\dot{A}^{\prime
})\dot{T}}\varpi ^{-1}=-\varpi ^{-1}\widehat{\mathbf{D}}_{\dot{C}(\dot{B}%
^{\prime }}\widehat{\mathbf{D}}_{\dot{A}^{\prime })\dot{T}}\varpi , \\
4\varpi ^{2}\widehat{\Lambda } &=&4\Lambda +(\mathbf{D}^{\alpha }\Upsilon
_{\alpha }+\Upsilon ^{\alpha }\Upsilon _{\alpha })=4\Lambda +\varpi
^{-1}\square \varpi ,
\end{eqnarray*}%
for $\square :=\mathbf{D}^{\alpha }\mathbf{D}_{\alpha }.$
\end{theorem}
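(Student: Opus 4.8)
The plan is to transcribe the classical Penrose--Rindler conformal-rescaling calculus \cite{penrr} into the N--adapted d--spinor formalism of Theorem \ref{thb1}, checking at each stage that the nonholonomy (the fixed N--connection and the $(\mathbf{g,N})$--induced torsion of $\mathbf{D}$) does not disturb the structure of the formulas. First I would record the elementary facts: the rescalings $\widehat{\varepsilon}_{\dot{A}\dot{B}}=\varpi\varepsilon_{\dot{A}\dot{B}}$ and $\widehat{\varepsilon}_{\dot{A}^{\prime}\dot{B}^{\prime}}=\varpi\varepsilon_{\dot{A}^{\prime}\dot{B}^{\prime}}$ noted before the Proposition, coming from (\ref{conftrdm}), together with $\Upsilon_{\alpha}=\mathbf{D}_{\alpha}\ln\varpi$ from (\ref{confdvect}). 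A short scalar computation then yields the identity $\varpi^{-1}\square\varpi=\mathbf{D}^{\alpha}\Upsilon_{\alpha}+\Upsilon^{\alpha}\Upsilon_{\alpha}$, since $\mathbf{D}_{\alpha}\varpi=\varpi\,\Upsilon_{\alpha}$ gives $\square\varpi=\mathbf{D}^{\alpha}(\varpi\,\Upsilon_{\alpha})=\varpi(\Upsilon^{\alpha}\Upsilon_{\alpha}+\mathbf{D}^{\alpha}\Upsilon_{\alpha})$; this already supplies the last displayed line once the trace part of the curvature spinor is matched.

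The central step is to derive the transformation law of the spin d--connection on d--spinors. Because the transform (\ref{conftrdm}) preserves the N--connection splitting (Proposition \ref{prop1}) and both $\mathbf{D}$ and $\widehat{\mathbf{D}}$ are the canonical metric-compatible d--connections of $\mathbf{g}$ and $\widehat{\mathbf{g}}=\varpi^{2}\mathbf{g}$ for the \emph{same} fixed $\mathbf{N}=\{N_{i}^{a}\}$, the difference $\widehat{\mathbf{D}}-\mathbf{D}$ is a $(\mathbf{g,N})$--determined distortion of the form (\ref{distrel}). Imposing $\widehat{\mathbf{D}}\,\widehat{\varepsilon}_{\dot{A}\dot{B}}=0$ together with N--adapted metric compatibility, I would show that on a d--spinor $\kappa_{\dot{C}}$ one has $\widehat{\mathbf{D}}_{\dot{A}\dot{A}^{\prime}}\kappa_{\dot{C}}=\mathbf{D}_{\dot{A}\dot{A}^{\prime}}\kappa_{\dot{C}}-\Upsilon_{\dot{C}\dot{A}^{\prime}}\kappa_{\dot{A}}$, exactly as in the holonomic case; the point is that the torsion of $\mathbf{D}$ is fixed by $(\mathbf{g,N})$ (see (\ref{dtors})) and, with $\mathbf{N}$ held constant, enters through the anholonomy coefficients (\ref{anhrel}) only in the N--adapted combinations already built into $\square_{\dot{A}\dot{B}}$ and $\square_{\dot{A}^{\prime}\dot{B}^{\prime}}$ of Theorem \ref{thb1}.

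With this law in hand I would feed the test d--spinor $\kappa_{\dot{C}}$ through the hatted commutator $\widehat{\square}_{\dot{A}^{\prime}\dot{B}^{\prime}}\kappa_{\dot{C}}=\widehat{\mathbf{\Phi}}_{\dot{C}\dot{T}\dot{A}^{\prime}\dot{B}^{\prime}}\kappa^{\dot{T}}$ from Theorem \ref{thb1}, substitute $\widehat{\mathbf{D}}=\mathbf{D}-\Upsilon$, and collect terms. The symmetrized first-derivative contribution produces $-\mathbf{D}_{\dot{C}(\dot{B}^{\prime}}\Upsilon_{\dot{A}^{\prime})\dot{T}}$ and the quadratic one $\Upsilon_{\dot{C}(\dot{B}^{\prime}}\Upsilon_{\dot{A}^{\prime})\dot{T}}$; separating the irreducible symmetric (trace-free) part reads off $\widehat{\mathbf{\Phi}}_{\dot{C}\dot{T}\dot{A}^{\prime}\dot{B}^{\prime}}$, while the trace collapses onto $\widehat{\Lambda}$ and, via the scalar identity above, gives $4\varpi^{2}\widehat{\Lambda}=4\Lambda+(\mathbf{D}^{\alpha}\Upsilon_{\alpha}+\Upsilon^{\alpha}\Upsilon_{\alpha})$. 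The two rewritings of $\widehat{\mathbf{\Phi}}$ are then pure bookkeeping: substituting $\Upsilon_{\dot{A}^{\prime}\dot{T}}=\varpi^{-1}\mathbf{D}_{\dot{A}^{\prime}\dot{T}}\varpi=-\varpi\,\mathbf{D}_{\dot{A}^{\prime}\dot{T}}\varpi^{-1}$ and applying the Leibniz rule converts the $(\mathbf{D}\Upsilon,\Upsilon\Upsilon)$ combination into $\varpi\,\mathbf{D}_{\dot{C}(\dot{B}^{\prime}}\mathbf{D}_{\dot{A}^{\prime})\dot{T}}\varpi^{-1}$, and the identical manipulation carried out with $\widehat{\mathbf{D}}$ produces the $-\varpi^{-1}\widehat{\mathbf{D}}_{\dot{C}(\dot{B}^{\prime}}\widehat{\mathbf{D}}_{\dot{A}^{\prime})\dot{T}}\varpi$ form.

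The main obstacle I anticipate is the second step: verifying that the conformal distortion of the canonical d--connection acts on d--spinors by precisely the holonomic formula despite the induced torsion. One must confirm that the extra torsion/anholonomy terms of $\mathbf{D}$ either cancel under the symmetrizations defining $\square_{\dot{A}^{\prime}\dot{B}^{\prime}}$ or are already absorbed into the $(\mathbf{g,N})$--compatible structure, so that the commutator identities of Theorem \ref{thb1} retain the same form as for $\nabla$. This is exactly where metric compatibility and the invariance of $\mathbf{N}$ under (\ref{conftrdm}) are essential, and where the calculation genuinely differs from the Riemannian reference \cite{penrr}.
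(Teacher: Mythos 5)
Your proposal is correct and takes essentially the same route the paper intends: the paper offers no details beyond the preamble ``Via N--adapted d--spinor calculus, we can prove,'' which means precisely the transcription of the Penrose--Rindler conformal-rescaling computation into N--adapted frames with $\nabla$ replaced by $\mathbf{D}$, and that is what you carry out (including the scalar identity $\varpi^{-1}\square\varpi=\mathbf{D}^{\alpha}\Upsilon_{\alpha}+\Upsilon^{\alpha}\Upsilon_{\alpha}$ and the Leibniz rewritings of the $\varpi\,\mathbf{D}\mathbf{D}\varpi^{-1}$ forms). The obstacle you flag does close as you expect: because the d--metric is block diagonal and the canonical coefficients (\ref{candcon}) are Koszul-type expressions in each block with the N--adapted derivatives held fixed under (\ref{conftrdm}), the conformal distortion $\widehat{\mathbf{D}}-\mathbf{D}$ has exactly the holonomic form $\delta_{\alpha}^{\gamma}\Upsilon_{\beta}+\delta_{\beta}^{\gamma}\Upsilon_{\alpha}-\mathbf{g}_{\alpha\beta}\Upsilon^{\gamma}$, so the spin d--connection transforms by the standard spinor law and the commutator identities of Theorem \ref{thb1} apply unchanged.
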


Using the operators $\mathbf{D}_{\dot{C}\dot{B}^{\prime }}$ and/or $\mathbf{%
\tilde{D}}_{\dot{C}\dot{B}^{\prime }},$ we can construct other conformally
N--adapted invariant values, for instance, a Finsler like Bach d--tensor (it
can be constructed similarly to formulas (6.8.42)-(6.8.45) in \cite{penrr}
but for d--connections).

Finally, we note that we can formulate spinor N--adapted differential
geometries and derive following geometric/N--adapted variational principles certain  gravitational and matter field equations for $\mathbf{V},\dim
\mathbf{V}=n+m;n,m\geq 2$, generalizing for higher dimensions the tensor and spinor abstract index formalism.

\section{Nonholonomic (Finsler) twistors}

\label{snft}

 The twistor theory was elaborated with a
very general goal to translate the standard  physics in the language of complex
manifolds mathematics when spacetime points and fundamental properties and
field interactions are derived from certain fundamental principles being
generalized former constructions for spinor algebra and geometry. The
approach is characterized by certain important results in generating exact
solutions of fundamental matter field equations, twistor methods of
quantization, formulating conservation laws in gravity and encoding, for
instance, of (anti) self--dual Yang--Mills and gravitational interactions,
see details in Refs. \cite{penrr,manin,ward,huggett}.

In some sense, our research interests are twofold: 1) The first aim  to understand to what extend the
Twistor Program can be generalized for modified Finsler type spacetime
geometries.  2) Nonholonomic (Finsler) methods happen to be very effective in
elaborating new geometric methods of constructing exact solutions and
quantization of gravity theories. The second aim is to clarify  how such approaches can be related to
spinor and twistor geometry?  Our constructions  should provide not only "pure"
academic generalizations of twistor geometry for "more sophisticate"
spacetime models with local anisotropies. Finsler like variables can be
introduced even in general relativity (similarly to various former tetradic,
spinor etc approaches) which give us new possibilities for developing the
twistor theory for curved spaces and generic off--diagonal gravitational and
matter field interactions.

The aim of this section is to define twistors for nonholonomic (Finsler)
spaces and show how such constructions can be globalized on curved spaces
via nonholonomic deformations of fundamental geometric structures.

\subsection{Twistor equations for nonholonomic 2+2 splitting}

Originally, twistors were introduced for complexified projective models of
flat Minkowski spacetimes using the two--spinor formalism. For nonholonomic
manifolds enabled with N--connection structure, we can consider analogs of
flat spaces determined by data $(\mathbf{g,N,D})$ for which the N--adapted
Riemannian curvature and the conformal Weyl d--tensors are zero (see
Theorem \ref{thb1}). In general, such geometries are curved ones because the
curvature of $\nabla $ is not zero. The spinor constructions are similar to
those for (pseudo) Euclidean spaces if there are used N--adapted frames (\ref%
{dder}) and (\ref{ddif}).

\subsubsection{Definition of nonholonomic  twistors}

Let us consider analogs of flat twistors on spaces enabled with
N--connection structure.

\begin{definition}
The nonholonomic twistor equations are
\begin{equation}
\mathbf{D}_{\dot{A}^{\prime }}^{(\dot{A}}\mathbf{\omega }^{\dot{B})}=0.
\label{nhtweq}
\end{equation}
\end{definition}

Let us formulate the conditions when such equations are conformally
invariant in N--adapted form. Choosing $\widehat{\mathbf{\omega }}^{\dot{B}}=%
\mathbf{\omega }^{\dot{B}},$ we can compute
\begin{equation}
\widehat{\mathbf{D}}_{\dot{A}\dot{A}^{\prime }}\widehat{\mathbf{\omega }}^{%
\dot{B}}=\widehat{\mathbf{D}}_{\dot{A}\dot{A}^{\prime }}\widehat{\mathbf{%
\omega }}^{\dot{B}}+\delta _{\dot{A}}^{\dot{B}}\Upsilon _{\dot{C}\dot{A}%
^{\prime }}\mbox{ and }\widehat{\mathbf{D}}_{\dot{A}^{\prime }}^{(\dot{A}}%
\widehat{\mathbf{\omega }}^{\dot{B})}=\varpi ^{-1}\mathbf{D}_{\dot{A}%
^{\prime }}^{(\dot{A}}\mathbf{\omega }^{\dot{B})}  \label{confntr}
\end{equation}%
where $\Upsilon _{\dot{C}\dot{A}^{\prime }}$ is given by the conformal
d--vector (\ref{confdvect}).\footnote{%
For simplicity, we shall consider that the spinor $\mathbf{\omega }^{\dot{B}%
} $ does not posses an electromagnetic charge.}

\begin{lemma}
\ The nonholonomic twistor equations (\ref{nhtweq}) are compatible if and
only if
\begin{equation}
\Psi _{\dot{T}\dot{A}\dot{B}\dot{C}}\mathbf{\omega }^{\dot{T}}=0.
\label{nhtwcomp}
\end{equation}
\end{lemma}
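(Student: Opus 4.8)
The plan is to prolong the overdetermined system (\ref{nhtweq}) once and then to read off its first integrability condition from the spinor Ricci identity of Theorem \ref{thb1}. First I would observe that (\ref{nhtweq}) says precisely that $\mathbf{D}_{\dot A'}^{\dot A}\mathbf{\omega}^{\dot B}$ is skew in the pair $\dot A\dot B$; since the spin space is two--dimensional, any such skew bispinor is a multiple of $\varepsilon^{\dot A\dot B}$, so the equation is equivalent to the prolonged form
\begin{equation*}
\mathbf{D}_{\dot A\dot A'}\mathbf{\omega}^{\dot B}=-\varepsilon_{\dot A}{}^{\dot B}\pi_{\dot A'},\qquad \pi_{\dot A'}:=\tfrac{1}{2}\mathbf{D}_{\dot C\dot A'}\mathbf{\omega}^{\dot C},
\end{equation*}
which introduces the secondary spinor $\pi_{\dot A'}$ and trades the trace--free first--order system for an equivalent first--order system on the pair $(\mathbf{\omega}^{\dot B},\pi_{\dot A'})$.

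Next I would apply the unprimed commutator $\square_{\dot A\dot B}=\mathbf{D}_{\dot A'(\dot A}\mathbf{D}_{\dot B)}^{\dot A'}$ to $\mathbf{\omega}_{\dot C}$ and compute it in two ways. On one side, the spinor Ricci identity of Theorem \ref{thb1} gives the purely algebraic value
\begin{equation*}
\square_{\dot A\dot B}\mathbf{\omega}_{\dot C}=\Psi_{\dot C\dot T\dot A\dot B}\mathbf{\omega}^{\dot T}+\Lambda(\varepsilon_{\dot A\dot C}\mathbf{\omega}_{\dot B}+\varepsilon_{\dot B\dot C}\mathbf{\omega}_{\dot A}),
\end{equation*}
in which the torsion of $\mathbf{D}$ is already absorbed into the curvature spinors $\Psi,\Lambda$, so no separate torsion term appears. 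On the other side, substituting the prolonged equation and using $\mathbf{D}_{\dot B}{}^{\dot A'}\mathbf{\omega}_{\dot C}=-\varepsilon_{\dot B\dot C}\pi^{\dot A'}$ produces a quantity of the form $-\varepsilon_{\dot C(\dot B}\,\mathbf{D}_{|\dot A'|\dot A)}\pi^{\dot A'}$, that is, an expression carrying an explicit $\varepsilon$ in the index pair $\dot C$ and one of $\dot A,\dot B$.

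The decisive step is then to take the totally symmetric part over $\dot A\dot B\dot C$ of the two evaluations. Because $\Psi_{\dot C\dot T\dot A\dot B}$ is totally symmetric in its four indices, the symmetric part of the first evaluation is just $\Psi_{\dot T\dot A\dot B\dot C}\mathbf{\omega}^{\dot T}$, whereas the $\Lambda$--terms and the whole right--hand side of the second evaluation are built from $\varepsilon$ and hence have vanishing symmetric part over $\dot A\dot B\dot C$. Equating the symmetric parts yields exactly $\Psi_{\dot T\dot A\dot B\dot C}\mathbf{\omega}^{\dot T}=0$, which is (\ref{nhtwcomp}). This argument is reversible: the only obstruction to the agreement of the two computations of $\square_{\dot A\dot B}\mathbf{\omega}_{\dot C}$, hence to the closure of the prolonged system, is precisely this symmetric term, so (\ref{nhtwcomp}) is also sufficient for compatibility. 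The main technical point to watch is that the N--adapted $\square$ operator is built from the metric--compatible d--connection $\mathbf{D}$ carrying the nonholonomically induced torsion $\mathcal{T}[\mathbf{g,N}]$; the safe route is to rely on Theorem \ref{thb1}, where that torsion has already been packaged into $\Psi,\Phi,\Lambda$, rather than re--expanding the commutator through $\mathbf{\Delta}_{\alpha\beta}$ and its torsion term $\mathbf{T}_{\ \alpha\beta}^{\gamma}\mathbf{D}_{\gamma}$.
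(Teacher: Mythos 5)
Your proof is correct and follows essentially the same route as the paper: the paper's own (one-line) argument is precisely that $\mathbf{D}^{\dot{A}^{\prime }(\dot{C}}\mathbf{D}_{\dot{A}^{\prime }}^{\dot{A}}\mathbf{\omega }^{\dot{B})}=-\square ^{(\dot{C}\dot{A}}\mathbf{\omega }^{\dot{B})}=-\Psi _{\quad \dot{T}}^{\dot{A}\dot{B}\ \dot{C}}\mathbf{\omega }^{\dot{T}}$, i.e. a second N--adapted derivative of the twistor equation plus the spinor Ricci identity of Theorem \ref{thb1}, with the $\varepsilon$-- and $\Lambda$--terms killed by the total symmetrization exactly as in your computation. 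Your prolongation through $\pi _{\dot{A}^{\prime }}$ is just a more explicit packaging of the same integrability computation (with the side benefit of making the sufficiency direction visible), so there is no substantive difference.
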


\begin{proof}
It follows from $\mathbf{D}^{\dot{A}^{\prime }(\dot{C}}\mathbf{D}_{\dot{A}%
^{\prime }}^{\dot{A}}\mathbf{\omega }^{\dot{B})}=-\square ^{(\dot{C}\dot{A}}%
\mathbf{\omega }^{\dot{B})}=-\Psi _{\quad \dot{T}}^{\dot{A}\dot{B}\ \dot{C}}%
\mathbf{\omega }^{\dot{T}}.$

$\square $
\end{proof}

Using this lemma and via straightforward verifications in N--adapted frames,
we can prove

\begin{theorem}
If the compatibility conditions (\ref{nhtwcomp}) are satisfied, we can solve
the nonholonomic twistor equations (\ref{nhtweq}) in general form,%
\begin{eqnarray}
\mathbf{\omega }^{\dot{B}} &=&\mathbf{\mathring{\omega}}^{\dot{B}}-iu^{^{%
\dot{B}\dot{B}\prime }}\mathring{\pi}_{\dot{B}\prime }\mbox{ and }\pi _{\dot{%
B}\prime }=\mathring{\pi}_{\dot{B}\prime },  \label{nhtsol} \\
\mathbf{D}_{\dot{B}\dot{B}\prime }\mathbf{\omega }^{\dot{C}} &=&-i\delta _{%
\dot{B}}^{\dot{C}}\pi _{\dot{B}\prime },  \notag
\end{eqnarray}%
where the point $u^{^{\dot{B}\dot{B}\prime }}\in \mathbf{V,}i^{2}=-1,$ and $%
\mathbf{\mathring{\omega}}^{\dot{B}}$ and $\mathring{\pi}_{\dot{B}\prime }$
are constant values with respect to N--adapted frames (\ref{dder}) and (\ref%
{ddif}) for which $\Psi _{\dot{A}\dot{B}\dot{C}\dot{T}}=0.$
\end{theorem}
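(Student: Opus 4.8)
The plan is to reduce the covariant first-order system (\ref{nhtweq}) to an explicitly integrable one by first extracting its trace part, then showing that this trace is constant in the chosen frame, and finally integrating where $\mathbf{D}$ reduces to a flat N--elongated derivative.

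First I would decompose $\mathbf{D}_{\dot{A}\dot{A}^{\prime}}\mathbf{\omega}^{\dot{B}}$, for each fixed $\dot{A}^{\prime}$, into its parts symmetric and antisymmetric in the spinor indices $\dot{A},\dot{B}$. The symmetric part is precisely the left--hand side of the twistor equation (\ref{nhtweq}) and therefore vanishes, so the whole quantity equals its skew part, which is proportional to $\delta_{\dot{A}}^{\dot{B}}$ (equivalently to $\varepsilon_{\dot{A}\dot{B}}$ on lowering the index). Defining the spinor $\pi_{\dot{A}^{\prime}}$ as a fixed multiple of the trace $\mathbf{D}_{\dot{C}\dot{A}^{\prime}}\mathbf{\omega}^{\dot{C}}$, with the normalization chosen so that the proportionality factor is $-i$, yields the second relation of (\ref{nhtsol}),
\begin{equation*}
\mathbf{D}_{\dot{B}\dot{B}^{\prime}}\mathbf{\omega}^{\dot{C}}=-i\,\delta_{\dot{B}}^{\dot{C}}\pi_{\dot{B}^{\prime}}.
\end{equation*}
This step is purely algebraic and uses only the $\varepsilon$--spinor conventions (\ref{dmetrspindec}).

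Next I would establish that $\pi_{\dot{B}^{\prime}}$ is constant in the chosen frame. Applying $\mathbf{D}_{\dot{A}\dot{A}^{\prime}}$ to the displayed relation, antisymmetrizing in $\dot{A},\dot{B}$, and invoking the spinor d--commutators $\square_{\dot{A}\dot{B}}$ and $\mathbf{\square}_{\dot{A}^{\prime}\dot{B}^{\prime}}$ from Theorem \ref{thb1}, the curvature terms that arise are governed by $\Psi_{\dot{T}\dot{A}\dot{B}\dot{C}}\mathbf{\omega}^{\dot{T}}$, $\Phi$, and $\Lambda$. The compatibility condition (\ref{nhtwcomp}) kills the $\Psi$--contribution; working in the analog--of--flat N--adapted regime described at the start of this section, where the N--adapted Riemann d--tensor (hence $\Psi,\Phi,\Lambda$ and the Weyl d--tensor) vanishes and the commutators collapse, one finds $\mathbf{D}_{\dot{A}\dot{A}^{\prime}}\pi_{\dot{B}^{\prime}}=0$, i.e. $\pi_{\dot{B}^{\prime}}=\mathring{\pi}_{\dot{B}^{\prime}}$ is constant with respect to the frames (\ref{dder}) and (\ref{ddif}). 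Then I would integrate: in a frame for which $\Psi_{\dot{A}\dot{B}\dot{C}\dot{T}}=0$ and the full N--adapted curvature is zero, $\mathbf{D}_{\dot{B}\dot{B}^{\prime}}$ reduces to the N--elongated partial derivative $\mathbf{e}_{\dot{B}\dot{B}^{\prime}}$, so the system becomes $\mathbf{e}_{\dot{B}\dot{B}^{\prime}}\mathbf{\omega}^{\dot{C}}=-i\,\delta_{\dot{B}}^{\dot{C}}\mathring{\pi}_{\dot{B}^{\prime}}$. This is an affine--linear gradient equation in the N--adapted coordinates $u^{\dot{C}\dot{B}^{\prime}}$; its general solution carries exactly one integration constant per component, namely $\mathbf{\mathring{\omega}}^{\dot{C}}$, giving the first relation of (\ref{nhtsol}). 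That this exhausts all solutions follows because the constants $(\mathbf{\mathring{\omega}}^{\dot{B}},\mathring{\pi}_{\dot{B}^{\prime}})$ then parametrize the full solution space of the first--order system.

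The main obstacle I expect is the second step: controlling the commutator $[\mathbf{D}_{\dot{A}\dot{A}^{\prime}},\mathbf{D}_{\dot{B}\dot{B}^{\prime}}]$ in the presence of the nonholonomically induced torsion $\mathcal{T}[\mathbf{g,N}]$ of $\mathbf{D}$. Unlike the torsion--free $\nabla$ of the classical derivation, the canonical d--connection carries torsion completely determined by $(\mathbf{g,N})$, see (\ref{dtors}), so the spinor Ricci identities of Theorem \ref{thb1} must be applied in their N--adapted form and the torsion contributions shown to be absorbed in the flat--analog frame. Verifying that both (\ref{nhtwcomp}) and the vanishing of the N--adapted curvature are exactly what is needed to render $\pi_{\dot{B}^{\prime}}$ constant, with all $h$-- and $v$--projections mutually consistent, is the delicate part; the remaining steps are routine N--adapted d--spinor calculus.
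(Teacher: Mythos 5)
Your proposal is correct and follows essentially the same route as the paper: the paper's entire proof consists of invoking the compatibility Lemma (\ref{nhtwcomp}) together with ``straightforward verifications in N--adapted frames,'' which is exactly the trace--decomposition, constancy--of--$\pi_{\dot{B}^{\prime}}$, and integration argument you spell out in the flat--analog regime where $\Psi_{\dot{A}\dot{B}\dot{C}\dot{T}}=0$. Your write--up is in fact more explicit than the paper's about the two delicate points it silently absorbs, namely where the nonholonomically induced torsion of $\mathbf{D}$ enters the spinor Ricci identities and why the vanishing of the N--adapted curvature is what makes $\pi_{\dot{B}^{\prime}}$ constant and the system integrable.
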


We can generalize the concept of twistors for flat spaces to nonholonomic
manifolds which are conformally flat in N--adapted form:

\begin{definition}
An nonholonomic (equivalently, anholonomic) twisor space $\mathbb{T}^{\dot{%
\alpha}}$ is a four dimensional complex vector space (with real eight
dimensions) determined by elements of type $\boldsymbol{Z}^{\dot{\alpha}}=(%
\mathbf{\omega }^{\dot{A}},\pi _{\dot{A}^{\prime }})$ with the two spinor
components $\boldsymbol{Z}^{\dot{A}}=\mathbf{\omega }^{\dot{A}}$ and $%
\boldsymbol{Z}_{\dot{A}^{\prime }}=\pi _{\dot{A}^{\prime }}$ defined by
solutions of type (\ref{nhtsol}).
\end{definition}

Doted indices are used in order to emphasize that we work in N--adapted
form. This allows us to preserve with respect to N--adapted frames certain
similarity to formulas from \cite{penrr}. If $\mathbf{D\rightarrow \nabla ,}$
for conformally flat spaces, we obtain $\boldsymbol{Z}^{\dot{\alpha}%
}\rightarrow Z^{\alpha }\in \mathbb{T}^{\alpha },$ i.e. standard Penrose's
twistors. Such constructions and relations depends on a point $\ ^{0}u$
fixing a coordinate system.

\begin{remark}
In a similar form, we can introduce the space of nonholonomic dual twistors $%
\mathbb{T}_{\dot{\alpha}}$ $\ $with elements $\boldsymbol{W}_{\dot{\alpha}}=(%
\boldsymbol{W}_{\dot{A}}=\mathbf{\lambda }_{\dot{A}},\boldsymbol{W}^{\dot{A}%
^{\prime }}=\mu ^{\dot{A}^{\prime }}),$ where
\begin{eqnarray}
\mathbf{\lambda }_{\dot{A}} &=&\mathbf{\mathring{\lambda}}_{\dot{A}}%
\mbox{
and }\mu ^{\dot{A}^{\prime }}=\mathring{\mu}^{\dot{A}^{\prime }}+iu^{^{\dot{A%
}\dot{A}^{\prime }}}\mathbf{\mathring{\lambda}}_{\dot{A}},  \label{nhtsold}
\\
\mathbf{D}_{\dot{A}\dot{A}^{\prime }}\mu ^{\dot{B}\prime } &=&i\delta _{\dot{%
A}^{\prime }}^{\dot{B}\prime }\mathbf{\lambda }_{\dot{A}},  \notag
\end{eqnarray}%
are  solutions of the dual nonholonomic twistor equations $\mathbf{%
D}_{\dot{A}}^{(\dot{A}^{\prime }}\mu ^{\dot{B}\prime )}=0$.
\end{remark}

The complex conjugation of nonholonomic (dual) twistors follows the rules%
\begin{equation*}
\overline{\boldsymbol{Z}^{\dot{\alpha}}}=\overline{\boldsymbol{Z}}_{\dot{%
\alpha}}:=(\overline{\pi }_{\dot{A}},\overline{\mathbf{\omega }}^{\dot{A}%
^{\prime }})\mbox{ and }\overline{\boldsymbol{W}_{\dot{\alpha}}}=\overline{%
\boldsymbol{W}}^{\dot{\alpha}}:=(\overline{\mu }^{\dot{A}},\overline{\mathbf{%
\lambda }}_{\dot{A}^{\prime }}).
\end{equation*}%
We can consider higher valence twistors, for instance, $\mathbf{X}_{\ \dot{%
\beta}}^{\dot{\alpha}}$ where N--adapted twistor indices transform
respectively following rules (\ref{nhtsol}) and (\ref{nhtsold}) taken "-" or
"+" before complex unity $i.$

\subsubsection{Geometric/physical meaning of anholonomic twistors}

A nonholonomic frame structure prescribes a corresponding spiral
configuration for twistors and their conformal transforms.

\begin{definition}
\textbf{-Corollary:} The class of curved spaces generated by anholonomy
relations (\ref{anhrel}) subjected to the compatibility conditions (\ref%
{nhtwcomp}) is characterized by anholonomic spirality%
\begin{equation}
\mathbf{\dot{s}}:=\frac{1}{2}\boldsymbol{Z}^{\dot{\alpha}}\overline{%
\boldsymbol{Z}}_{\dot{\alpha}},  \label{spiralityd}
\end{equation}
which is invariant under N--adapted conformal transforms.
\end{definition}

\begin{proof}
It follows from verification that $\boldsymbol{Z}^{\dot{\alpha}}%
\overline{\boldsymbol{Z}}_{\dot{\alpha}}=\widehat{\boldsymbol{Z}}^{\dot{%
\alpha}}\widehat{\overline{\boldsymbol{Z}}}_{\dot{\alpha}}$ (using formulas (%
\ref{confntr}) and (\ref{nhtsol})).

$\square $
\end{proof}

In both holonomic and nonholonomic cases, the simplest geometric
interpretation is possible for the so--called isotropic twistors when $%
\boldsymbol{Z}^{\dot{\alpha}}\overline{\boldsymbol{Z}}_{\dot{\alpha}}=0.$
Fixing a value $\mathring{\pi}_{\dot{B}\prime }\neq 0,$ we get from (\ref%
{nhtsol}) that%
\begin{equation*}
u^{\dot{B}\dot{B}\prime }=\mathbf{\mathring{\omega}}^{\dot{B}}\overline{%
\mathbf{\mathring{\omega}}}^{\dot{B}^{\prime }}/i\overline{\mathbf{\mathring{%
\omega}}}^{\dot{A}^{\prime }}\mathring{\pi}_{\dot{A}^{\prime }}+\tau
\overline{\pi }^{\dot{B}}\pi ^{\dot{B}\prime },\tau \in \mathbb{R},
\end{equation*}%
describes a light ray propagating in N-adapted form in an effective locally
anisotropic media and/or a curved spacetime with geometric objects induced
by nontrivial anholonomy coefficients . If $\mathring{\pi}_{\dot{B}\prime
}=0,$ such a light ray is moved to infinity.

We can also characterize massless particles with momentum, rotation and
spirality propagating in effective curved spaces derived for certain
anholonomy relations of moving frames. Taking $\boldsymbol{Z}^{\dot{\alpha}%
}=(\mathbf{\omega }^{\dot{A}},\pi _{\dot{A}^{\prime }})$ with $\pi _{\dot{A}%
^{\prime }}\neq 0,$ we construct
\begin{equation*}
p_{\dot{A}^{\prime }\dot{A}^{\prime }}:=\overline{\pi }_{\dot{A}}\pi _{\dot{A%
}^{\prime }},M^{\dot{A}\dot{A}^{\prime }\dot{B}\dot{B}^{\prime }}:=i\mathbf{%
\omega }^{(\dot{A}}\overline{\pi }^{\dot{B})}\varepsilon ^{\dot{A}^{\prime }%
\dot{B}\prime }-i\overline{\mathbf{\omega }}^{(\dot{A}^{\prime }}\pi ^{\dot{B%
}^{\prime })}\varepsilon ^{\dot{A}\dot{B}}
\end{equation*}%
and spin d--vector $S_{\alpha }=\frac{1}{2}e_{\alpha \beta \gamma \tau
}p^{\beta }M^{\gamma \tau }=\mathbf{\dot{s}}p_{\alpha }$, where $e_{\alpha
\beta \gamma \tau }$ is the absolute antisymmetric d--tensor and $\mathbf{%
\dot{s}}$ is computed as in (\ref{spiralityd}). In local N--adapted form,
such a physical interpretation of nonholonomic twistors is possible with
respect to bases of type (\ref{dder}) and (\ref{ddif}) for which $\Psi _{%
\dot{A}\dot{B}\dot{C}\dot{T}}=0.$ This describes a massless particle moving in
a subclass of curved spaces with nontrivial curvature for $\nabla $ when
certain anholonomic constraints are imposed.

\subsection{Finsler twistors on tangent bundles}

Originally, the Finsler--Cartan geometry was constructed on tangent bundles
with $\boldsymbol{D=\tilde{D}}$ and $4+4$ splitting as we explained in
section \ref{ssmfcs}. The corresponding Weyl d--tensor $\mathbf{\tilde{C}}%
_{\tau \alpha \beta \gamma }$ is computed using formulas (\ref{wtcd}) but
for curvature coefficients (\ref{curvcart}) and curvature spinors (\ref%
{curvfcsp}).

\begin{definition}
The twistor equations for Finsler--Cartan geometries are
\begin{equation}
\mathbf{D}_{I^{\prime }}^{(I}\mathbf{\omega }^{J)}=0,\mathbf{D}_{A^{\prime
}}^{(A}\mathbf{\omega }^{B)}=0.  \label{fctweq}
\end{equation}
\end{definition}

Choosing $\widehat{\mathbf{\omega }}^{J}=\mathbf{\omega }^{J},\widehat{%
\mathbf{\omega }}^{B}=\mathbf{\omega }^{B},$ we can compute
\begin{eqnarray*}
\widehat{\mathbf{D}}_{II^{\prime }}\widehat{\mathbf{\omega }}^{J} &=&%
\widehat{\mathbf{D}}_{II^{\prime }}\widehat{\mathbf{\omega }}^{J}+\delta
_{I}^{J}\tilde{\Upsilon}_{KJ^{\prime }}\mbox{ and }\widehat{\mathbf{D}}%
_{I^{\prime }}^{(I}\widehat{\mathbf{\omega }}^{J)}=\varpi ^{-1}\mathbf{D}%
_{I^{\prime }}^{(I}\mathbf{\omega }^{J)}, \\
\widehat{\mathbf{D}}_{AA^{\prime }}\widehat{\mathbf{\omega }}^{B} &=&%
\widehat{\mathbf{D}}_{AA^{\prime }}\widehat{\mathbf{\omega }}^{B}+\delta
_{A}^{B}\tilde{\Upsilon}_{CA^{\prime }}\mbox{ and }\widehat{\mathbf{D}}%
_{A^{\prime }}^{(A}\widehat{\mathbf{\omega }}^{B)}=\varpi ^{-1}\mathbf{D}%
_{A^{\prime }}^{(A}\mathbf{\omega }^{B)\prime },
\end{eqnarray*}%
where $\tilde{\Upsilon}_{II^{\prime }}:=\mathbf{\tilde{D}}_{II^{\prime }}\ln
\varpi $ $\Upsilon _{\dot{C}\dot{A}^{\prime }},\tilde{\Upsilon}_{AA^{\prime
}}:=\mathbf{\tilde{D}}_{AA^{\prime }}\ln \varpi $ are constructed similarly
to the conformal d--vector (\ref{confdvect}).

If $\boldsymbol{D=\tilde{D},}$ we can obtain from the Theorem \ref{thb1} the

\begin{corollary}
The anti--sefldual Weyl d--spinors corresponding to the Cartan d--curvature \newline $%
\mathbf{\tilde{R}}_{\ \beta \gamma \tau }^{\alpha }=\{\tilde{R}_{\ hjk}^{i},%
\tilde{P}_{\ jka}^{i},\tilde{S}_{\ bcd}^{a}\}$ (\ref{curvcart}) are
characterized by $h$-- and $v$--components $\{\tilde{\Psi}_{LIJK},\tilde{\Psi%
}_{DIJK},\tilde{\Psi}_{DABC}\}.$
\end{corollary}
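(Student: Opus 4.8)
The plan is to read the statement off Theorem \ref{thb1} specialized to $\mathbf{D}=\mathbf{\tilde{D}}$ on a tangent bundle with $4+4$ splitting, so that every Greek d--tensor index $\alpha$ decomposes into an $h$--index $i$ and a $v$--index $a$, while every unprimed ``dot'' spinor index $\dot{A}$ splits into an $h$--spinor index $I$ and a $v$--spinor index $A$. First I would recall from Theorem \ref{thb1} that, for any metric compatible d--connection in N--adapted frames, the conformal Weyl d--tensor splits as $\mathbf{C}_{\ \alpha\beta\gamma\tau}=\ ^{-}\mathbf{C}_{\ \alpha\beta\gamma\tau}+\ ^{+}\mathbf{C}_{\ \alpha\beta\gamma\tau}$, with the anti--selfdual part carried entirely by the totally symmetric unprimed curvature spinor, $\ ^{-}\mathbf{C}_{\ \alpha\beta\gamma\tau}=\Psi_{\dot{A}\dot{B}\dot{C}\dot{T}}\varepsilon_{\dot{A}^{\prime}\dot{B}^{\prime}}\varepsilon_{\dot{C}^{\prime}\dot{T}^{\prime}}$. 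This is the structural input that lets the corollary be obtained by pure index bookkeeping.

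Next I would substitute $\mathbf{D}\to\mathbf{\tilde{D}}$, so that the Riemann d--tensor is given by the three nontrivial Cartan components $\{\tilde{R}_{\ hjk}^{i},\tilde{P}_{\ jka}^{i},\tilde{S}_{\ bcd}^{a}\}$ of (\ref{curvcart}), and apply to each component the abstract--index translation $i\to II^{\prime}$ and $a\to AA^{\prime}$. The $h$-- and $v$--projected spinor expansions of $\tilde{R}_{\ ijkh}$ and $\tilde{P}_{\ ijka}$ are already displayed in (\ref{curvfcsp}); the pure $v$--expansion of $\tilde{S}_{\ bcd}^{a}$ follows from the same rule with all four indices of $v$--type. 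Isolating in each expansion the coefficient of the double $\varepsilon$--product built from the primed indices extracts the totally symmetric unprimed spinor, which is precisely the anti--selfdual Weyl contribution. By the index content of the three tensor components --- four $h$--indices, three $h$-- plus one $v$--index, and four $v$--indices respectively --- these spinors are $\tilde{\Psi}_{LIJK}$, $\tilde{\Psi}_{DIJK}$ and $\tilde{\Psi}_{DABC}$, establishing the claim (labels differing from (\ref{curvfcsp}) only by the total symmetry of $\tilde{\Psi}$).

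The genuinely routine part is this distribution of the four spinor slots across the $h$-- and $v$--sectors; the step that requires care is the mixed $h$--$v$ component $\tilde{P}_{\ jka}^{i}$, where I must check that, despite the nontrivial torsion of $\mathbf{\tilde{D}}$ recorded in (\ref{torscdc}), the trace--free Weyl projection (\ref{wtcd}) still delivers a spinor $\tilde{\Psi}_{DIJK}$ that is symmetric in its unprimed indices and hence genuinely qualifies as an anti--selfdual Weyl d--spinor rather than mixing with $\tilde{\Phi}$-- or $\Lambda$--type terms. Since Theorem \ref{thb1} already guarantees the anti--selfdual/selfdual split for an arbitrary metric compatible d--connection, this symmetry is inherited once $\mathbf{D}=\mathbf{\tilde{D}}$ is imposed, so the only remaining obligation is to confirm that the $4+4$ splitting assigns the spinor slots exactly as the three Cartan curvature components dictate.
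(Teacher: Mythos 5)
Your proposal is correct and takes essentially the same route as the paper: the paper obtains this corollary precisely by setting $\mathbf{D}=\mathbf{\tilde{D}}$ in Theorem \ref{thb1} and reading off the $h$-- and $v$--spinor expansions (\ref{curvfcsp}) of the Cartan curvature components $\{\tilde{R}_{\ hjk}^{i},\tilde{P}_{\ jka}^{i},\tilde{S}_{\ bcd}^{a}\}$, which is exactly your index--bookkeeping argument. The care you take with the mixed component $\tilde{P}_{\ jka}^{i}$ (inheriting the anti--selfdual symmetry from the validity of Theorem \ref{thb1} for any metric compatible d--connection, despite the nontrivial torsion (\ref{torscdc})) is precisely the justification the paper leaves implicit.
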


This results in a set of three conditions of compatibility:

\begin{lemma}
 The Finsler--Cartan twistor equations (\ref{nhtweq}) are compatible if and
only if
\begin{equation}
\tilde{\Psi}_{LIJK}\mathbf{\omega }^{L}=0,\tilde{\Psi}_{DIJK}\mathbf{\omega }%
^{D}=0,\tilde{\Psi}_{DABC}\mathbf{\omega }^{D}=0.  \label{fctwcomp}
\end{equation}
\end{lemma}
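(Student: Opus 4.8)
The plan is to transcribe the argument used for the nonholonomic compatibility condition (\ref{nhtwcomp}) to the $4+4$--split Cartan d--connection $\mathbf{\tilde{D}}$, treating the horizontal and vertical members of the system (\ref{fctweq}) separately and then analysing their cross--compatibility. For a fixed $\mathbf{\tilde{N}}$ the three anti--selfdual Weyl d--spinors $\{\tilde{\Psi}_{LIJK},\tilde{\Psi}_{DIJK},\tilde{\Psi}_{DABC}\}$ identified in the preceding Corollary exhaust the conformally invariant curvature obstructions, so I expect exactly three integrability conditions, one attached to each of the Cartan curvature pieces $\tilde{R}_{\ hjk}^{i},\tilde{P}_{\ jka}^{i},\tilde{S}_{\ bcd}^{a}$ appearing in (\ref{curvcart}) and (\ref{curvfcsp}).

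First I would treat the purely horizontal equation $\mathbf{D}_{I^{\prime }}^{(I}\mathbf{\omega }^{J)}=0$. Acting on it with a second h--covariant derivative $\mathbf{\tilde{D}}^{K I^{\prime }}$ and symmetrising over the three undotted h--indices reduces the left--hand side to an h--box operator $\square^{(KI}\mathbf{\omega }^{J)}$, exactly as in the unsplit computation $\mathbf{D}^{\dot{A}^{\prime }(\dot{C}}\mathbf{D}_{\dot{A}^{\prime }}^{\dot{A}}\mathbf{\omega }^{\dot{B})}=-\square ^{(\dot{C}\dot{A}}\mathbf{\omega }^{\dot{B})}$ of the preceding Lemma. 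Feeding in the spinor Ricci identity from Theorem \ref{thb1}, specialised to $\mathbf{\tilde{D}}$ with curvature spinors (\ref{curvfcsp}), the scalar $\Lambda$--term and the $\tilde{\Phi}$ (Ricci) contribution drop out under the total symmetrisation, leaving only $-\tilde{\Psi}_{\ \ \ L}^{KIJ}\mathbf{\omega }^{L}$. Since this must vanish along any solution, one obtains $\tilde{\Psi}_{LIJK}\mathbf{\omega }^{L}=0$. The identical manoeuvre on the vertical equation $\mathbf{D}_{A^{\prime }}^{(A}\mathbf{\omega }^{B)}=0$, now using the $\tilde{S}_{\ bcd}^{a}$--part of (\ref{curvfcsp}), yields $\tilde{\Psi}_{DABC}\mathbf{\omega }^{D}=0$.

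The middle condition is the genuinely new ingredient and arises from cross--differentiation. Here I would form the mixed second derivative coupling an h-- and a v--index, i.e. evaluate the mixed spinor commutator $[\mathbf{\tilde{D}}_{i},\mathbf{\tilde{D}}_{a}]$ on $\mathbf{\omega }$ along a solution of (\ref{fctweq}); by the $h$-$v$ Ricci identity of Theorem \ref{thb1} this picks out precisely the mixed Cartan curvature $\tilde{P}_{\ jka}^{i}$, whose anti--selfdual Weyl spinor is $\tilde{\Psi}_{DIJK}$. After the appropriate symmetrisation the non--Weyl pieces again cancel and one is left with $\tilde{\Psi}_{DIJK}\mathbf{\omega }^{D}=0$, the contraction running over the single v--spinor index $D$. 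For the converse, when all three conditions hold the curvature obstruction annihilates $\mathbf{\omega }$ in every N--adapted direction, and the general solution can be written down explicitly in the frame (\ref{dder})--(\ref{ddif}) as in (\ref{nhtsol}) applied separately to the $h$- and $v$--components; hence compatibility is equivalent to (\ref{fctwcomp}).

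The main obstacle will be bookkeeping the nontrivial d--torsion of $\mathbf{\tilde{D}}$ (components (\ref{torscdc})) inside the mixed commutator: unlike the Levi--Civita case, $[\mathbf{\tilde{D}}_{i},\mathbf{\tilde{D}}_{a}]$ carries an explicit torsion term $\tilde{T}_{\ ia}^{\bullet }\mathbf{\tilde{D}}_{\bullet }$ in addition to the curvature term, and one must verify that on a solution of (\ref{fctweq}) these torsion pieces are either absorbed into the symmetrised derivative or cancel, so that only the Weyl d--spinor $\tilde{\Psi}_{DIJK}$ survives. Checking that the symmetrisation annihilates the $\tilde{\Phi}$-- and $\Lambda$--contributions in each of the three channels --- and in particular disentangling the $h$- and $v$--index symmetries in the mixed channel --- is the step where the N--adapted d--spinor calculus has to be carried out with care.
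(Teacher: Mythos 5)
Your proposal is correct and is essentially the paper's own (very terse) argument: the paper obtains the Lemma directly from the preceding Corollary, which assigns to the three Cartan curvature blocks $\tilde{R}_{\ hjk}^{i},\tilde{P}_{\ jka}^{i},\tilde{S}_{\ bcd}^{a}$ of (\ref{curvcart}) the three anti--selfdual Weyl d--spinors $\{\tilde{\Psi}_{LIJK},\tilde{\Psi}_{DIJK},\tilde{\Psi}_{DABC}\}$, and then transcribes, channel by channel (formally via the Sasaki lift $\mathbf{g}\rightarrow \mathbf{\tilde{g}}$, $\mathbf{N}\rightarrow \mathbf{\tilde{N}}$), the one--line 2+2 computation $\mathbf{D}^{\dot{A}^{\prime }(\dot{C}}\mathbf{D}_{\dot{A}^{\prime }}^{\dot{A}}\mathbf{\omega }^{\dot{B})}=-\square ^{(\dot{C}\dot{A}}\mathbf{\omega }^{\dot{B})}=-\Psi _{\quad \dot{T}}^{\dot{A}\dot{B}\ \dot{C}}\mathbf{\omega }^{\dot{T}}$, which is exactly your $h$--$h$, mixed $h$--$v$ and $v$--$v$ analysis based on the commutator decompositions $\mathbf{\Delta }_{ij},\mathbf{\Delta }_{ia},\ldots$ of Theorem \ref{thb1}. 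The torsion bookkeeping you flag as the main obstacle is handled implicitly in the paper's formalism, since the box operators are defined through the d--commutator identity $\left( \mathbf{\Delta }_{\alpha \beta }-\mathbf{T}_{\ \alpha \beta }^{\gamma }\mathbf{D}_{\gamma }\right) \mathbf{V}^{\tau }=\mathbf{R}_{\ \alpha \beta \gamma }^{\tau }\mathbf{V}^{\gamma }$, so the torsion of $\mathbf{\tilde{D}}$ is already absorbed into the curvature d--spinors appearing in (\ref{curvfcsp}).
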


All results on Finsler--Cartan twistors can be proved using formal Sasaki
lifts $\mathbf{g\rightarrow \tilde{g}}$ (\ref{slm}) and $\mathbf{%
N\rightarrow \tilde{N}}$ (\ref{cncl}) with spinor coefficients (\ref%
{dmetrspindec}) for the constructions with nonholonomic twistors and
canonical d--connections.

\begin{theorem}
If the compatibility conditions (\ref{fctwcomp}) are satisfied, we can solve
the nonholonomic twistor equations (\ref{fctweq}) in general form for $h$%
--components,%
\begin{eqnarray}
\mathbf{\omega }^{J} &=&\mathbf{\mathring{\omega}}^{J}-iu^{^{JJ\prime }}%
\mathring{\pi}_{J\prime }\mbox{ and }\pi _{J\prime }=\mathring{\pi}_{J\prime
},  \label{htfcso} \\
\mathbf{D}_{JJ\prime }\mathbf{\omega }^{K} &=&-i\delta _{J}^{K}\pi _{J\prime
},  \notag
\end{eqnarray}%
and for $v$--components%
\begin{eqnarray}
\mathbf{\omega }^{B} &=&\mathbf{\mathring{\omega}}^{B}-iu^{^{BB\prime }}%
\mathring{\pi}_{B\prime }\mbox{ and }\pi _{B\prime }=\mathring{\pi}_{B\prime
},  \label{vtfcso} \\
\mathbf{D}_{BB\prime }\mathbf{\omega }^{C} &=&-i\delta _{B}^{C}\pi _{B\prime
},  \notag
\end{eqnarray}%
where the point $u^{\alpha }=(u^{^{II\prime }},u^{^{AA\prime }})\in \mathbf{%
TM}$ and constant values are considered with respect to N--adapted frames (%
\ref{dder}) and (\ref{ddif}) when the conditions (\ref{fctwcomp}) are
satisfied.
\end{theorem}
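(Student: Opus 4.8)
The plan is to reduce the Finsler--Cartan statement to the already--proved nonholonomic twistor Theorem by exploiting the $4+4$ decoupling of the tangent bundle. On $\mathbf{TM}$ with $\boldsymbol{D}=\boldsymbol{\tilde D}$, the twistor system (\ref{fctweq}) splits into an $h$--sector, in the indices $I,J,K$, and a $v$--sector, in the indices $A,B,C$, each of which is formally the N--adapted two--spinor twistor equation on its respective subspace. First I would invoke the closing remark of the section: all Finsler--Cartan constructions follow from the formal Sasaki lifts $\mathbf{g}\rightarrow\mathbf{\tilde g}$ (\ref{slm}) and $\mathbf{N}\rightarrow\mathbf{\tilde N}$ (\ref{cncl}) together with the spinor decompositions (\ref{dmetrspindec}). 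This lets me treat $\mathbf{D}_{I'}^{(I}\mathbf{\omega}^{J)}=0$ and $\mathbf{D}_{A'}^{(A}\mathbf{\omega}^{B)}=0$ exactly as the two copies of the scalar twistor equation handled before.

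Next I would establish integrability in each sector. Taking a second Cartan covariant derivative of (\ref{fctweq}) and symmetrizing, as in the proof of the compatibility Lemma, one gets an obstruction proportional to $\tilde\Psi_{LIJK}\mathbf{\omega}^{L}$ in the $h$--sector and to $\tilde\Psi_{DABC}\mathbf{\omega}^{D}$ in the $v$--sector, using the curvature--spinor decomposition (\ref{curvfcsp}) of Theorem \ref{thb1}. The three conditions (\ref{fctwcomp}) are precisely the vanishing of these terms: $\tilde\Psi_{LIJK}\mathbf{\omega}^L=0$ and $\tilde\Psi_{DABC}\mathbf{\omega}^D=0$ kill the obstructions internal to each sector, while the mixed condition $\tilde\Psi_{DIJK}\mathbf{\omega}^D=0$ suppresses the cross--coupling between the $h$-- and $v$--spinor parts. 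Hence, under (\ref{fctwcomp}), each sector equation is consistently integrable.

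Then I would pass to N--adapted frames (\ref{dder}) and (\ref{ddif}) in which the anti--self--dual Weyl d--spinors vanish, $\tilde\Psi=0$; this is always possible by Conclusion \ref{concla}, where an N--connection is chosen so that the canonical (Cartan) Weyl d--tensor is zero. In such frames the Cartan d--connection acts on each spinor sector as an effectively flat derivative, so the twistor equation reduces to the flat two--spinor form, whose general solution is linear in the position. For the $h$--sector this yields $\mathbf{\omega}^J=\mathbf{\mathring\omega}^J-iu^{JJ'}\mathring\pi_{J'}$ with constant $\mathbf{\mathring\omega}^J,\mathring\pi_{J'}$, and the relation $\mathbf{D}_{JJ'}\mathbf{\omega}^K=-i\delta_J^K\pi_{J'}$ fixes $\pi_{J'}=\mathring\pi_{J'}$, giving (\ref{htfcso}). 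The identical computation in the $v$--sector produces (\ref{vtfcso}), and the complete solution on $\mathbf{TM}$ is parametrized by the point $u^\alpha=(u^{II'},u^{AA'})$.

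The step I expect to be the main obstacle is the decoupling/closure argument, since, unlike the Levi--Civita case, the Cartan d--connection carries nontrivial torsion completely determined by $(\mathbf{g},\mathbf{N})$. The mixed $h$--$v$ torsion and curvature components --- the $\tilde P$--type terms together with the coefficients $\tilde C^i_{ja}$ and $\tilde\Omega^a_{ij}$ of (\ref{torscdc}) --- can in principle couple the two sectors and reintroduce obstructions even when each pure sector is separately integrable; the intermediate condition $\tilde\Psi_{DIJK}\mathbf{\omega}^D=0$ is exactly what is needed to control this. Verifying carefully that, once $\tilde\Psi=0$ has been attained in the chosen N--adapted frame, these torsion terms do not spoil the flat--space form of the solution is the delicate part of the argument.
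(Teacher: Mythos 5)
Your proposal is correct and takes essentially the same route as the paper: the paper also proves this theorem by reduction, using the formal Sasaki lifts $\mathbf{g}\rightarrow \mathbf{\tilde{g}}$ (\ref{slm}) and $\mathbf{N}\rightarrow \mathbf{\tilde{N}}$ (\ref{cncl}) to split (\ref{fctweq}) into $h$-- and $v$--copies of the earlier $2+2$ nonholonomic twistor theorem, with the three conditions (\ref{fctwcomp}) supplying the compatibility lemma and the explicit linear solutions written with respect to N--adapted frames. Your additional attention to the mixed $\tilde{P}$--type curvature/torsion couplings, controlled by $\tilde{\Psi}_{DIJK}\mathbf{\omega}^{D}=0$, is precisely the verification the paper leaves implicit in its appeal to proof by similarity in N--adapted frames, and it does not alter the structure of the argument.
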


We can generalize the concept of twistors for flat spaces to nonholonomic
manifolds which are conformally flat in N--adapted form:

\begin{definition}
\begin{itemize}
\item A horizontal twisor space $h\mathbb{T}^{\dot{\alpha}}$ is a four
dimensional complex vector space (with real eight dimensions) determined by
elements of type $h\boldsymbol{Z}^{\dot{i}}=(\mathbf{\omega }^{I},\pi
_{I^{\prime }}),$ $\dot{i}=1,2,3,4$ with the two spinor components $h%
\boldsymbol{Z}^{I}=\mathbf{\omega }^{I}$ and $\boldsymbol{Z}_{I^{\prime
}}=\pi _{I^{\prime }}$ defined by solutions of type (\ref{htfcso}).

\item A vertical twisor space $v\mathbb{T}^{\dot{\alpha}}$ is a four
dimensional complex vector space (with real eight dimensions) determined by
elements of type $h\boldsymbol{Z}^{\dot{a}}=(\mathbf{\omega }^{A},\pi
_{A^{\prime }}),\dot{a}=5,6,7,8$ with the two spinor components $v%
\boldsymbol{Z}^{A}=\mathbf{\omega }^{A}$ and $\boldsymbol{Z}_{A^{\prime
}}=\pi _{A^{\prime }}$ defined by solutions of type (\ref{vtfcso}).
\end{itemize}
\end{definition}

We conclude that nonholonomic twistor constructions for the Finsler--Cartan
spaces dub as $h$- and $v$--components the values introduced via canonical
d--connections on $\mathbf{V}$. Re--defining the abstract index formalism
for d--tensors and d--spinors, all formulas can be proved by similarity in
N--adapted frames.

\subsection{Nonholonomic local and global twistors}

On N--adapted conformally flat nonholonomic manifolds, the solutions (\ref%
{nhtsol}) of generalized twistor equations (\ref{nhtweq}) define certain
global anholonomic twistor structures. If the conditions (\ref{fctwcomp}) are
not satisfied, we can only define a nonholonomic twistor bundle on a $%
\boldsymbol{V}$ when the geometric object depend on base manifold points. This
does not define an alternative description of nonholonomic manifolds (and
Finsler--Cartan geometries) \ in terms of certain generalized nonholonomic
twistor spaces. For a prescribed N--connection structure $N,$ we can
construct N--adapted local twistors with properties similar to those of
holonomic twistors considered in Chapter 6, paragraph 9, in \cite{penrr}.

Nevertheless, nonholonomic/ Finsler spaces are characterized by more rich
geometric structures which provide us new possibilities and methods for
constructing new classes of generalized twistor -- Finsler spaces and
applications in general relativity and modifications. We study two models of
N--adapted twistor spaces in local and global forms.

\subsubsection{N--adapted local twistors and torsionless conditions}

Let us consider a point $u\in \mathbf{V}$ for nonholonomic data $(\mathbf{%
g,N,D}).$

\begin{definition}
A local N--adapted twistor $\ _{u}\mathbf{Z}^{\dot{\alpha}}$ (in brief, local
d--twistor) in a point $u$ is given by a couple of N--adapted two--spinors $%
(\ _{u}\mathbf{\omega }^{\dot{A}},\ _{u}\pi _{\dot{A}^{\prime }})$ in this
point, which in a chosen anholonomic frame (\ref{dder}) and (\ref{ddif}) satisfied the rules: if $\mathbf{g}_{\alpha \beta }\rightarrow \widehat{%
\mathbf{g}}_{\alpha \beta }:=\varpi ^{2}\mathbf{g}_{\alpha \beta }$ then
\begin{equation*}
\ _{u}\mathbf{Z}^{\dot{\alpha}}=(\ _{u}\mathbf{\omega }^{\dot{A}},\pi _{\dot{%
A}^{\prime }})\rightarrow \ _{u}\widehat{\mathbf{Z}}^{\dot{\alpha}}=(\ _{u}%
\widehat{\mathbf{\omega }}^{\dot{A}}=\ _{u}\mathbf{\omega }^{\dot{A}},\ _{u}%
\widehat{\pi }_{\dot{A}^{\prime }}=\ _{u}\pi _{\dot{A}^{\prime }}+i\Upsilon
_{\dot{A}\dot{A}^{\prime }}\ _{u}\mathbf{\omega }^{\dot{A}}).
\end{equation*}
\end{definition}

The local d--vectors $\ _{u}\mathbf{Z}^{\dot{\alpha}}$ and $\ _{u}\widehat{%
\mathbf{Z}}^{\dot{\alpha}}$ depend functionally, respectively, on $(\mathbf{%
g,N,D,}\varpi ).$ The set $\mathbf{Z}^{\dot{\alpha}}=$ $\cup _{u\in \mathbf{V%
}}(\ _{u}\mathbf{Z}^{\dot{\alpha}})$ of all local twistors $\ _{u}\mathbf{Z}%
^{\dot{\alpha}}$ taken in all points $u$ of $\mathbf{V}$ defines a vector
bundle, when the fiber in $u$ is a complex four dimensional vector space (i.
e. the spaces of local N--adapted twistors in $u$). Such a vector bundle is
nonholonomic being endowed with N--connection structure. For simplicity, we
shall omit the left low label "u" and write a local twistor as $\mathbf{Z}^{%
\dot{\alpha}}$ if that will not result in ambiguities.

In N--adapted (and/or general local) form, the connection $\nabla $ can be constructed to possess
zero coefficients in a point and/or along a curve though such a point (the
so--called normal coordinates). This allows us to define transports of usual
local twistors along curves with tangent vector fields $t^{\dot{A}\dot{A}%
^{\prime }}\subset T\mathbf{V}.$ We can generalize such formulas for
d--connections $\mathbf{D}$ and d--vectors $\mathbf{t}^{\dot{A}\dot{A}%
^{\prime }}$ and consider a local d--twistor $\mathbf{Z}^{\dot{\alpha}}$
which is constant in N--adapted direction $\mathbf{t}^{\alpha },$ when%
\begin{eqnarray}
t^{\dot{A}\dot{A}^{\prime }}\mathbf{D}_{\dot{A}\dot{A}^{\prime }}\mathbf{%
\omega }^{\dot{B}}+it^{\dot{B}\dot{A}^{\prime }}\pi _{\dot{A}^{\prime }}
&=&0,  \label{transpdtw} \\
t^{\dot{A}\dot{A}^{\prime }}\mathbf{D}_{\dot{A}\dot{A}^{\prime }}\pi _{\dot{B%
}^{\prime }}+it^{\dot{A}\dot{A}^{\prime }}\mathbf{P}_{\dot{A}\dot{A}^{\prime
}\dot{B}\dot{B}^{\prime }}\mathbf{\omega }^{\dot{B}} &=&0,  \notag
\end{eqnarray}%
where $\mathbf{P}_{\dot{A}\dot{A}^{\prime }\dot{B}\dot{B}^{\prime }}$ is
related to the Ricci d--tensor and scalar curvature of $\mathbf{D}$ as in
formula (\ref{pdt}) and a curve is defined in vicinity of a point $\ ^{0}u$
in the form $u^{\dot{A}\dot{A}^{\prime }}(\tau )=\ ^{0}u^{\dot{A}\dot{A}%
^{\prime }}+t^{\dot{A}\dot{A}^{\prime }}\tau ,$ for a real parameter $\tau .$
The d--twistor transport equations (\ref{transpdtw}) have constant twistor
solutions in $\ ^{0}u$ and along $u(\tau )$ which in any point satisfy the
conditions
\begin{equation}
\mathbf{D}_{\dot{A}^{\prime }}^{(\dot{A}}\mathbf{\omega }^{\dot{B})}=0%
\mbox{
and }\pi _{\dot{A}^{\prime }}=\frac{1}{2}i\mathbf{D}_{\dot{A}\dot{A}^{\prime
}}\mathbf{\omega }^{\dot{A}}.  \label{solparaldtr}
\end{equation}

\begin{corollary}
The solutions for local d--twistors (\ref{solparaldtr}) can be globalized on
$\mathbf{V}$ if and only if the conditions (\ref{nhtwcomp}) are satisfied,
for instance, if the Weyl d--spinor vanishes.
\end{corollary}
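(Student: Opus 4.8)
The plan is to read ``globalization'' as the existence of a d--twistor field on all of $\mathbf{V}$ that is covariantly constant under the local twistor transport (\ref{transpdtw}), equivalently that solves (\ref{solparaldtr}) globally rather than only in a neighbourhood of a point $\ ^{0}u$ along a single curve. Stripping the arbitrary N--adapted direction $\mathbf{t}^{\dot A\dot A'}$ from (\ref{transpdtw}), the transport defines a linear connection $\mathbf{D}^{\mathbb{T}}$ on the rank--four d--twistor bundle $\mathbf{Z}^{\dot\alpha}=\cup_{u}(\ _{u}\mathbf{Z}^{\dot\alpha})$ by
\[
\mathbf{D}_{\dot A\dot A'}\mathbf{\omega}^{\dot B}=-i\,\delta_{\dot A}^{\dot B}\pi_{\dot A'},\qquad
\mathbf{D}_{\dot A\dot A'}\pi_{\dot B'}=-i\,\mathbf{P}_{\dot A\dot A'\dot B\dot B'}\mathbf{\omega}^{\dot B}.
\]
A global covariantly constant section exists if and only if the curvature of $\mathbf{D}^{\mathbb{T}}$ annihilates $\mathbf{Z}^{\dot\alpha}$; this flatness/path--independence statement is the precise content of ``can be globalized on $\mathbf{V}$''.

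First I would compute this twistor curvature by applying the d--commutator $\mathbf{\Delta}_{\alpha\beta}=2\mathbf{D}_{[\alpha}\mathbf{D}_{\beta]}$ (\ref{dcommut}) to the pair $(\mathbf{\omega}^{\dot B},\pi_{\dot B'})$, using the spinor splitting $\mathbf{\Delta}_{\alpha\beta}=\varepsilon_{\dot A\dot B}\square_{\dot A'\dot B'}+\varepsilon_{\dot A'\dot B'}\square_{\dot A\dot B}$ and the commutator actions $\square_{\dot A\dot B}\varkappa_{\dot C}=[\Psi_{\dot C\dot T\dot A\dot B}+\Lambda(\varepsilon_{\dot A\dot C}\varepsilon_{\dot B\dot T}+\varepsilon_{\dot A\dot T}\varepsilon_{\dot B\dot C})]\varkappa^{\dot T}$ and $\square_{\dot A'\dot B'}\varkappa_{\dot C}=\Phi_{\dot C\dot T\dot A'\dot B'}\varkappa^{\dot T}$ from Theorem \ref{thb1}. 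Differentiating the first transport equation once more and anti--symmetrising produces the Weyl term $\Psi_{\dot C\dot T\dot A\dot B}\mathbf{\omega}^{\dot T}$ acting on the $\mathbf{\omega}$--component, while substituting the first equation into the second feeds the $\Phi$-- and $\Lambda$--pieces of $\square$ into the $\pi$--component.

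The crux is then to show that the Ricci/scalar contributions cancel. This is exactly why the connection carries $\mathbf{P}_{\alpha\beta}$ defined by (\ref{pdt}): its derivatives combine with the $\Phi$ and $\Lambda$ terms of the commutators so that, on the anti--self--dual part, the only surviving curvature of $\mathbf{D}^{\mathbb{T}}$ is the conformally invariant Weyl d--spinor acting as $\Psi_{\dot T\dot A\dot B\dot C}\mathbf{\omega}^{\dot T}$. This reproduces Penrose's local--twistor curvature computation, now for $\mathbf{D}$ which carries the nonholonomically induced torsion $\mathcal{T}[\mathbf{g,N}]$; I would carry out the contractions in N--adapted frames (\ref{dder}) and (\ref{ddif}), where the formal d--spinor calculus mirrors the holonomic one, and check that the torsion terms entering $\mathbf{\Delta}_{\alpha\beta}$ are consistently absorbed and generate no obstruction beyond $\Psi\mathbf{\omega}$. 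I expect this cancellation, i.e. the verification that the N--adapted twistor connection is curvature--flat precisely modulo $\Psi_{\dot T\dot A\dot B\dot C}\mathbf{\omega}^{\dot T}$, to be the main technical obstacle.

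Finally I would close the equivalence. If the twistor curvature annihilates $\mathbf{Z}^{\dot\alpha}$, i.e. $\Psi_{\dot T\dot A\dot B\dot C}\mathbf{\omega}^{\dot T}=0$, the transport (\ref{transpdtw}) is integrable and the covariantly constant section extends globally with constant data $\mathbf{\mathring\omega}^{\dot B},\mathring\pi_{\dot B'}$ as in the general solution (\ref{nhtsol}); in particular the vanishing of the anti--self--dual Weyl d--spinor $\Psi_{\dot A\dot B\dot C\dot T}$ (equivalently ${}^{-}\mathbf{C}_{\alpha\beta\gamma\tau}=0$ of Theorem \ref{thb1}) makes the condition hold for every $\mathbf{\omega}^{\dot T}$, giving the ``for instance'' clause. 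Conversely, a global solution forces $\mathbf{\omega}^{\dot B}$ to satisfy the twistor equation (\ref{nhtweq}) everywhere, whence the compatibility Lemma yields $\Psi_{\dot T\dot A\dot B\dot C}\mathbf{\omega}^{\dot T}=0$, i.e. (\ref{nhtwcomp}). This establishes the stated ``if and only if''.
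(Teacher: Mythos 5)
Your proposal is correct in substance, but it takes a noticeably different (and heavier) route than the paper. The paper treats this corollary as an immediate consequence of two results it has already established: the first equation of (\ref{solparaldtr}) \emph{is} the nonholonomic twistor equation (\ref{nhtweq}), the earlier Lemma says (\ref{nhtweq}) is compatible if and only if (\ref{nhtwcomp}) holds, and the earlier Theorem supplies the explicit global solution (\ref{nhtsol}) (with $\pi_{\dot A'}$ then recovered from $\pi_{\dot A'}=\frac{1}{2}i\mathbf{D}_{\dot A\dot A'}\mathbf{\omega}^{\dot A}$); no curvature computation is needed at this point. You instead set up the connection $\mathbf{D}^{\mathbb{T}}$ on the d--twistor bundle and argue via flatness/holonomy of that connection --- which is essentially the content of the paper's \emph{subsequent} Definition--Lemma giving the curvature matrix $\mathbf{K}_{\ \dot\alpha\mu\nu}^{\dot\beta}$. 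What your route buys is an explicit geometric picture of the obstruction (path-dependence of local twistor transport), done in the nonholonomic setting; what it costs is precision in the middle step: by the paper's own formula for $\mathbf{K}_{\ \dot\alpha\mu\nu}^{\dot\beta}$, the curvature of the d--twistor connection is \emph{not} exhausted by $\Psi_{\dot T\dot A\dot B\dot C}\mathbf{\omega}^{\dot T}$ --- it is upper triangular with the self--dual part $\overline{\mathbf{\Psi}}$ acting on $\pi_{\dot A'}$ and derivative terms $\mathbf{D}\mathbf{\Psi},\mathbf{D}\overline{\mathbf{\Psi}}$ in the off--diagonal block, so ``$\mathbf{K}\cdot\mathbf{Z}=0$'' is in general strictly stronger than (\ref{nhtwcomp}). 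Your proof survives this because the final equivalence is actually closed the paper's way (forward direction via the general-solution Theorem, converse via the compatibility Lemma applied to the first transport equation), but you should not present curvature-annihilation and (\ref{nhtwcomp}) as synonymous; the clean statement is that (\ref{nhtwcomp}) is the integrability condition of the $\mathbf{\omega}$--row of the transport, which is exactly what globalizing (\ref{solparaldtr}) requires.
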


\begin{definition}
The N--adapted covariant derivative operator (d--connection) along $\mathbf{t%
}^{\alpha }$ in the space of local d--twistors is by definition
\begin{equation}
\ _{\mathbf{t}}\mathbf{D:=}t^{\dot{A}\dot{A}^{\prime }}\mathbf{D}_{\dot{A}%
\dot{A}^{\prime }}  \label{ltdcon}
\end{equation}
\end{definition}

The local d--twistor d--connection (\ref{ltdcon}) allows us to compute the
variation of $\mathbf{Z}^{\dot{\alpha}}$ along $u(\tau ),$ following
formulas (\ref{transpdtw}) with nonzero right sides,
\begin{equation*}
\ _{\mathbf{t}}\mathbf{DZ}^{\dot{\alpha}}=\left( t^{\dot{A}\dot{A}^{\prime }}%
\mathbf{D}_{\dot{A}\dot{A}^{\prime }}\mathbf{\omega }^{\dot{B}}+it^{\dot{B}%
\dot{A}^{\prime }}\pi _{\dot{A}^{\prime }},t^{\dot{A}\dot{A}^{\prime }}%
\mathbf{D}_{\dot{A}\dot{A}^{\prime }}\pi _{\dot{B}^{\prime }}+it^{\dot{A}%
\dot{A}^{\prime }}\mathbf{P}_{\dot{A}\dot{A}^{\prime }\dot{B}\dot{B}^{\prime
}}\mathbf{\omega }^{\dot{B}}\right) .
\end{equation*}%
Here we note that in similar form we can define dual local d--twistors of
type $\boldsymbol{W}_{\dot{\alpha}}=(\mathbf{\lambda }_{\dot{A}},\mu ^{\dot{A%
}^{\prime }})$ with N--adapted conformally invariant scalar product
\begin{equation*}
\boldsymbol{W}_{\dot{\alpha}}\mathbf{Z}^{\dot{\alpha}}:=\mathbf{\lambda }_{%
\dot{A}}\mathbf{\omega }^{\dot{A}}+\mu ^{\dot{A}^{\prime }}\pi _{\dot{A}%
^{\prime }}
\end{equation*}%
and property that
\begin{equation*}
\ _{\mathbf{t}}\mathbf{D(\boldsymbol{W}_{\dot{\alpha}}\mathbf{Z}^{\dot{\alpha%
}})=(\ _{\mathbf{t}}\mathbf{D}\boldsymbol{W}_{\dot{\alpha}})\mathbf{Z}^{\dot{%
\alpha}}+\boldsymbol{W}_{\dot{\alpha}}(\ _{\mathbf{t}}\mathbf{DZ}^{\dot{%
\alpha}}).}
\end{equation*}

\begin{definition}
\textbf{-Lemma}:\ The curvature d--tensor of local d--twistor d--con\-nection is
 $$i\left(\ _{\mathbf{t}}\mathbf{D}\ _{\mathbf{v}}\mathbf{D-}\ _{\mathbf{v}}%
\mathbf{D}\ _{\mathbf{t}}\mathbf{D-}\ _{[\mathbf{t,v]}}\mathbf{D}\right)
\mathbf{\mathbf{Z}^{\dot{\beta}}} =\mathbf{Z}^{\dot{\alpha}}\mathbf{K}_{\
\dot{\alpha}}^{\dot{\beta}\ }(\mathbf{t},\mathbf{v}) =\mathbf{t}^{\mu }\mathbf{v}^{\nu }\mathbf{K}_{\ \dot{\alpha}\mu \nu }^{%
\dot{\beta}},$$
 for two d--vectors $\mathbf{t}^{\mu }=\mathbf{t}^{\dot{M}\dot{M}^{\prime }},%
\mathbf{v}^{\nu }=\mathbf{v}^{\dot{N}\dot{N}^{\prime }}\in T\mathbf{V}$ and
computed with N--adapted coefficients,%
\begin{equation*}
\mathbf{K}_{\ \dot{\alpha}\mu \nu }^{\dot{\beta}}=\left(
\begin{array}{cc}
i\varepsilon _{\dot{M}^{\prime }\dot{N}^{\prime }}\mathbf{\Psi }_{\ \dot{A}%
\dot{M}\dot{N}}^{\dot{B}} & \varepsilon _{\dot{M}\dot{N}}\mathbf{D}_{\dot{A}%
}^{\dot{A}^{\prime }}\overline{\mathbf{\Psi }}_{\ \dot{A}^{\prime }\dot{M}%
^{\prime }\dot{N}^{\prime }}^{\dot{B}^{\prime }}+\varepsilon _{\dot{M}%
^{\prime }\dot{N}^{\prime }}\mathbf{D}_{\dot{B}^{\prime }}^{\dot{B}}\mathbf{%
\Psi }_{\ \dot{A}\dot{M}\dot{N}}^{\dot{B}} \\
0 & -i\varepsilon _{\dot{M}\dot{N}}\overline{\mathbf{\Psi }}_{\ \dot{A}%
^{\prime }\dot{M}^{\prime }\dot{N}^{\prime }}^{\dot{B}^{\prime }}%
\end{array}%
\right) .
\end{equation*}
\end{definition}

Above constructions are determined by data $(\mathbf{g,N,D=}\nabla +\mathbf{Q%
}),$ see (\ref{distrel}). In general, they can be redefined for data $(%
\mathbf{g,}\nabla )$ using nonholonomic deformations.

\begin{theorem}
We can globalize in nonholonomic form the local twistor constructions for $\nabla $ if there is a
N--connection structure $\mathbf{N}$ and associated $\mathbf{D}$ for which $%
\mathbf{K}_{\ \dot{\alpha}\mu \nu }^{\dot{\beta}}=0$ and $\mathbf{\Delta }%
_{\alpha \beta }=0.$
\end{theorem}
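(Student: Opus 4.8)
The plan is to read the two hypotheses as complementary conditions: $\mathbf{K}_{\ \dot{\alpha}\mu \nu }^{\dot{\beta}}=0$ as the \emph{flatness} (integrability) of the local d--twistor d--connection $\ _{\mathbf{t}}\mathbf{D}$ from (\ref{ltdcon}), and $\mathbf{\Delta}_{\alpha\beta}=0$ as the \emph{reduction} forcing $\mathbf{D}=\nabla$. First I would unpack the curvature matrix $\mathbf{K}_{\ \dot{\alpha}\mu \nu }^{\dot{\beta}}$ from the preceding Definition--Lemma: its block structure shows that the diagonal entries are built from $\mathbf{\Psi}_{\ \dot{A}\dot{M}\dot{N}}^{\dot{B}}$ and its conjugate, while the single off--diagonal entry combines $\mathbf{D}\overline{\mathbf{\Psi}}$ and $\mathbf{D}\mathbf{\Psi}$. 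Hence, as a field equation, $\mathbf{K}_{\ \dot{\alpha}\mu \nu }^{\dot{\beta}}=0$ is equivalent to the vanishing of the anti--selfdual Weyl d--spinor, $\mathbf{\Psi}_{\dot{A}\dot{B}\dot{C}\dot{T}}=0$ (its conjugate and N--adapted derivatives then vanish automatically, so the off--diagonal block is zero as well), which is precisely the strong form of the compatibility condition (\ref{nhtwcomp}).

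Next I would invoke the flatness interpretation directly: with $\mathbf{K}_{\ \dot{\alpha}\mu \nu }^{\dot{\beta}}=0$ the commutator $i(\ _{\mathbf{t}}\mathbf{D}\ _{\mathbf{v}}\mathbf{D}-\ _{\mathbf{v}}\mathbf{D}\ _{\mathbf{t}}\mathbf{D}-\ _{[\mathbf{t,v}]}\mathbf{D})\mathbf{Z}^{\dot{\beta}}$ vanishes for all N--adapted d--vectors $\mathbf{t},\mathbf{v}$, so the d--twistor transport equations (\ref{transpdtw}) are integrable and path--independent. Parallel transport of a single local d--twistor then produces a global N--adapted twistor field satisfying (\ref{solparaldtr}); equivalently, by the Corollary preceding (\ref{ltdcon}), the solutions (\ref{solparaldtr}) globalize on $\mathbf{V}$ exactly when (\ref{nhtwcomp}) holds. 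This establishes the globalization of the nonholonomic twistor structure built from $\mathbf{D}$.

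It then remains to transfer this back to $\nabla$. Here I would use the last item of Theorem \ref{thb2}, which identifies $\mathbf{\Delta}_{\alpha\beta}=0$ with the Levi--Civita conditions (\ref{lccond}); by Corollary \ref{corol1} and the distortion relation (\ref{distrel}) these give $\mathbf{Q}=0$, i.e. $\mathbf{D}=\nabla$, together with $\mathcal{T}=0$ and $\Omega =0$. Consequently the d--spinor operators $\mathbf{D}_{\dot{A}\dot{A}^{\prime}}$, the transport equations (\ref{transpdtw}) and the term $\mathbf{P}_{\dot{A}\dot{A}^{\prime}\dot{B}\dot{B}^{\prime}}$ all coincide with their Levi--Civita counterparts, so the globalized nonholonomic twistor bundle is the genuine Penrose local--twistor bundle of $\nabla$. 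Combining the two steps yields the claim.

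The hard part will be the simultaneous imposition of the two hypotheses and checking their mutual consistency: $\mathbf{K}_{\ \dot{\alpha}\mu \nu }^{\dot{\beta}}=0$ constrains the conformal (Weyl) part of the curvature and is naturally conformally covariant, whereas $\mathbf{\Delta}_{\alpha\beta}=0$ constrains torsion/distortion and collapses the N--adapted splitting to the holonomic case. One must verify that after setting $\mathbf{D}=\nabla$ the vanishing of $\mathbf{K}_{\ \dot{\alpha}\mu \nu }^{\dot{\beta}}$ really reproduces the classical integrability of Penrose's twistor equation (vanishing of the conformally invariant Weyl spinor), and that no residual N--connection curvature $\Omega$ obstructs the path--independence used in the globalization. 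The cleanest route is to carry the whole argument in the N--adapted frames (\ref{dder}) and (\ref{ddif}), where, under (\ref{lccond}), the nonholonomy coefficients $W_{\alpha\beta}^{\gamma}$ vanish and all formulas reduce term by term to those of \cite{penrr}.
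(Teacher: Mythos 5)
Your first step is exactly the paper's route: reading $\mathbf{K}_{\ \dot{\alpha}\mu \nu }^{\dot{\beta}}=0$ as vanishing of the Weyl d--spinor $\mathbf{\Psi}$ of $\mathbf{D}$, hence the compatibility condition (\ref{nhtwcomp}), hence path--independent d--twistor transport and globalization of the solutions (\ref{solparaldtr}) --- this is the Corollary stated just before the theorem. The divergence, and the genuine problem, is in your second step. You use $\mathbf{\Delta }_{\alpha \beta }=0\Leftrightarrow$ (\ref{lccond}) to conclude $\mathbf{Q}=0$ and therefore $\mathbf{D}=\nabla$ \emph{as the same connection}, so that the globalized bundle "is the genuine Penrose local--twistor bundle of $\nabla$". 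But once you identify the two connections, $\mathbf{K}_{\ \dot{\alpha}\mu \nu }^{\dot{\beta}}=0$ becomes the statement that the Weyl spinor of $\nabla$ itself vanishes; your two hypotheses then jointly force the spacetime to be conformally flat, and the theorem collapses to the classical Penrose statement. That is precisely what the paper's proof is built to avoid: it invokes Conclusion \ref{concla} and insists that $\mathbf{D}$ and $\nabla$ remain \emph{distinct} geometric objects even when their coefficients coincide in N--adapted frames, because a d--connection and a holonomic linear connection obey different transformation laws under frame/coordinate changes. On the paper's reading, $\mathbf{K}=0$ is a condition on the d--twistor curvature of the auxiliary connection and does \emph{not} entail $C_{\tau \alpha \beta \gamma }=0$ for $\nabla$; the globalization of the local twistor constructions for $\nabla$ is achieved "in nonholonomic form", i.e. encoded through the distortion relation (\ref{distrel}), not through equality of connections. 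Indeed, immediately after the proof the paper concludes that this mechanism works "even [when] the standard twistor equations are not compatible for general curved spacetimes" --- a conclusion that is false under your reading.

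Concretely, then, the gap is that the identification $\mathbf{D}=\nabla$ discards the nonholonomic content that the theorem is about, and proves only the conformally flat case. The paper's argument instead runs: (i) $\mathbf{\Delta }_{\alpha \beta }=0$ kills the nonholonomically induced torsion, so by the last item of Theorem \ref{thb2} all D--formulas reduce term by term to $\nabla$--type formulas with respect to the fixed N--adapted frames; (ii) by Conclusion \ref{concla} one can prescribe $\mathbf{N}$ so that the Weyl d--spinor of $\mathbf{D}$ vanishes even when the Weyl spinor of $\nabla$ does not, so the globalization holds at the level of the d--connection and is carried back to $\nabla$ only through the distortion, never by equating the connections. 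One may reasonably question whether the paper's distinction between two connections sharing the same N--adapted coefficients is geometrically watertight, but as a comparison of proofs: your proposal establishes a strictly weaker (essentially vacuous beyond conformally flat spacetimes) reading of the statement, and your own "hard part" paragraph is exactly the point where the two arguments part ways.
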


\begin{proof}
It is a consequence of conditions of Theorem \ref{thb1} when $\mathbf{\Delta
}_{\alpha \beta }=0$ are equivalent to (\ref{lccond}), i.e. the
nonholonomically induced torsion (by ($\mathbf{g,N))}$ became zero. This is
compatible with Conclusion \ref{concla} when the conformal Weyl
d--tensor/d--spinor for $\mathbf{D}$ can be zero but similar values for $%
\nabla $ are not trivial. Such linear connections are different even in some
N--adapted frames they can be characterized by the same set of coefficients
(transformation laws under frame/coordinate changing are different).

$\square $
\end{proof}

We conclude that via nonholonomic transforms we can generate some compatible
global nonholonomic twistor equations even the standard twistor equations
are not compatible for general curved spacetimes.

\subsubsection{Global extensions of N--adapted twistor structures}

Our idea is to play with such nonholonomic distributions $N$ which allows us
to define spinors and twistors in very general forms.

\begin{claim}
For any (pseudo) Riemannian metric structure $\mathbf{g}$ on a manifold $V$
(or a fundamental Finsler function on tangent bundle $TM),$ we can prescribe a N--connection which allows us
to globalize N--adapted local twistor structures.
\end{claim}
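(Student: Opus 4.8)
The plan is to reduce the globalization problem to the vanishing of the Weyl d--spinor $\Psi_{\dot{A}\dot{B}\dot{C}\dot{T}}$ of the canonical d--connection, and then to secure that vanishing by an appropriate choice of N--connection supplied by Conclusion \ref{concla}. First I would fix the metric data: on $V$, the given $\mathbf{g}$; on $TM$, the Sasaki lift $\mathbf{\tilde{g}}$ (\ref{slm}) associated to the fundamental function $F$ together with the canonical $\mathbf{\tilde{N}}$ (\ref{cncl}). In either case Corollary \ref{corol1} provides the canonical d--connection $\mathbf{D}$ (resp. $\mathbf{\tilde{D}}$), metric compatible and with distortion $\mathbf{D}=\nabla+\mathbf{Q}$ (\ref{distrel}) completely determined by $(\mathbf{g},\mathbf{N})$.

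Next I would invoke Conclusion \ref{concla}: for the prescribed metric structure there exists a N--connection $\mathbf{N}$ (equivalently, a nonholonomic deformation of any initial splitting, realized through the vierbeins (\ref{vbt}) and a conformal reference d--metric as in Proposition \ref{prop1}) for which the Weyl d--tensor of $\mathbf{D}$ vanishes, $\mathbf{C}_{\tau\alpha\beta\gamma}=0$, even though the Weyl tensor $C_{\tau\alpha\beta\gamma}$ of $\nabla$ need not vanish. Translating this into spinor form through the last item of Theorem \ref{thb1}, where the Weyl conformal d--tensor splits into its anti--selfdual and selfdual parts built from $\Psi_{\dot{A}\dot{B}\dot{C}\dot{T}}$ and $\overline{\Psi}_{\dot{A}^{\prime}\dot{B}^{\prime}\dot{C}^{\prime}\dot{T}^{\prime}}$, the condition $\mathbf{C}_{\tau\alpha\beta\gamma}=0$ is equivalent to $\Psi_{\dot{A}\dot{B}\dot{C}\dot{T}}=0$.

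With $\Psi_{\dot{A}\dot{B}\dot{C}\dot{T}}=0$ the remaining steps are formal. The compatibility condition (\ref{nhtwcomp}), $\Psi_{\dot{T}\dot{A}\dot{B}\dot{C}}\mathbf{\omega}^{\dot{T}}=0$, holds identically, so by the compatibility Lemma the nonholonomic twistor equations (\ref{nhtweq}) are integrable and by the globalization Corollary the local d--twistor solutions (\ref{solparaldtr}) extend consistently over all of $\mathbf{V}$. Equivalently, inspection of the N--adapted coefficients of the local d--twistor curvature $\mathbf{K}_{\ \dot{\alpha}\mu\nu}^{\dot{\beta}}$ in the Definition--Lemma shows that every entry is a covariant expression in $\Psi$, $\overline{\Psi}$ and their d--derivatives; hence $\Psi=0$ forces $\mathbf{K}_{\ \dot{\alpha}\mu\nu}^{\dot{\beta}}=0$, the d--twistor transport (\ref{transpdtw}) becomes flat, and the local d--twistor bundle acquires a globally defined parallel frame, i.e. a global N--adapted twistor structure. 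For the Finsler--Cartan case the argument is repeated on $\widetilde{TM}$ with the $4+4$ splitting: Conclusion \ref{concla} applies, up to the frame transforms (\ref{ftransf}), to the three curvature d--spinors $\{\tilde{\Psi}_{LIJK},\tilde{\Psi}_{DIJK},\tilde{\Psi}_{DABC}\}$, yielding the three compatibility conditions (\ref{fctwcomp}) and hence global $h$-- and $v$--twistor structures.

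The main obstacle is the existence step, Conclusion \ref{concla}: one must guarantee that the system $\mathbf{C}_{\tau\alpha\beta\gamma}[\mathbf{g},\mathbf{N}]=0$, viewed as equations on the N--connection coefficients $N_i^a$, is always solvable for the prescribed metric. This is nontrivial because $\mathbf{g}_{\alpha\beta}$ in coordinate form (\ref{ansatz}) and the vierbeins (\ref{vbt}) depend nonlinearly on $N_i^a$; the resolution rests on the freedom to start from a conformally flat reference d--metric $\varpi^2\eta_{\alpha\beta}$, for which $\mathbf{C}=0$ by Proposition \ref{prop1}, and then to absorb the prescribed off--diagonal metric into N--coefficients by nonholonomic frame deformations. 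I would emphasize that this globalizes the N--adapted twistor structure attached to $\mathbf{D}$, and does \emph{not} require the torsionless condition $\mathbf{\Delta}_{\alpha\beta}=0$ (\ref{lccond}) that would force $\mathbf{D}=\nabla$ and reintroduce the generally nonzero Weyl spinor of the Levi--Civita connection.

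$\square$
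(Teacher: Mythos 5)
Your proposal is correct and takes essentially the same approach as the paper: both arguments prescribe the N--connection by nonholonomically deforming a conformally flat reference d--metric $\varpi^{2}(u)\mathbf{\eta}_{\alpha \beta }$ (Proposition \ref{prop1}, realized through the vierbeins (\ref{vbt}) and frame transforms absorbing the off--diagonal metric coefficients into $N_{i}^{a}$) so that the canonical d--connection has vanishing Weyl d--tensor, and then globalize the N--adapted local twistor structures via the resulting vanishing of the Weyl d--spinor. The only difference is expository: you invoke Conclusion \ref{concla} and make explicit the spinor translation, the compatibility condition (\ref{nhtwcomp}), the globalization Corollary and the vanishing of $\mathbf{K}_{\ \dot{\alpha}\mu \nu }^{\dot{\beta}}$, steps which the paper's proof leaves implicit after reproducing the deformation construction, and you correctly note that the condition $\mathbf{\Delta }_{\alpha \beta }=0$ is not needed here.
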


\begin{proof}
Let us fix a d--metric (\ref{m1}) with coefficients $\mathbf{g}_{\alpha
^{\prime }\beta ^{\prime }}:=\varpi ^{2}(u)\mathbf{\eta }_{\alpha \beta }$
with $\mathbf{\eta }_{\alpha \beta }$ being diagonal constants of any
necessary signature $(\pm 1,\pm ,...,\pm ),$ with respect to some $\mathbf{e}%
^{\beta ^{\prime }}=(e^{j^{\prime }}=dx^{j^{\prime }},\mathbf{e}^{b^{\prime
}}=dy^{b^{\prime }}+N_{i^{\prime }}^{b^{\prime }}dx^{i^{\prime }}).$ For
such a d--metric and N--adapted co--bases, we can verify that $\mathbf{C}%
_{\tau ^{\prime }\alpha ^{\prime }\beta ^{\prime }\gamma ^{\prime }}=0$ as
a consequence of Proposition \ref{prop1}. We can redefine data ( (\ref{m1}),
$\mathbf{g}_{\alpha ^{\prime }\beta ^{\prime }})$ in a coordinate form (\ref%
{metr}) \ with coefficients (\ref{ansatz}) (with primed indices, $g_{%
\underline{\alpha }^{\prime }\underline{\beta }^{\prime }}$). Then
considering arbitrary frame transforms $e_{\ \underline{\alpha }}^{%
\underline{\alpha }^{\prime }}$ we can compute $g_{\underline{\alpha }%
\underline{\beta }}=e_{\ \underline{\alpha }}^{\underline{\alpha }^{\prime
}}e_{\ \underline{\beta }}^{\underline{\beta }^{\prime }}g_{\underline{%
\alpha }^{\prime }\underline{\beta }^{\prime }}.$ Finally, we can re--define
for a nonholonomic $2+2,$ or $4+4,$ splitting certain data $\left( \mathbf{g}%
_{\alpha \beta },N_{i}^{a}\right) ,$ for which, in general, $\mathbf{C}%
_{\tau \alpha \beta \gamma }\neq 0,$ and the corresponding to $\nabla
,C_{\tau \alpha \beta \gamma }\neq 0.$ Such construction with nonholonomic
deformations are possible because vierbeins (\ref{vbt}) may depend on some $N
$--coefficients of a generic off--diagonal form of "primary" metric. The
transformation laws of d--objects on nonholonomic manifolds with
N--connection are different from those on usual manifolds without
N--connection splitting (\ref{whitney}).

$\square $
\end{proof}

Nonholonomic twistor spaces can be associated to any metric structure if a necessary
type $h$- $v$--splitting is defined by corresponding N--connections. One
of the important tasks is to formulate such conditions when certain
nonholonomic deformations can be used for encoding exact solutions of
Einstein equations in nonholonomic twistor structures and, inversely, to
formulate nonholonomic twistor transforms generating exact solutions in
general relativity and modifications. We shall provide such constructions in our
further works.

\end{document}